 \newcommand{\bs}{\bigskip}
 \newcommand{\n}{\noindent}
 \newcommand{\s}{\smallskip}
 \newcommand{\hs}[1]{\hspace*{ #1 mm}}
 \newcommand{\vs}[1]{\vspace*{ #1 mm}}
 \newcommand{\setempty}{\mathrm{\O}}
 \newcommand{\nat}{\mathbb{N}}
 \newcommand{\integer}{\mathbb{Z}}
 \newcommand{\etalc}{\textrm{et al.}}
 \newcommand{\CC}{{\cal C}}
 \newcommand{\FF}{{\cal F}}
 \newcommand{\LL}{{\cal L}}
 \newcommand{\MM}{{\cal M}}
 \newcommand{\PP}{{\cal P}}
 \newcommand{\dl}{\mathrm{L}}
 \newcommand{\nl}{\mathrm{NL}}
 \newcommand{\poly}{\mathrm{poly}}
\theoremstyle{plain}
 \newtheorem{theorem}{Theorem}[section]
 \newtheorem{lemma}[theorem]{Lemma}
 \newtheorem{proposition}[theorem]{Proposition}
 \newtheorem{corollary}[theorem]{Corollary}
  \newtheorem{definition}[theorem]{Definition}}
 \newtheorem{claim}{Claim}
 \newenvironment{proof}{\par \noindent
            {\bf Proof. \hs{2}}}{\hfill$\Box$ \vspace*{3mm}}
 \newenvironment{proofof}[1]{\vspace*{5mm} \par \noindent
         {\bf Proof of #1.\hs{2}}}{\hfill$\Box$ \vspace*{3mm}}
 \newcommand{\ceilings}[1]{\lceil #1 \rceil}
 \newcommand{\pair}[1]{\langle #1 \rangle}
\newcommand{\ignore}[1]{}
\newcommand{\cent}{|\!\! \mathrm{c}}
\newcommand{\dollar}{\$}
 \newcommand{\psublin}{\mathrm{PsubLIN}}
 \newcommand{\dstcon}{\mathrm{DSTCON}}
 \newcommand{\ptimespace}{\mathrm{PTIME,}\mathrm{SPACE}}
 \newcommand{\timespace}{\mathrm{TIME,}\mathrm{SPACE}}
 \newcommand{\unary}{\mathrm{unary}}
\begin{document}

\pagestyle{plain}
\setcounter{page}{1}

\begin{center}
{\Large {\bf State Complexity Characterizations of Parameterized Degree-Bounded Graph Connectivity, Sub-Linear Space Computation, and the Linear Space Hypothesis}}\footnote{This paper extends and corrects a preliminary report that has appeared in the Proceedings of the 20th International Conference on Descriptional Complexity of Formal Systems (DCFS  2018), Halifax, Canada, July 25--27, 2018. Lecture Notes in Computer Science, Springer, vol. 10952, pp. 237--249, 2018.}
\bs\s\\

{\sc Tomoyuki Yamakami}\footnote{Present Affiliation: Faculty of Engineering, University of Fukui, 3-9-1 Bunkyo, Fukui 910-8507, Japan}
\bs\\
\end{center}

\begin{quote}
\n{\bf Abstract.}
The linear space hypothesis is a practical working hypothesis,  which originally states the insolvability of a restricted 2CNF Boolean formula satisfiability problem parameterized by the number of Boolean variables. From this hypothesis, it naturally follows that the degree-3 directed graph connectivity problem (3DSTCON) parameterized by the number of vertices in a given graph cannot belong to PsubLIN, composed of all parameterized decision problems computable by polynomial-time, sub-linear-space deterministic Turing machines.
This hypothesis immediately implies L$\neq$NL and it was used as a solid foundation to obtain new lower bounds on the computational complexity of various NL search and NL optimization problems.
The state complexity of transformation refers to the cost of converting one type of finite automata to another type,
where the cost is measured in terms of the increase of the number of inner states of the converted automata from that of the original automata.
We relate the linear space hypothesis to the state complexity of transforming restricted 2-way nondeterministic finite automata to computationally equivalent 2-way alternating finite automata having narrow computation graphs.
For this purpose, we present state complexity characterizations of 3DSTCON and PsubLIN. We further characterize a nonuniform version of the linear space hypothesis in terms of the state complexity of transformation.

\s
{Keywords:} State Complexity, Alternating Finite Automata, Sub-Linear-Space Computability, Directed Graph Connectivity Problem, Parameterized Decision Problems, Polynomial-Size Advice, Linear Space Hypothesis
\end{quote}

\sloppy
\section{Prologue}\label{sec:introduction}

We provide the background of parameterized decision problems, the linear space hypothesis, and nonuniform state complexity. We then give an overview of major results of this work.

\subsection{Parameterized Problems
and the Linear Space Hypothesis}\label{sec:parameter}

The \emph{nondeterministic logarithmic-space} complexity class NL has been discussed since early days of computational complexity theory. Typical NL decision problems include the \emph{2CNF Boolean formula satisfiability problem} (2SAT) as well as the \emph{directed $s$-$t$ connectivity problem}\footnote{This problem is also known as the graph accessibility problem as well as the graph reachability problem.}  ($\dstcon$) of deciding the existence of a path from a vertex $s$ to another vertex $t$ in a given directed graph $G$. These problems are known to be  $\nl$-complete under log-space many-one reductions. The $\nl$-completeness is so  robust that even if we restrict our interest within graphs whose vertices are limited to be of degree at most $3$, the corresponding decision problem,  $3\dstcon$, remains $\nl$-complete.
Similarly, although we force 2CNF Boolean formulas in 2SAT to take only   variables, each of which appears at most 3 times in the form of literals, the obtained decision problem, $2\mathrm{SAT}_3$, is still an $\nl$-complete problem.

When we discuss the computational complexity of given problems, we in practice tend to be more concerned with various \emph{parameterizations} of the problems. We treat the size of specific ``input objects'' given to a problem as a ``practical'' \emph{size parameter} $n$ and use it to measure how much resource is needed for an algorithm to solve this problem.
There are, in fact, multiple ways to choose such a size parameter for each given problem. For example, given an instance $x=\pair{G,s,t}$ to $3\dstcon$, where $G$ is a directed graph (or a digraph) and $s,t$ are vertices, we often use as a size parameter the number of vertices in $G$ or the number of edges in $G$.
To emphasize the choice of a particular size parameter $m:\Sigma^*\to\nat$ for a decision problem $L$ over an alphabet $\Sigma$, it is convenient for us to express the problem as $(L,m)$, which gives rise to a \emph{parameterized decision problem}, where $\nat$ is the set of natural numbers.
Since we deal only with such parameterized problems in the rest of this paper, we occasionally drop the adjective ``parameterized'' as long as it is clear from the context.

Any instance $x=\pair{G,s,t}$ to $3\dstcon$ is usually parameterized by the numbers of vertices and of edges in the graph $G$.
It was shown in \cite{BBRS98} that $\dstcon$ with $n$ vertices and $m$ edges can be solved in $O(m+n)$ steps using only $n^{1-c/\sqrt{\log{n}}}$ space for a suitable constant $c>0$. However, it is unknown whether we can reduce this space usage down to $n^{\varepsilon}\, polylog(m+n)$ for a certain fixed constant $\varepsilon\in[0,1)$. Such a bound is informally called ``sub-linear'' \emph{in a strong sense}.
It has been conjectured that, for every constant $\varepsilon\in[0,1)$, no polynomial-time $O(n^{\varepsilon})$-space algorithm solves $\dstcon$ with $n$ vertices (see references in, e.g., \cite{ACL+14,CT15}).
For convenience, we denote by $\psublin$ the collection of all parameterized decision problems $(L,m)$ solvable deterministically in time polynomial in $|x|$ using space at most $m(x)^{\varepsilon}\ell(|x|)$ for certain constants $\varepsilon\in[0,1)$ and certain polylogarithmic (or polylog, for short) functions $\ell$ \cite{Yam17a}.

The \emph{linear space hypothesis} (LSH), proposed in \cite{Yam17a}, is a practical  working hypothesis, which originally asserts the insolvability of  $2\mathrm{SAT}_3$, together with the size parameter $m_{vbl}(\phi)$ indicating the number of variables in each given Boolean formula $\phi$, in polynomial time using sub-linear  space (namely, $(2\mathrm{SAT}_3,m_{vbl})\notin \psublin$).
As noted in \cite{Yam17a}, it is unlikely that $2\mathrm{SAT}$ replaces $2\mathrm{SAT}_3$ in the above definition of LSH. From this hypothesis, nevertheless, we immediately obtain the separation $\dl\neq\nl$, which many researchers believe to be true.
It was also shown in \cite{Yam17a} that, in the definition of LSH,  $(2\mathrm{SAT}_3,m_{vbl})$ can be replaced by $(3\dstcon,m_{ver})$, where $m_{ver}(\pair{G,s,t})$ refers to the number of vertices in $G$.
In the literature \cite{Yam17a,Yam17b}, LSH has acted as a reasonable foundation to obtain better lower bounds on the space complexity of several $\nl$-search and $\nl$-optimization problems.
To find more applications of this hypothesis, it is desirable to translate the hypothesis into other fields of interest.
In this work, we look for a statement, in automata theory, which is logically equivalent to LSH, in hope that we would find more useful applications of LSH in this theory.

\subsection{Families of Finite Automata and Families of Languages}\label{sec:families-languages}

The purpose of this work is to look for an automata-theoretical statement that is logically equivalent to the linear space hypothesis; in particular, we seek a new characterization of the relationship between $3\dstcon$ and $\psublin$ in terms of the state complexity of transforming a certain type of finite automata to another type with \emph{no direct reference} to $3\dstcon$ or $\psublin$.

It is often cited from \cite{BL77} (re-proven in \cite[Section 3]{Kap14}) that, if $\dl=\nl$, then every $n$-state \emph{two-way nondeterministic finite automaton}  (or 2nfa, for short) can be converted into an $n^{O(1)}$-state \emph{two-way deterministic finite automaton} (or 2dfa) that agrees with it on all inputs of length at most $n^{O(1)}$. Conventionally, we call by \emph{unary finite automata} automata working only on unary inputs (i.e, inputs over a one-letter alphabet).
Geffert and Pighizzini \cite{GP11} strengthened the aforementioned result by proving that the assumption of $\dl=\nl$ leads to the following: for any $n$-state unary 2nfa, there is a unary 2dfa of at most $n^{O(1)}$-states agreeing with it on all strings of length at most $n$.
Within a few years, Kapoutsis \cite{Kap14} gave a similar characterization using $\dl/\poly$, a nonuniform version of $\dl$, which states: $\nl\subseteq \dl/\poly$ if and only if (iff) there is a polynomial $p$ for which any $n$-state 2nfa has a 2dfa of at most $p(n)$ states agreeing with the 2nfa on strings of length at most $n$. Another incomparable characterization was given by Kapoutsis and Pighizzini \cite{KP15}: $\nl\subseteq\dl/\poly$ iff there is a polynomial $p$ satisfying that any $n$-state unary 2nfa has an equivalent unary 2dfa with a number of states at most $p(n)$.
It can be expected to find a similar automata characterization for LSH.

Earlier, Sakoda and Sipser \cite{SS78} laid out a complexity-theoretical framework to discuss state complexity of transformation by giving formal definitions to \emph{nonuniform state complexity classes} (such as $2\mathrm{D}$, $2\mathrm{N}/\poly$, $2\mathrm{N}/\unary$), each of which is composed of nonuniform families of ``promise decision problems'' (or ``partial decision problems'') recognized by finite automata of specified types and input sizes. Such complexity-theoretical treatments of families of finite automata were also considered by Kapoutsis \cite{Kap12,Kap14} and Kapoutsis and Pighizzini \cite{KP15} to establish relationships between nonuniform state complexity classes and nonuniform space-bounded complexity classes.
For those nonuniform state complexity classes, it was proven in \cite{Kap14,KP15} that $2\mathrm{N}/\poly \subseteq 2\mathrm{D}$ iff $\nl\subseteq \dl/\poly$ iff $2\mathrm{N}/\mathrm{unary}\subseteq 2\mathrm{D}$.

The first step of this work is our discovery of the fact that a family of promise decision problems is \emph{more closely related} to a parameterized decision problem than any standard decision problem (whose complexity is measured in terms of the binary encoding size of inputs).
Given a parameterized decision problem $(L,m)$, we naturally identify it with a family $\{(L_n,\overline{L}_n)\}_{n\in\nat}$ of promise decision problems defined by $L_n=\{x\in L \mid m(x)=n\}$ and $\overline{L}_n=\{x\in \overline{L} \mid m(x)=n\}$ for each index $n\in\nat$.
On the contrary, given a family $\{(A_n,B_n)\}_{n\in\nat}$ of promise decision problems over an alphabet $\Sigma$ satisfying  $\Sigma^*=\bigcup_{n\in\nat}(A_n\cup B_n)$, if we define $L = \bigcup_{n\in\nat}A_n$ and set $m(x)=n$ for each instance $x\in A_n\cup B_n$, then $(L,m)$ is a parameterized decision problem.
This identity eventually leads us to establish a new characterization of LSH, which will be discussed in Section \ref{sec:main-result}.

\subsection{Main Contributions}\label{sec:main-result}

As the main contribution of this work, firstly we provide two characterizations of $3\dstcon$ and $\psublin$ in terms of 2nfa's and  \emph{two-way alternating finite automata} (or 2afa's, for short), each of which alternatingly takes universal states and existential states, producing alternating $\forall$-levels and $\exists$-levels in its \emph{rooted computation tree} made up of (surface) configurations. Notice that 2nfa's are a special case of 2afa's. Secondly, we give a characterization of LSH in terms of the state complexity of transforming a restricted form of 2nfa to another restricted form of 2afa's.
The significance of our characterization includes the fact that LSH can be expressed completely by the state complexity of finite automata of certain types \emph{with no clear reference} to $(2\mathrm{SAT}_3,m_{vbl})$,   $(3\dstcon,m_{ver})$, or even $\psublin$; therefore, this characterization may help us apply LSH to a wider range of $\nl$-complete problems, which have little resemblance to $2\mathrm{SAT}_3$ and $3\dstcon$.

To handle an instance $(G,s,t)$ given to $3\dstcon$ on a Turing machine, we intend to use a ``reasonable'' encoding of $(G,s,t)$, where such an encoding must contain information on all vertices and edges of $G$ as well as the designated vertices $s$ and $t$ using $O(n\log{n})$ bits, where $n$ is the number of all vertices, and there must be an efficient way to retrieve all the information from this encoding.

To describe our result precisely, we further need to explain our terminology. A \emph{simple 2nfa} is a 2nfa having a ``circular'' input tape\footnote{A 2nfa  with a tape head that sweeps a circular tape is called ``rotating'' in   \cite{Kap09,KP15}.}  (in which both endmarkers are located next to each other) whose tape head ``sweeps'' the tape (i.e. it moves only to the right), and making nondeterministic choices only at the right endmarker.\footnote{This requirement is known in \cite{GGP14,KP15} as ``outer nondeterminism.'' In Section \ref{sec:model-two-way}, we will call it ``end-branching'' for both 2nfa's and 2afa's.}
For a positive integer $c$, a \emph{$c$-branching 2nfa} makes only at most $c$ nondeterministic choices at every step and a family of 2nfa's is called \emph{$c$-branching} if every 2nfa in this family is $c$-branching.
A computation of an 2afa is normally expressed as a tree whose nodes are labeled by (surface) configurations; however, it is more convenient and more concise to view the computation as a ``graph,'' in which the same configurations at the same depth are all treated as the same vertex. Such a graph is particularly called a \emph{computation graph}. A \emph{$c$-narrow 2afa} is a 2afa whose computation graphs have width (i.e., the total number of distinct vertices at a given level)
bounded by $c$ at every $\forall$-level. Notice that a $c$-branching 2nfa is, in general, not $c$-narrow.

For convenience, we say that a finite automaton $M_1$ is \emph{(computationally) equivalent} to another finite automaton $M_2$ over the same input alphabet if $M_1$ agrees with $M_2$ on all inputs. Here, we use a straightforward binary encoding $\pair{M}$ of an $n$-state finite automaton $M$ using  $O(n\log{n})$ bits.
A family $\{M_n\}_{n\in\nat}$ is said to be \emph{$\dl$-uniform} if a deterministic Turing machine (or a DTM) produces from $1^n$ an encoding $\pair{M_n}$ of finite automaton $M_n$ on its write-only output tape using space logarithmic in $n$.

\begin{theorem}\label{3DSTCON-char-uniform}
The following three statements are logically equivalent.
\renewcommand{\labelitemi}{$\circ$}
\begin{enumerate}\vs{-2}
  \setlength{\topsep}{-2mm}%
  \setlength{\itemsep}{1mm}%
  \setlength{\parskip}{0cm}%

\item The linear space hypothesis (LSH) fails.

\item For any constant $c>0$, there exists a constant  $\varepsilon\in[0,1)$ such that, for any constant $e>0$, every $\dl$-uniform family of $c$-branching  simple 2nfa's with at most $en\log{n}+e$ states can be converted into another $\dl$-uniform family of equivalent $O(n^{\varepsilon})$-narrow 2afa's with $n^{O(1)}$ states.

\item For any constant $c>0$, there exists a constant $\varepsilon\in[0,1)$ and a log-space computable function that, on every input of an encoding of a $c$-branching simple $n$-state 2nfa,
    produces another encoding of an equivalent $O(n^{\varepsilon})$-narrow 2afa with $n^{O(1)}$ states.
\end{enumerate}\vs{-2}
Furthermore, even if we fix $c$ to $3$, the above three statements are logically equivalent.
\end{theorem}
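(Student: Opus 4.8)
The plan is to prove the cycle of implications $\text{(1)}\THEN\text{(3)}\THEN\text{(2)}\THEN\text{(1)}$, using as the bridge the equivalence from \cite{Yam17a} that LSH fails exactly when $(3\dstcon,m_{ver})\in\psublin$. Everything then reduces to two ``state-complexity characterizations'' that I would set up first: one matching $c$-branching simple 2nfa's with degree-bounded directed reachability, and one matching $O(n^{\varepsilon})$-narrow 2afa's having $n^{O(1)}$ states with polynomial-time, sub-linear-space deterministic computation.

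For the first characterization the decisive observation is that a \emph{simple} 2nfa (circular tape, rightward sweeps, nondeterminism only at the right endmarker) admits a \emph{sweep contraction}: an entire deterministic left-to-right sweep from an endmarker state $q$ to an endmarker state $q'$ collapses to a single macro-edge $q\to q'$, so that $M$'s acceptance on $w$ is equivalent to $s$-$t$ reachability in a macro-graph $H_{M,w}$ whose vertices are the $O(n)$ endmarker states and whose out-degree is at most $c$. The crucial point is that $H_{M,w}$ has size governed by $n$ alone, \emph{independent of} $|w|$, and for $c=3$ it is exactly a degree-$3$ instance of $3\dstcon$. Conversely, an $O(n\log n)$-bit encoding of a degree-$\le 3$ digraph on $n$ vertices is walked by a simple 2nfa that holds the current vertex name in its state and branches at the endmarker to an adjacent vertex; the budget $en\log n+e$ is precisely what is needed to manipulate $\log n$-bit vertex names within one sweep. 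Both directions are purely syntactic, hence log-space.

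For the second characterization I would argue both directions of the width/space correspondence. Evaluating an $O(n^{\varepsilon})$-narrow 2afa with $n^{O(1)}$ states deterministically, by recursion over its computation graph, keeps at most $O(n^{\varepsilon})$ configurations alive at any $\forall$-level, each named by $O(\log n)$ bits, so it runs in polynomial time and space $O(n^{\varepsilon}\,\mathrm{polylog}(n))$, which is strongly sub-linear in the parameter. For the reverse direction the crucial move is that a deterministic machine using space $n^{\varepsilon}\ell$ on an $O(n)$-vertex graph is simulated \emph{not} by storing its whole configuration in a state (which would cost $2^{\Theta(n^{\varepsilon})}$ states) but by \emph{spreading} its $O(n^{\varepsilon}\,\mathrm{polylog})$ worktape cells over that many parallel universal branches: the width of each $\forall$-level then equals the space, namely $O(n^{\varepsilon})$, while each branch carries only a cell index plus content, keeping the total number of states $n^{O(1)}$, and one step is checked by $\exists$-guessing the next local window and $\forall$-verifying each cell. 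Assembling the pieces yields the cycle: from the failure of LSH one gets a fixed polynomial-time, $n^{\varepsilon}\ell$-space DTM $D$ for $3\dstcon$, and the map $\pair{M}\mapsto\pair{N}$ sending $M$ to the 2afa that runs $D$ on $H_{M,w}$ (resolving edge queries by in-place sweeps of $w$) is a syntactic log-space transducer outputting an equivalent $O(n^{\varepsilon})$-narrow 2afa with $n^{O(1)}$ states, which is statement (3); then $\text{(3)}\THEN\text{(2)}$ is immediate by composing a $\dl$-uniform family generator with this transducer; and $\text{(2)}\THEN\text{(1)}$ follows by applying the guaranteed family-transformation to the ($3$-branching) reachability family for $3\dstcon$ and evaluating the resulting narrow 2afa's in $\psublin$, placing $(3\dstcon,m_{ver})$ in $\psublin$.

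The main obstacle is precisely keeping the width bound $O(n^{\varepsilon})$ \emph{uniform in the input length} while simultaneously holding the state count to $n^{O(1)}$: the sweep contraction of the first characterization is what makes the relevant graph, and hence the width, depend only on $n$ and not on $|w|$, and the worktape-spreading of the second characterization is what reconciles a sub-linear \emph{width} with a polynomial \emph{state set}. I would also verify carefully that every reduction is genuinely log-space, so that $\dl$-uniformity is preserved throughout (yielding (2)) and a single transducer suffices (yielding (3)). Finally, the $c=3$ refinement needs no new work: $\text{(1)}\THEN\text{(3)}$ produces the transducer for every $c$, in particular $c=3$, while in $\text{(2)}\THEN\text{(1)}$ the reachability family used is already $3$-branching because $3\dstcon$ is the degree-$3$ problem, so the hypothesis with $c$ fixed to $3$ is all that the argument consumes.
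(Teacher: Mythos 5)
Your proposal follows essentially the same route as the paper's own proof: the same cycle $(1)\Rightarrow(3)\Rightarrow(2)\Rightarrow(1)$, the same bridge identifying the failure of LSH with $(3\dstcon,m_{ver})\in\psublin$, and the same two workhorse characterizations. Your ``sweep contraction'' of a simple 2nfa into a graph whose size depends only on $n$ is the content of Proposition \ref{2nfa-to-subgraphs}, your converse (a 2nfa that walks a degree-3 graph encoding, branching at the endmarker) is Proposition \ref{construct-3DSTCON}, and your width/space correspondence in both directions (spreading the worktape over $\forall$-branches with $\exists$-guessing of a local window and $\forall$-verification per cell; level-by-level deterministic evaluation of a narrow 2afa) is exactly Proposition \ref{ptimespace-character}. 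The only structural difference is that the paper first proves a more general statement (Theorem \ref{uniform-3DSTCON-general}, for arbitrary logarithmically saturated families of space bounds) and then instantiates $\ell(n)=n^{\varepsilon}$, whereas you argue directly with $n^{\varepsilon}$; nothing is lost for this particular theorem by doing so.

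One step of your sketch is genuinely too quick: the claim that the macro-graph $H_{M,w}$ ``is exactly a degree-3 instance of $3\dstcon$'' when $c=3$. It is not. The problem $3\dstcon$, as defined here, bounds \emph{indegree plus outdegree} by $3$, and in $H_{M,w}$ a single endmarker state can be the sweep-target of many distinct endmarker states, so its in-degree can be as large as the number of states; likewise, for general $c$ the out-degree $c$ must itself be brought within the degree budget. The paper spends real effort on precisely this point in Proposition \ref{2nfa-to-subgraphs}: it splices binary fan-out gadgets (the subgraphs $E_j$) into the graph, which is why the target graph lives in $K_{d(n)}$ with $d(n)=2^e(n+2)+2^{e+1}-1$ and $e=\ceilings{\log(c+1)}$ rather than on the raw state set. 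Without this vertex-splitting step (applied to fan-in as well as fan-out), the map $\pair{M}\#x\mapsto\pair{G_x}$ does not produce valid $3\dstcon$ instances, and your implication $(1)\Rightarrow(3)$ breaks at the point where the DTM $D$ for $3\dstcon$ is run on $H_{M,w}$. The repair is standard, costs only a constant-factor blow-up in the vertex count for constant $c$, and does not disturb your state-count or narrowness bookkeeping, but it has to be in the proof.
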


Our proof of Theorem \ref{3DSTCON-char-uniform} is based on two explicit characterizations, given in Section \ref{sec:basic-characterize}, of $\psublin$ and $(3\dstcon,m_{ver})$ in terms of state complexities of restricted 2nfa's and of restricted 2afa's, respectively. Concerning Statement (2) of Theorem \ref{3DSTCON-char-uniform}, it seems difficult to construct equivalent  $O(n^{\varepsilon})$-narrow 2afa's from any given simple 2nfa's but it is possible to achieve the $O(n^{1-c/\sqrt{\log{n}}})$-narrowness of 2afa's
for a certain constant $c>0$.

\begin{proposition}\label{Barnes-translate}
Every $\dl$-uniform family of constant-branching $O(n\log{n})$-state simple 2nfa's can be converted into another $\dl$-uniform family of equivalent $O(n^{1-c/\sqrt{\log{n}}})$-narrow 2afa's with $n^{O(1)}$ states for a certain constant $c>0$.
\end{proposition}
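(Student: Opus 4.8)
The plan is to convert the acceptance problem of a simple 2nfa into a directed reachability question on a small graph, to solve that question in sub-linear space by the algorithm of \cite{BBRS98}, and then to transcribe that space-bounded computation into an equivalent alternating automaton whose universal-level width equals the space consumed; reading off this width yields the desired narrowness.

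First I would attach to each constant-branching (say $b$-branching) simple $n$-state 2nfa $M_n$ an ``endmarker transition graph'' $G_{M_n}$. Because $M_n$ is simple, its head only sweeps the circular tape and it makes nondeterministic choices solely at the right endmarker; hence, on a fixed input $x$, the entire behaviour of $M_n$ between two consecutive visits to the right endmarker is a single deterministic pass. Taking as vertices the inner states that can occur at the right endmarker and as edges the resulting ``next endmarker state'' relation, I obtain a directed graph on $N=\order{n\log n}$ vertices whose out-degree is at most $b$ and whose edge relation on $x$ is recovered by one deterministic sweep of $x$. Acceptance of $x$ by $M_n$ is then exactly the reachability of an accepting endmarker state from the initial one in $G_{M_n}$, an instance of directed reachability (a degree-bounded variant of $\dstcon$) on $N=\order{n\log n}$ vertices. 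Invoking \cite{BBRS98}, which decides directed reachability on $N$ vertices in polynomial time using space $N^{1-c_0/\sqrt{\log N}}$ for a suitable $c_0>0$, and recomputing each edge on the fly by a single sweep (which costs only logarithmically many additional cells), I obtain a deterministic polynomial-time procedure that decides acceptance of $M_n$ in space $\order{n^{1-c/\sqrt{\log n}}}$ for a suitable $c>0$, the substitution $N=\order{n\log n}$ only altering the constant in the exponent.

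The core step is transcribing this procedure into an equivalent 2afa $M'_n$ with $n^{\order{1}}$ inner states. The recursion of \cite{BBRS98} maintains at each stage an ``active'' set of vertices of size bounded by its space and verifies reachability through these vertices; I would realize each guess of an intermediate (checkpoint) vertex by an existential state of $M'_n$ and each conjunctive verification over the current active set by a universal branching, letting the finite-state control recompute the needed edges of $G_{M_n}$ by sweeping $x$ with its two-way head. Under this transcription the number of distinct surface configurations that remain open at any $\forall$-level equals the size of the active set, namely $\order{n^{1-c/\sqrt{\log n}}}$, so $M'_n$ is $\order{n^{1-c/\sqrt{\log n}}}$-narrow; this is exactly the space-versus-narrowness correspondence furnished by the characterization of $\psublin$ in Section \ref{sec:basic-characterize}. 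Uniformity is inherited for free: from $\pair{M_n}$ a deterministic log-space machine can output $\pair{M'_n}$, since both the passage to $G_{M_n}$ and the hard-wiring of the uniform \cite{BBRS98} recursion into transition rules are log-space computable and the input family is $\dl$-uniform.

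The hard part will be the faithful control of the $\forall$-level width, so that it genuinely matches the \cite{BBRS98} space bound instead of swelling to the full configuration count. Since $M'_n$ has no work tape, one must verify that it can repeatedly recompute and re-check the compressed ``active set'' data used by \cite{BBRS98} while keeping at each universal branching only $\order{n^{1-c/\sqrt{\log n}}}$ configurations simultaneously open; confirming this accounting, together with the routine bookkeeping that keeps the inner-state count polynomial in $n$ and preserves two-sided agreement with $M_n$ on every input, constitutes the main technical effort.
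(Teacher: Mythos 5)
Your proposal is correct and follows essentially the same route as the paper: the paper's own proof merely instantiates Theorem \ref{uniform-3DSTCON-general}(2) with the logarithmically saturated family $\FF=\{n^{1-c/\sqrt{\log{n}}} : c>0\}$ and the \cite{BBRS98} bound on $(\dstcon,m_{ver})$, and the ingredients of that theorem's proof are exactly the ones you reconstruct — your endmarker-transition graph is Proposition \ref{2nfa-to-subgraphs}, your space-to-narrowness transcription with on-the-fly edge recomputation is Proposition \ref{ptimespace-character}(1) together with the composition in the [1 $\Rightarrow$ 3] step, and your substitution $N=O(n\log{n})$ is the paper's logarithmic-saturation check. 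The only packaging difference is that you propose hand-rolling the \cite{BBRS98} recursion into alternation (the width-accounting worry you flag), whereas the paper sidesteps this by applying the generic DTM-to-narrow-2afa simulation to the length-$O(n\log{n})$ graph instance, so the $\forall$-level width is bounded by the space bound automatically.
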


In addition to the original linear space hypothesis, it is possible to discuss its  \emph{nonuniform version}, which asserts that $(2\mathrm{SAT}_3,m_{ver})$ does not belong to  a nonuniform version of $\psublin$, succinctly denoted by $\psublin/\poly$. The nonuniform setting can provide a more concise characterization of LSH than Theorem \ref{3DSTCON-char-uniform} does.

The nonuniform state complexity class $2\mathrm{qlinN}$ consists of all families $\{(L_n,\overline{L}_n)\}_{n\in\nat}$ of promise decision problems, each $(L_n,\overline{L}_n)$ of which is recognized by a certain $c$-branching simple $O(n\log^{k}{n})$-state 2nfa on all inputs for appropriate constants $c,k\in\nat-\{0\}$ (where ``qlin'' indicates ``quasi-linear'').
Moreover, $2\mathrm{A}_{narrow(f(n))}$ is composed of families $\{(L_n,\overline{L}_n)\}_{n\in\nat}$ of promise decision problems recognized by $O(f(n))$-narrow 2afa's using at most $p(n)$ states.

\begin{theorem}\label{3DSTCON-char-nonunif}
The following three statements are logically equivalent.
\renewcommand{\labelitemi}{$\circ$}
\begin{enumerate}\vs{-2}
  \setlength{\topsep}{-2mm}%
  \setlength{\itemsep}{1mm}%
  \setlength{\parskip}{0cm}%

\item The nonuniform linear space hypothesis fails.

\item For any constant $c>0$, there exists a constant $\varepsilon\in[0,1)$ such that every $c$-branching simple $n$-state 2nfa can be converted into  an equivalent $O(n^{\varepsilon})$-narrow 2afa with at most $n^{O(1)}$ states.

\item $2\mathrm{qlinN} \subseteq \bigcup_{\varepsilon\in[0,1)} 2\mathrm{A}_{narrow(n^{\varepsilon})}$.
\end{enumerate}
\end{theorem}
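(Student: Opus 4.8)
The plan is to route all three statements through the two explicit characterizations furnished in Section~\ref{sec:basic-characterize}: the characterization of $\psublin/\poly$ by families of narrow 2afa's, and the characterization of $(3\dstcon,m_{ver})$ by $c$-branching simple 2nfa's together with the fact that the promise family it induces is complete for $2\mathrm{qlinN}$ under the reductions respected by the classes $2\mathrm{A}_{narrow(n^{\varepsilon})}$. I would first establish $(1)\IFF(3)$, and then $(2)\IFF(3)$ by a bookkeeping argument trading the quasi-linear state bound defining $2\mathrm{qlinN}$ against the width exponent of the target 2afa's.

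For $(1)\IFF(3)$: exactly as $(3\dstcon,m_{ver})$ replaces $(2\mathrm{SAT}_{3},m_{vbl})$ in the original hypothesis, the reduction underlying that replacement is a sub-linear-space reduction that survives the addition of polynomial advice, so Statement~(1) is equivalent to $(3\dstcon,m_{ver})\in\psublin/\poly$. The nonuniform form of the $\psublin/\poly$-characterization rewrites this as: the promise family $\{(L_n,\overline{L}_n)\}_{n\in\nat}$ attached to $(3\dstcon,m_{ver})$ lies in $\bigcup_{\varepsilon\in[0,1)}2\mathrm{A}_{narrow(n^{\varepsilon})}$. Since that family belongs to $2\mathrm{qlinN}$, Statement~(3) trivially places it in the union; conversely, the $2\mathrm{qlinN}$-completeness of this single family, transported by the width- and state-respecting reductions of Section~\ref{sec:basic-characterize}, upgrades membership of $\{(L_n,\overline{L}_n)\}$ to the whole inclusion $2\mathrm{qlinN}\subseteq\bigcup_{\varepsilon}2\mathrm{A}_{narrow(n^{\varepsilon})}$ of Statement~(3).

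For $(2)\IFF(3)$: the direction $(2)\THEN(3)$ is padding. A member of $2\mathrm{qlinN}$ is recognized by $c$-branching simple 2nfa's with $s_n=O(n\log^{k}n)$ states for fixed $c,k$; applying Statement~(2) with the parameter $s_n$ yields equivalent $O(s_n^{\varepsilon})$-narrow 2afa's having $s_n^{O(1)}=n^{O(1)}$ states, and since $s_n^{\varepsilon}=O(n^{\varepsilon'})$ for every $\varepsilon'\in(\varepsilon,1)$ the polylogarithmic factor is absorbed into a marginally larger exponent, landing the family in $2\mathrm{A}_{narrow(n^{\varepsilon'})}$. The reverse direction $(3)\THEN(2)$ is the one that requires care, because Statement~(2) demands a \emph{single} exponent $\varepsilon$ serving \emph{all} $c$-branching simple $n$-state 2nfa's, whereas a naive appeal to Statement~(3) supplies a possibly different $\varepsilon$ for each machine. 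I would remove this gap with a universal automaton: let $U_n$ be the $c$-branching simple 2nfa that on input $\pair{\pair{M},x}$ simulates the arbitrary $c$-branching simple $n$-state 2nfa $M$ on $x$; the family $\{U_n\}$ is again in $2\mathrm{qlinN}$, so by Statement~(3) it is recognized by $O(n^{\varepsilon})$-narrow 2afa's with $n^{O(1)}$ states for one fixed $\varepsilon$, and hardwiring a constant encoding $\pair{M}$ into such a 2afa yields an equivalent narrow 2afa for the individual $M$ with the very same exponent.

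The main obstacle is to verify that width is preserved under the two ``compilation'' steps on which the argument leans. In $(3)\THEN(2)$ one must replace the action of the alternating device over the fixed prefix $\pair{M}$ by a finite summary and re-insert it; for a \emph{two-way alternating} automaton this summary is not a mere transition function but an alternating behavior relation over the endpoints of $\pair{M}$, and the nontrivial point is that compiling it inflates the state count only polynomially while leaving every $\forall$-level of the surviving computation over $x$ of width at most the original $O(n^{\varepsilon})$. Symmetrically, the completeness reduction used in $(1)\IFF(3)$ must be absorbable by a 2afa that generates the degree-$3$ configuration graph of the given $2\mathrm{qlinN}$ machine on the fly, so that its universal branching is dictated by the width of the 2afa recognizing $3\dstcon$ and not by the polynomial size of that graph. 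Confirming that both steps keep the $\forall$-width within $O(n^{\varepsilon})$ --- rather than blowing it up by the length of the hardwired segment or the size of the configuration graph --- is the crux of the proof.
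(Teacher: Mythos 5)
Your decomposition into $(1)\IFF(3)$ and $(2)\IFF(3)$ is legitimate, and two of your three arrows track the paper closely: your $(1)\IFF(3)$ is in substance Corollary~\ref{relation-3DSTCON} combined with Proposition~\ref{switch-2linN-3DSTCON} (the ``completeness'' you invoke is the reduction of Proposition~\ref{2nfa-to-subgraphs} composed with a single narrow 2afa family for $3\mathcal{DSTCON}$, and the width is indeed preserved because the generation of $\pair{G_x}$ is deterministic and the target size $d(n)=O(n)$ depends only on the state count of the reduced machine), and your $(2)\THEN(3)$ is the same padding and polylog-absorption bookkeeping as in Proposition~\ref{switch-2linN-3DSTCON}.

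The genuine gap is your $(3)\THEN(2)$ via the universal automaton $U_n$. For $\{U_n\}_{n\in\nat}$ to witness membership in $2\mathrm{qlinN}$, each $U_n$ must be a $c$-branching \emph{simple} 2nfa with $O(n\log^{k}n)$ states: it sweeps a circular tape, is deterministic except at $\dollar$, and its state budget depends only on $n$ while its inputs $\pair{\pair{M},x}$ range over all lengths. Such a machine cannot simulate an input-specified $M$ on $x$: to execute one step of $M$ at position $i$ of $x$ it must consult the transition table $\pair{M}$, which sits in a different region of the tape; after sweeping around to read the table it has no way to resume at position $i+1$, since remembering a position inside $x$ requires $\Omega(|x|)$ states, and $|x|$ is unbounded. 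Nor can it avoid the detour by carrying table rows in its state: even a single row is a function $\Sigma\to Q$, costing roughly $n^{|\Sigma|}\geq n^2$ states, already beyond quasi-linear, and the row needed changes at every simulated step. This is exactly the point where ``table given as part of the input'' differs from ``table hardwired'': the paper only ever uses the reduction of Proposition~\ref{2nfa-to-subgraphs} with $\pair{M}$ fixed and compiled into the simulating device, which dissolves the position problem, whereas a universal machine by definition cannot hardwire $M$. (By contrast, the final hardwiring step you flag as the crux is the easy part: since $\pair{M}$ has length $O(n\log n)$, the head position inside that fixed prefix can be tracked in the state with only polynomial blowup, making the computation graphs isomorphic and the $\forall$-width unchanged.)

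Moreover, the universal machine is unnecessary: the single-$\varepsilon$ difficulty it was meant to solve is already resolved by the machinery you set up for $(1)\IFF(3)$. From Statement (3) and Lemma~\ref{3DSTCON-by-2linN} you get $3\mathcal{DSTCON}\in 2\mathrm{A}_{narrow(n^{\varepsilon})}$ for \emph{one} fixed $\varepsilon$; composing that single family with the reduction of Proposition~\ref{2nfa-to-subgraphs}, whose target $3\dstcon_{d(n)}$ satisfies $d(n)=O(n)$ uniformly for every $c$-branching simple $n$-state 2nfa $M$ (with $\pair{M}$ hardwired into the composed 2afa), yields for every such $M$ an equivalent $O(d(n)^{\varepsilon})=O(n^{\varepsilon})$-narrow 2afa with $n^{O(1)}$ states and the \emph{same} $\varepsilon$. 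This is precisely how the paper obtains Statement (2), namely via the nonuniform conversion theorem (Theorem~\ref{nonunif-general-3DSTCON}), rather than via any universal simulation.
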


So far, we work mostly on input alphabets of size at least $2$. In contrast, if we turn our attention to ``unary'' alphabets and finite automata over such unary alphabets (which are succinctly called \emph{unary (finite) automata}), then we obtain only a slightly weaker implication to the failure of LSH.

\begin{theorem}\label{uniform-unary-char}
Each of the following statements implies the failure of the linear space hypothesis.
\renewcommand{\labelitemi}{$\circ$}
\begin{enumerate}\vs{-2}
  \setlength{\topsep}{-2mm}%
  \setlength{\itemsep}{1mm}%
  \setlength{\parskip}{0cm}%

\item For any constant $c>0$, there exists a constant $\varepsilon\in[0,1)$ such that, for any constant $e>0$, every $\dl$-uniform family of $c$-branching simple unary 2nfa's with at most $en^3\log{n}+e$ states can be converted into an $\dl$-uniform family of equivalent $O(n^{\varepsilon})$-narrow simple unary 2afa's with $n^{O(1)}$ states.

\item For any constant $c>0$, there exist a constant $\varepsilon\in[0,1)$ and a log-space computable function $f$ such that, for any constant $e>0$, on every input of an encoding of $c$-branching simple unary 2nfa with at most $en^3\log{n}+e$ states, $f$ produces another encoding of equivalent  $O(n^{\varepsilon})$-narrow simple unary 2afa having $n^{O(1)}$ states.
\end{enumerate}\vs{-2}
Furthermore, it is possible to fix $c$ to $3$ in the above statements.
\end{theorem}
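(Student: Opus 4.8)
The plan is to prove each of the two implications by applying the postulated conversion to a single, purpose-built family of simple unary 2nfa's that faithfully represents degree-$3$ reachability, and then reading a polynomial-time sub-linear-space decision procedure off the resulting narrow 2afa's. Since \cite{Yam17a} establishes that LSH is equivalent to $(3\dstcon,m_{ver})\notin\psublin$, it suffices to certify $(3\dstcon,m_{ver})\in\psublin$. I would argue Statement~(1) in full and then observe that Statement~(2) follows line for line once the $\dl$-uniform family is replaced by the hypothesized log-space conversion function, the same on-the-fly simulation being applied in either case.

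First I would take the simple-2nfa representation of $(3\dstcon,m_{ver})$ supplied by the characterization of Section~\ref{sec:basic-characterize} and \emph{unarize} it: for each $n$, build a $3$-branching simple unary 2nfa $M_n$ deciding reachability on $n$-vertex degree-$3$ digraphs, with the adjacency relation threaded into the finite control and the unary tape supplying only the successive sweeps used to walk along a guessed $s$-$t$ path. Because a simple machine may branch solely at the right endmarker and sweeps strictly rightward, each single edge-step of the guessed path is realized by one complete sweep, and that sweep must simultaneously carry the current vertex, the neighbor whose incidence is being verified, and a length/phase counter; tracking this triple of $\{1,\dots,n\}$-valued quantities is exactly what inflates the state count to $O(n^{3}\log n)$, in contrast with the $O(n\log n)$ bound of Theorem~\ref{3DSTCON-char-uniform}. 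The family $\{M_n\}_n$ is $\dl$-uniform, since a log-space DTM prints the transition table of $M_n$ from $1^{n}$.

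Next I would apply Statement~(1) to $\{M_n\}_n$ to obtain, for some fixed $\varepsilon\in[0,1)$, an equivalent $\dl$-uniform family of $O(n^{\varepsilon})$-narrow simple unary 2afa's $\{A_n\}_n$ with $n^{O(1)}$ states. Invoking the $\psublin$-characterization via narrow 2afa's from Section~\ref{sec:basic-characterize}, such a family places its recognized problem in $\psublin$: a deterministic machine generates the transitions of $A_n$ on demand by $\dl$-uniformity and resolves the alternation by a depth-first traversal of the computation graph in which every $\forall$-level is handled by merely cycling through its at most $O(n^{\varepsilon})$ vertices, so that only $O(n^{\varepsilon})$ configurations, each describable in $O(\log n)$ bits, are ever stored, while the running time stays polynomial. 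Since $\{A_n\}_n$ is equivalent to $\{M_n\}_n$ and the latter decides $3\dstcon$, this yields a polynomial-time, $m_{ver}^{\varepsilon}\cdot\mathrm{polylog}$-space algorithm for $(3\dstcon,m_{ver})$, whence $(3\dstcon,m_{ver})\in\psublin$ and LSH fails.

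The hard part will be the unarization of the first step. In the general-alphabet setting of Theorem~\ref{3DSTCON-char-uniform} the instance $\langle G,s,t\rangle$ is simply written on the input tape, but a unary tape conveys nothing beyond its length, so the entire adjacency relation must be driven into the transition function while still respecting the rotating, end-branching discipline of a \emph{simple} machine; verifying that reachability is preserved under this transfer within the $O(n^{3}\log n)$ state budget is the technical heart of the argument. This same asymmetry is precisely why only an implication, rather than the full equivalence of Theorem~\ref{3DSTCON-char-uniform}, is attainable here: the converse would require reconstructing, for \emph{every} admissible unary 2nfa, a small \emph{unary} 2afa out of a generic sub-linear-space procedure, and the rigidity of unary state complexity offers no such uniform reverse construction.
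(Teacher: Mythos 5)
There is a genuine gap, and it sits exactly where you yourself flagged ``the technical heart.'' First, your unarization is incoherent as stated: you thread the adjacency relation into the finite control of $M_n$, but then $M_n$ depends on the particular graph $G$, so a log-space DTM cannot print $\pair{M_n}$ from $1^n$ alone (destroying the claimed $\dl$-uniformity), and the unary input no longer carries any information for the machine to decide. A single machine per index $n$ can decide reachability for \emph{all} $n$-vertex degree-$3$ digraphs only if the input itself encodes the graph; on a unary tape the only available channel is the length. This is why the paper's Proposition \ref{unary-2nfa-DSTCON} encodes $G$ as $1^e$ with $e=\prod_{l}p_{(i_l,j_l)}$ a product of primes, and why the $O(n^3\log{n})$ state bound arises from divisibility tests against primes of size $O(n^2\log{n})$, not from tracking a triple of $\{1,\ldots,n\}$-valued quantities.

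Second, and more fatally, any unary encoding of $n$-vertex graphs is necessarily of length exponential in $n$ (there are $2^{\Theta(n\log{n})}$ graphs to distinguish), so your final step---deterministically simulating the narrow unary 2afa $A_n$ on its unary input---cannot yield a polynomial-time, sub-linear-space algorithm for the binary-encoded problem $(3\dstcon,m_{ver})$: a head position alone needs $\Theta(n\log{n})$ bits (not $O(\log{n})$ as you claim), and one sweep takes time proportional to the unary length, i.e., exponential in $|\pair{G,s,t}|$. The paper closes precisely this gap with Lemma \ref{2afa-translate}: since a simple unary machine is sweeping and end-branching, its state sequence during one sweep of $1^e$ is eventually periodic, so the state reached at the right endmarker can be computed by modular arithmetic from $e$, and $e$ is recoverable from the short binary prime encoding $\pair{G}_{prime}$ of length $O(n\log{n})$. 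This converts the hypothesized narrow unary 2afa's into narrow 2afa's $\{P_n\}_{n\in\nat}$ running on polynomially long inputs, and only then can the machinery of Theorem \ref{3DSTCON-char-uniform} (via the transformations $g$ of Proposition \ref{2nfa-to-subgraphs} and $h$ of Lemma \ref{graph-binary-to-unary}) be invoked to derive the failure of LSH. Without this de-unarization step your argument does not go through.
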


Theorems \ref{3DSTCON-char-uniform}--\ref{3DSTCON-char-nonunif} will be proven in Section \ref{sec:main-proof} after we establish basic properties of $\psublin$ and $3\dstcon$ in Section \ref{sec:basic-characterize}. Theorem  \ref{uniform-unary-char} will be verified in Section \ref{sec:unary-case}.

\section{Supporting Terminology}\label{sec:preliminaries}

We have briefly discussed in Section \ref{sec:introduction} key terminology necessary to state our main contributions. Here, we further explain their supporting terminology.

\subsection{Numbers, Languages, and Size Parameters}\label{sec:numbers}

We denote by  $\nat$  the set of all \emph{natural numbers} (i.e., nonnegative integers) and set $\nat^{+}=\nat-\{0\}$. For two integers $m$ and $n$ with $m\leq n$, an \emph{integer interval} $[m,n]_{\integer}$ is a set $\{m,m+1,m+2,\ldots,n\}$. Given a set $A$, $\PP(A)$ expresses the \emph{power set} of $A$; that is, the set of all subsets of $A$. We assume that all \emph{polynomials} have nonnegative integer coefficients and all \emph{logarithms} are to the  base $2$.
A function $f:\nat\to\nat$ is \emph{polynomially bounded} if there exists a polynomial $p$ satisfying $f(n)\leq p(n)$ for all $n\in\nat$.

An \emph{alphabet} is a finite nonempty set of ``symbols'' or ``letters,'' and a \emph{string} over such an alphabet $\Sigma$ is a finite sequence of symbols in $\Sigma$. The \emph{length} of a string $s$, denoted by $|s|$, is the total number of symbols in $s$. We use $\lambda$ to express the \emph{empty string} of length $0$. Given an alphabet $\Sigma$, the notation $\Sigma^{\leq n}$ (resp., $\Sigma^n$) indicates the set of all strings of length at most (resp., exactly) $n$ over $\Sigma$. We write $\Sigma^*$ for $\bigcup_{i\geq0}\Sigma^{i}$.
A \emph{language} over $\Sigma$ is a set of strings over $\Sigma$. The \emph{complement} of $L$ is $\Sigma^*-L$ and is succinctly denoted by $\overline{L}$ as long as $\Sigma$ is clear from the context.
For convenience, we abuse the notation $L$ to indicate its \emph{characteristic function} as well; that is, $L(x)=1$ for all $x\in L$, and $L(x)=0$ for all $x\in\overline{L}$. A function $f:\Sigma^*\to\Gamma^*$ for two alphabets $\Sigma$ and $\Gamma$ is \emph{polynomially bounded} if there is a polynomial $p$ such that $|f(x)|\leq p(|x|)$ for all $x\in\Sigma^*$.

A \emph{size parameter} is a function mapping $\Sigma^*$ to $\nat^{+}$, which is used as a base unit in our analysis. We call a size parameter $m$ \emph{ideal} if there are constants $c_1,c_2>0$ and $k\geq1$ such that $c_1 m(x)\leq |x|\leq c_2m(x)\log^k{m(x)}$ for all $x$ with $|x|\geq2$.

Given a number $i\in\nat$, $binary(i)$ denotes the \emph{binary representation} of $i$; for example, $binary(0)=0$, $binary(1)=1$, and $binary(5)=101$.
For a length-$n$ string $x=x_1x_2\cdots x_n$ over the alphabet $\{0,1,\#,\bot\}$ (where $\#,\bot\notin\{0,1\}$), we define its \emph{binary encoding} $\pair{x}_2$ as $\widehat{x_1}\widehat{x_2}\cdots \widehat{x_n}$, where $\widehat{0}=00$, $\widehat{1}=01$, $\widehat{\#}=11$, and $\widehat{\bot}=10$. For example, $\pair{10}_2=0100$ and $\pair{0\#1\bot1}_2 = 0011011001$. We also use a \emph{fixed-length binary representation} $bin_n(i)$ defined to be  ${0^{k-1}1}binary(i)$ with $n=k+|binary(i)|$ and $k\geq1$. For example, we obtain $bin_5(2)=00110$,  $bin_4(5)=1101$, and $bin_3(1)=011$.

A \emph{promise decision problem} over an alphabet $\Sigma$ is a pair $(L_1,L_2)$ of disjoint subsets of $\Sigma^*$. We interpret $L_1$ and $L_2$ into sets of \emph{accepted} (or \emph{positive}) \emph{instances} and of \emph{rejected} (or \emph{negative}) \emph{instances}, respectively. In this work, we consider a family $\{(L_n,\overline{L}_n)\}_{n\in\nat}$ of promise decision problems \emph{over the same alphabet}.\footnote{Some of the literature  consider families $\LL$ of promise decision problems, each $(L_n,\overline{L}_n)$ of which may use a different alphabet. Nonetheless, for the purpose of this work, we do not take such an approach toward families of promise decision problems.}

As noted in Section \ref{sec:families-languages}, there is a direct translation  between parameterized decision problems and families of promise decision problems.
Given a parameterized decision problem $(L,m)$ over an alphabet $\Sigma$, a family $\LL=\{(L_n,\overline{L}_n)\}_{n\in\nat}$ is said to be \emph{induced from} $(L,m)$ if, for each index $n\in\nat$, $L_n=L\cap\Sigma_n$ and $\overline{L}_n=\overline{L}\cap \Sigma_n$, where $\Sigma_n=\{x\in\Sigma^*\mid m(x)=n\}$. As a refinement, we also set $L_{n,l}= L_n\cap \Sigma^l$ and $\overline{L}_{n,l} = \overline{L}_n\cap \Sigma^l$
for every pair $n,l\in\nat$.
In the rest of this paper, for our convenience, we identify parameterized problems with their induced families of promise problems.

\subsection{Turing Machine Models}\label{sec:Turing-machine}

To discuss space-bounded computation, we consider only the following form of 2-tape Turing machines.
A \emph{nondeterministic Turing machine} (or an NTM, for short) $M$ is a tuple $(Q,\Sigma,\{\cent,\dollar\},\Gamma,\delta,q_0,Q_{acc},Q_{rej})$ with a read-only input tape (over the \emph{extended alphabet}  $\check{\Sigma}=\Sigma\cup\{\cent,\dollar\}$) and a rewritable work tape (over the tape alphabet $\Gamma$). The transition function $\delta$ of $M$ is a map from $(Q-Q_{halt})\times \check{\Sigma}\times \Gamma$ to $\PP(Q\times\Gamma\times D_1\times D_2)$ with $D_1=D_2=\{+1,-1\}$ and $Q_{halt}=Q_{acc}\cup Q_{rej}$.
We assume that $\Gamma$ contains a distinguished blank symbol $B$.
In contrast, when $\delta$ maps to $Q\times\Gamma\times D_1\times D_2$, $M$ is called a \emph{deterministic Turing machine} (or a DTM) and can be treated as a special case of NTMs.
Each input string $x$ is written between $\cent$ (left endmarker) and $\dollar$ (right endmarker) on the input tape and all cells in the input tape are indexed from the left to the right by the integers incrementally from $0$ to $|x|+1$, where $\cent$ and $\dollar$ are respectively placed at cell $0$ and cell $|x|+1$. The work tape is a semi-infinite tape, stretching to the right, the leftmost cell of the tape is indexed $0$, and all other cells are consecutively numbered to the right as $1,2,\ldots$.

Given a Turing machine $M$ and an input $x$, a \emph{surface  configuration} is a quadruple $(q,j,k,w)$, where $q\in Q$, $j\in[0,|x|+1]_{\integer}$, $k\in\nat$, and $w\in\Gamma^*$, which represents the circumstance in which $M$ is in state $q$, scanning a symbol at cell $j$ of the input tape and a symbol at cell $k$ of the work tape containing string $w$.
For each input $x$, an NTM $M$ produces a \emph{computation tree}, in which each node is labeled by a surface configuration of $M$ on $x$ and an edge from a parent node to its children indicates a single transition of $M$ on $x$.
An NTM \emph{accepts} $x$ if it starts with the state $q_0$ scanning $\cent$ and, along a certain path of the computation tree, it enters an accepting state and halts. Such a computation path is called an \emph{accepting path}; in contrast, a \emph{rejection path} is a computation path that terminates in a rejecting state. If all computation paths are either rejecting paths or non-terminating paths, then we simply say that $M$ \emph{rejects} $x$.

In this work, we generally use Turing machines to solve parameterized decision problems. Occasionally, we also use Turing machines to compute functions. For this purpose, we need to append an extra \emph{semi-infinite write-only output tape} to each DTM, where a tape is \emph{write only} if its tape head must move to the right whenever it write a non-blank symbol onto the tape.

A function $f:\nat\to\nat$ is \emph{log-space computable} if there exists a DTM such that, for each given length $n\in\nat$,  $M$ takes $1^n$ as its input and then produces $1^{f(n)}$ on a write-only output tape using $O(\log{n})$ work space. In contrast, a function $m:\Sigma^*\to\nat^{+}$ is called a \emph{log-space size parameter} if there exists a DTM $M$ that, on any input $x$, produces $1^{m(x)}$ (\emph{in unary}) on its output tape using only $O(\log{n})$ work space \cite{Yam17a}. This implies that $m(x)$ is upper-bounded by $|x|^{O(1)}$. Concerning space constructibility, we here take the following time-bounded version.
A function $s:\nat\to\nat$ is \emph{$t(n)$-time space constructible} if there exists a DTM $M$ such that, for each given length $n\in\nat$, when $M$ takes $1^n$ as an input written on the input tape and $M$ produces $1^{s(n)}$ on its output tape and halts within $t(n)$ steps using no more than $s(n)$ cells. In a similar vein, a function $t:\nat\to\nat$ is \emph{log-space time constructible} if there is a DTM $M$ such that, for any $n\in\nat$, $M$ starts with $1^n$ as an input and halts exactly in $t(n)$ steps using $O(\log{n})$ space.

\subsection{Sub-Linear-Space Computability and Advice}\label{sec:sub-linear-space}

Take two functions $s:\nat\times\nat\to\nat^{+}$ and $t:\nat\to\nat^{+}$, and let $m$ denote any size parameter. The notation $\timespace(t(x),s(x,m(x)))$ (where $x$ expresses merely a \emph{symbolic input}) denotes the collection of all parameterized decision problems $(L,m)$ recognized by DTMs (each of which is equipped with a read-only input tape and a semi-infinite rewritable work tape)  within time $c_1t(x)+c_1$ using space at most $c_2s(x,m(x))+c_2$ on every input $x$ for certain absolute constants $c_1,c_2>0$.
The parameterized complexity class $\ptimespace(s(x,m(x)))$, defined in \cite{Yam17a}, is the union of all classes $\timespace(p(|x|),s(x,m(x)))$ for any positive polynomial $p$.

Karp and Lipton \cite{KL81} supplemented extra information, represented by \emph{advice strings}, to underlying Turing machines to enhance the computational power of the machines. More formally, we first equip our underlying machine with an additional read-only \emph{advice tape}, to which we provide exactly one advice string, surrounded by two endmarkers ($\cent$ and $\dollar$), of pre-determined length for all instances of each fixed length.

Let $h$ be an arbitrary function from $\nat$ to $\nat$. The nonuniform complexity class $\timespace(t(x),s(x,m(x)))/h(|x|)$ is obtained from $\timespace(t(x),s(x,m(x)))$
by providing underlying Turing machines with an advice string of length $h(n)$ for all instances of each length $n$. For $\ptimespace(s(x,m(x)))$, its advised variant $\ptimespace(s(x,m(x)))/\poly$ can be defined similarly by supplementing advice of polynomial size.

\begin{definition}
The class $\psublin$ is defined to be the union of all $\ptimespace(m(x)^{\varepsilon}\ell(|x|))$ for any log-space size parameter $m$, any  constants $k\geq1$ and $\varepsilon\in[0,1)$, and any polylog function  $\ell$. Similarly, we define $\psublin/\poly$ as an advised  version of $\psublin$ using $\ptimespace(m(x)^{\varepsilon}\ell(|x|))/\poly$.
\end{definition}

\subsection{Models of Two-Way Finite Automata}\label{sec:model-two-way}

We consider two-way finite automata, equipped with a read-only input tape and a tape head that moves along this input tape in both directions (to the left and to the right) bit never stays still at any tape cell. To clarify the use of two endmarkers $\cent$ and $\dollar$, we explicitly include them in the description of finite automata.

Let us start with defining \emph{two-way nondeterministic finite automata} (or 2nfa's).  A 2nfa $M$ is formally a septuple  $(Q,\Sigma,\{\cent,\dollar\},\delta,q_0,Q_{acc},Q_{rej})$, where $Q$ is a finite set of inner states, $\Sigma$ is an input alphabet with $\check{\Sigma}=\Sigma\cup\{\cent,\dollar\}$, $q_0$ ($\in Q)$ is the initial state, $Q_{acc}$ and $Q_{rej}$ are respectively sets of accepting and rejecting states satisfying both $Q_{acc}\cup Q_{rej}\subseteq Q$ and $Q_{acc}\cap Q_{rej}=\setempty$, and $\delta$ is a transition function from $(Q-Q_{halt})\times \check{\Sigma}$ to $\PP(Q\times D)$ with $D=\{+1,-1\}$ and  $Q_{halt} = Q_{acc}\cup Q_{rej}$. We always assume that $\cent,\dollar\notin \Sigma$.
The 2nfa $M$ behaves as follows. Assuming that $M$ is in state $q$ scanning symbol $\sigma$, if a transition has the form $(p,d)\in\delta(q,\sigma)$, then $M$ changes its inner state to $p$, moves its tape head in direction $d$ (where $d=+1$ means ``to the right'' and $d=-1$ means ``to the left.'').

A \emph{two-way deterministic finite automaton} (or a 2dfa) is defined
as a septuple $(Q,\Sigma,\{\cent,\dollar\},\delta,q_0,Q_{acc},Q_{rej})$, which is similar to a 2nfa but its transition function $\delta$ is a map from $(Q-Q_{halt})\times \check{\Sigma}$ to $Q\times D$. As customarily, we view 2dfa's as a special case of 2nfa's by identifying a singleton $\{(q,d)\}$ with its element $(q,d)$.

An input tape is called \emph{circular} if the right of cell $\dollar$ is cell $\cent$ and the left of cell $\cent$ is cell $\dollar$. Hence, on this circular tape, when a tape head moves off the right of $\dollar$ (resp., the left of $\cent$), it instantly reaches $\cent$ (resp., $\dollar$). A circular-tape finite automaton is \emph{sweeping} if the tape head always moves to the right.
A circular-tape 2nfa is said to be \emph{end-branching} if it makes nondeterministic choices only at the cell containing $\dollar$. A \emph{simple 2nfa} is a 2nfa that has a circular input tape, is sweeping, and is end-branching.

For a fixed integer $c\geq1$, a 2nfa $M$ is said to be \emph{$c$-branching} if, for any inner state $q$ and tape symbol $\sigma$, there are at most $c$ next moves (i.e., $|\delta(q,\sigma)|\leq c$).
Note that all 2dfa's are $1$-branching. We say that a family of 2nfa's is \emph{constant-branching} if every 2nfa in the family is $c$-branching for an absolute constant $c\geq1$.

Different from the case of Turing machine, a \emph{surface configuration} of a finite automaton $M$ is a pair $(q,i)$ with $q\in Q$ and $i\in\nat$. Since, for each input size $n$, $i$ ranges only over the integer interval $[0,n+1]_{\integer}$, the total number of surface configurations of an $n$-state finite automaton working on inputs of length $m$ is $n(m+2)$. Moreover, the notion of \emph{computation trees} and \emph{computation paths} can be naturally introduced as done for Turing machine in Section \ref{sec:Turing-machine}.

We use the following acceptance criteria: $M$ \emph{accepts} input $x$ if there is a finite accepting computation path of $M$ on $x$; otherwise, $M$ is said to \emph{reject} $x$. We say that $M$ \emph{accepts in time $t(n)$} if, for any length $n\in\nat^{+}$ and any input $x$ of length $n$, if $M$ accepts $x$, then there exists an accepting computation path of length at most $t(n)$. Let $L(M)$ express the set of all strings accepted by $M$. For any 2nfa $M$, if there is an accepting computation path of $M$ on input $x$, then its minimal length is at most $|Q|(|x|+2)$.

To characterize polynomial-time sub-linear-space computation, we further look into a model of \emph{two-way alternating finite automata} (or 2afa's) whose computation trees are particularly ``narrow.''
Formally, a 2afa is expressed as a tuple $(Q,\Sigma,\{\cent,\dollar\},\delta, q_0,Q_{\forall},Q_{\exists},Q_{acc},Q_{rej})$, where $Q$ is partitioned into a set $Q_{\forall}$ of \emph{universal states} (or $\forall$-states) and a set $Q_{\exists}$ of \emph{existential states} (or $\exists$-states). On each input, similar to a 2nfa, a 2afa $M$ branches out according to the value $\delta(q,\sigma)$ after scanning symbol $\sigma\in\check{\Sigma}$ in state $q\in Q$, and $M$ generates a \emph{computation tree} whose nodes are labeled by  surface configurations of $M$.
A \emph{$\{\forall,\exists\}$-label} of a node is defined as follows.
A node has a \emph{$\forall$-label} (resp., an \emph{$\exists$-label}) if its associated surface configuration has a universal state (resp., an existential state). A computation tree of $M$ on an input is said to be \emph{$\{\forall,\exists\}$-leveled} if all nodes of the same depth from the root node have the same label (either $\forall$ or $\exists$). When $|\delta(q,\sigma)|=1$, we customarily call this transition a \emph{deterministic transition} (or \emph{deterministic move}). Since all 2nfa's are 2afa's, we naturally extend the term ``end-branching'' to 2afa's by demanding that a 2afa always makes deterministic moves while reading symbols in $\Sigma\cup\{\cent\}$ and it branches out only at reading $\dollar$. Similarly to 2nfa's, we say that a 2afa is \emph{simple} if it has a circular input tape, is sweeping, and is end-branching.

Given an input $x$, a 2afa $M$ \emph{accepts} $x$ if there is an \emph{accepting computation subtree} $T$ of $M$ on $x$, in which (i) $T$ is rooted at the initial surface configuration, (ii)
$T$ contains exactly one branch from every node labeled by an $\exists$-state, (iii) $T$ contains all branches from each node having  $\forall$-labels, and (iv) all leaves of $T$ must have accepting states. Otherwise, we say that $M$ \emph{rejects} $x$.

Abusing the terminology, we say that a family $\MM=\{M_n\}_{n\in\nat}$ of 2afa's \emph{runs in $t(n,|x|)$ time} (where $x$ expresses a ``symbolic'' input) if, for any index $n\in\nat$ and for any input $x$ accepted by $M_n$, the height of a certain accepting computation subtree of $M_n$ on $x$ is bounded from above by $t(n,|x|)$. Since any computation path of an accepting computation subtree of $M$ on $x$ cannot have two identical surface configurations, the length of such a computation path must be at most $|Q|(|x|+2)$; therefore, the height of the accepting computation subtree is at most $|Q|(|x|+2)$. In other words, for any function $s:\nat\to\nat$, if $M_n$ has $s(n)$ states, then $\MM$ runs in $s(n)\cdot O(|x|)$ time on inputs $x$.

As noted in Section \ref{sec:main-result}, we can transform any computation tree into its associated \emph{computation graph} by merging all nodes of the computation tree at each level.
Let $f$ be any function on $\nat$. An \emph{$f(n)$-narrow 2afa} is a 2afa that, on each input $x$, produces a $\{\forall,\exists\}$-leveled computation graph that has width at most $f(|x|)$ at every $\forall$-level.

In general, we say that two machines $M_1$ and $M_2$ are \emph{(recognition) equivalent} if $M_1$ agrees with $M_2$ on all inputs.

\subsection{Nonuniform State Complexity and State Complexity Classes}\label{sec:state-complexity-class}

Given a finite automaton $M$, the \emph{state complexity} of $M$ refers to the number of $M$'s inner states.
The \emph{state complexity of transforming 2nfa's to 2dfa's} refers to the minimal cost of converting any 2nfa $M$ into a certain 2dfa $N$.  More precisely, when any given $n$-state 2nfa $M$ can be transformed into its equivalent $s(n)$-state 2dfa $N$, if $s(n)$ is the minimal number, then $s(n)$ is the state complexity of transforming 2nfa's to 2dfa's.

Instead of considering each finite automaton separately, here, we are concerned with a family or a collection $\{M_{n,l}\}_{n,l\in\nat}$ of finite automata, each $M_{n,l}$ of which has a certain number of states, depending only on parameterized sizes $n$ and input lengths $l$.
We say that a family $\{M_n\}_{n\in\nat}$ of machines \emph{solves} (or \emph{recognizes}) a family $\{(L_n,\overline{L}_n)\}_{n\in\nat}$ of promise decision problems if, for every $n\in\nat$, (1) for any $x\in L_n$, $M_n$ accepts $x$ and (2) for any $x\in\overline{L}_n$, $M_n$ rejects $x$. Notice that there is no requirement for any string outside of $L_n\cup \overline{L}_n$.
Kapoutsis \cite{Kap14} and Kapoutsis and Pighizzini \cite{KP15} presented a  characterization of $\nl\subseteq \dl/\poly$ in terms of the nonuniform
state complexity classes $2\mathrm{D}$, $2\mathrm{N}/\poly$, and $2\mathrm{N}/\mathrm{unary}$, which are described as collections of promise decision problems (or partial decision problems).

We formally define two critical classes $\mathrm{2qlinN}$ and $2\mathrm{A}_{narrow(f(n))}$ .

\begin{definition}
\renewcommand{\labelitemi}{$\circ$}
\begin{enumerate}
  \setlength{\topsep}{-2mm}%
  \setlength{\itemsep}{1mm}%
  \setlength{\parskip}{0cm}%

\item The nonuniform state complexity class $2\mathrm{qlinN}$ is the collection of nonuniform families $\{(L_n,\overline{L}_n)\}_{n\in\nat}$ of promise decision problems satisfying the following: there exist four constants  $c,k\in\nat^{+}$, $d,e>0$, and a nonuniform family $\{M_n\}_{n\in\nat}$ of $c$-branching 2nfa's such that, for each index $n\in\nat$, $M_n$ has at most $dn\log^k{n}+e$ states and $M_n$ solves  $(L_n,\overline{L}_n)$.

\item Given a function $f$ on $\nat$, we define $2\mathrm{A}_{narrow(f(n))}$ to be the collection of nonuniform families $\{(L_n,\overline{L}_n)\}_{n\in\nat}$ of promise decision problems, each $(L_n,\overline{L}_n)$ of which is solved by a certain  $n^{O(1)}$-state 2afa $N_n$ whose computation graphs are $O(f(n))$-narrow.
\end{enumerate}
\end{definition}

\section{Two Fundamental Automata Characterizations}\label{sec:basic-characterize}

Since Theorems \ref{3DSTCON-char-uniform}--\ref{3DSTCON-char-nonunif} are concerned with the language $3\dstcon$ and the parameterized complexity class $\psublin$, before proving these theorems, we intend to look into their basic properties in depth. In what follows, we will present two automata characterizations of $3\dstcon$ and $\psublin$.

\subsection{Automata Characterizations of PsubLIN}

Let us present a precise characterization of $\psublin$ in terms of narrow 2afa's. Interestingly, the narrowness of 2afa's directly corresponds to the space usage of DTMs. What we intend to prove in Proposition \ref{ptimespace-character} is, in fact, far more general than what we actually need for proving Theorems \ref{3DSTCON-char-uniform}--\ref{3DSTCON-char-nonunif}. We expect that such a general characterization could find other useful applications as well.

Firstly, let us recall the parameterized complexity class $\timespace(t(|x|),\ell(m(x)))$ from Section \ref{sec:sub-linear-space}. Our proof of Proposition \ref{ptimespace-character} requires a fine-grained analysis of the well-known transformation of \emph{alternating Turing machines} (or ATMs) to DTMs and vice versa.
In what follows, we freely identify a language with its \emph{characteristic function}.

\begin{proposition}\label{ptimespace-character}
Let $t:\nat\to\nat^{+}$ be log-space time constructible and let $\ell:\nat\to\nat^{+}$ be $t(n)$-time space constructible. Consider a language $L$ and a log-space size parameter $m$.
\renewcommand{\labelitemi}{$\circ$}
\begin{enumerate}\vs{-2}
  \setlength{\topsep}{-2mm}%
  \setlength{\itemsep}{1mm}%
  \setlength{\parskip}{0cm}%

\item If $(L,m)\in \timespace(t(|x|),\ell(m(x)))$, then there are three constants $c_1,c_2,c_3>0$ and an $\dl$-uniform family $\{M_{n,l}\}_{n,l\in\nat}$ of $c_2\ell(n)$-narrow 2afa's having at most $c_1t(l)\ell(n)$ states such that each $M_{m(x),|x|}$ computes $L(x)$ within $c_3t(|x|)$ time on all inputs $x$ given to $L$.

\item If  there are constants $c_1,c_2,c_3>0$ and an $\dl$-uniform family $\{M_{n,l}\}_{n,l\in\nat}$ of $c_2\ell(n)$-narrow 2afa's having at most $c_1t(l)\ell(n)$ states such that each $M_{m(x),|x|}$ computes $L(x)$ within $c_3t(|x|)$ time on all inputs $x$ to $L$, then $(L,m)$ belongs to $\timespace(|x|^{O(1)}+(t(|x|)\ell(m(x))|x|)^2, \ell(m(x))\log{|x|}+\log{t(|x|)})$.
\end{enumerate}
\end{proposition}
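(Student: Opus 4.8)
The plan is to prove Statement (2) — the fine-grained analogue of the standard simulation of alternating machines by deterministic ones — by designing a single DTM $D$ that, on input $x$, deterministically evaluates the $\{\forall,\exists\}$-leveled computation graph of the automaton $M_{m(x),|x|}$ and outputs its value $L(x)$. First I would have $D$ compute $n=m(x)$ and $l=|x|$; since $m$ is a log-space size parameter we have $n\le l^{O(1)}$ and $1^{n}$ is producible in space $O(\log l)$. Using the $\dl$-uniformity of $\{M_{n,l}\}$, $D$ does not store the whole automaton but instead uses the log-space generator of $\langle M_{n,l}\rangle$ as a transition oracle: whenever it needs $\delta(q,\sigma)$ it regenerates the relevant part on the fly, reading the scanned symbol $\sigma$ directly from the input tape. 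This preprocessing and repeated regeneration account for the additive $|x|^{O(1)}$ time term and cost only $O(\log t(l)+\log l)$ space, since $\langle M_{n,l}\rangle$ has size $(t(l)\ell(n))^{O(1)}$.

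The heart of the simulation exploits the two structural features guaranteed by the hypothesis: the computation graph is leveled with height at most $c_3t(l)$, and it is $O(\ell(n))$-narrow at every $\forall$-level. I would evaluate the alternating acceptance condition as a reachability game and propagate truth values from the deepest level, $T=c_3t(l)$, back toward the root at level $0$. Because every accepting computation subtree has height at most $c_3t(l)$ (the acceptance-in-time convention), a configuration surviving past level $T$ may be declared rejecting, so the bottom level is unambiguous. At each $\forall$-level $D$ maintains only the ``obligation set'' of configurations whose subtrees must all accept; narrowness bounds this set by $O(\ell(n))$. Crucially, every configuration in the obligation set sits at the current level, so its level index is shared and stored once in a single counter of size $O(\log t(l))$ (this is the ``$+\log t(|x|)$'' summand), while only the head positions and state-residues are recorded per element, costing $O(\ell(n)\log l)$ in total; hence the working space stays within $\ell(m(x))\log|x|+\log t(|x|)$.

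To cross from one $\forall$-level to the previous one, $D$ must traverse the intervening block of $\exists$-levels, which are not narrow and therefore cannot be stored. I would handle this by collapsing each $\exists$-block through existential reachability: a candidate $\forall$-configuration at the earlier level is accepting iff each of its children can reach, within the $\exists$-block, either an accepting leaf or a member of the already-computed obligation set at the later $\forall$-level. These reachability queries are recomputed from scratch through the transition oracle, never materializing the wide intermediate frontiers; this is precisely the time-for-space trade-off, and summing the reachability cost over the at most $c_3t(l)$ level-transitions yields the stated running time $|x|^{O(1)}+(t(|x|)\ell(m(x))|x|)^{2}$. The main obstacle is meeting the space and time bounds simultaneously: the narrow $\forall$-frontiers keep storage small but force us to re-derive the (possibly wide) $\exists$-reachability information repeatedly, so the delicate points are to organize the sweep so that (i) the obligation set provably never exceeds the narrowness width, (ii) the direction of value propagation is compatible with forward regeneration of transitions, and (iii) the repeated reachability computations, together with the level counter and the shared level index, hold the working space to $O(\ell(n)\log l+\log t)$ rather than the naive $O(\ell(n)(\log t+\log l))$ one would incur by storing full configurations in the frontier.
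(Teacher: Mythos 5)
Your proposal addresses only Statement (2); Statement (1) --- the construction of an $\dl$-uniform family of $c_2\ell(n)$-narrow, $c_1t(l)\ell(n)$-state 2afa's from a machine witnessing $(L,m)\in\timespace(t(|x|),\ell(m(x)))$ --- is never touched. In the paper this is the longer and more delicate direction: the 2afa simulates the DTM \emph{backwards} from its (unique, normalized) accepting configuration, existentially guessing the three-cell work-tape window of a predecessor configuration and then universally branching into three paths, each verifying one guessed cell; the leaves check the initial configuration, and the narrowness comes from the DTM's space bound. Even if your argument for (2) were airtight, the proposal would establish only one of the two implications.

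For Statement (2) there is a genuine gap at precisely the point you call ``the main obstacle,'' and it is not a matter of bookkeeping. You store only the $\forall$-level frontiers and promise to collapse each intervening block of $\exists$-levels by ``reachability queries recomputed from scratch,'' but you never say how such a query is answered, and no standard method meets both stated bounds: an $\exists$-block may span up to $c_3t(|x|)$ levels whose widths are bounded only by $|Q|(|x|+2)=O(t(|x|)\ell(n)|x|)$, so breadth-first search materializes exactly the wide frontiers you forbade (violating the space bound $\ell(m(x))\log|x|+\log t(|x|)$), depth-first search needs a stack as deep as the block (again too much space), and Savitch-style recursion fits the space only marginally while running in super-polynomial time, violating the time bound $(t(|x|)\ell(m(x))|x|)^2$. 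Indeed, your subroutine is essentially directed $s$-$t$ reachability solved deterministically in polynomial time and sub-linear space --- the very question ($\dstcon$ versus $\psublin$) that this paper leaves open, so it cannot serve as a black box. A second, related hole: narrowness bounds only configurations that actually occur in the computation graph (those forward-reachable from the root), so your backward pass must be restricted to reachable configurations for the obligation sets to be small, and certifying reachability at a $\forall$-level runs through the same wide $\exists$-blocks. The paper avoids all of this by reading narrowness as bounding the width of \emph{every} level of the computation graph (a reading its text supports; e.g., the remark that a $c$-branching 2nfa is in general not $c$-narrow would be vacuous otherwise): its DTM then performs a two-phase breadth-first traversal --- a forward phase that stores, one level at a time, the reachable set $C_i$ of at most $c_2\ell(n)$ configurations (each in $O(\log|x|)$ bits, with a single shared $O(\log t(|x|))$ level counter), and a backward phase that propagates ACCEPT labels from level $c_3t(|x|)$ to the root, recomputing $C_{i-1}$ when needed, which is where the quadratic time bound arises. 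Under that reading your backward pass collapses into the paper's second phase and the obstacle disappears; under your literal reading, the obstacle is real and your proposal does not overcome it.
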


Hereafter, we will proceed to the proof of Proposition \ref{ptimespace-character}. Our proof is different from a well-known proof in \cite{CKS81}, which shows a simulation between space-bounded DTMs and time-bounded ATMs.
As an example of such difference, a simulation of an ATM by an equivalent space-bounded DTM  in \cite{CKS81} uses the \emph{depth-first traversal} of a computation tree, whereas we use the \emph{breadth-first traversal} because of the narrowness of 2afa's.

For our proof, moreover, we need an \emph{encoding scheme} of 2nfa's. For this purpose, we use the following binary encoding scheme for a $c$-branching 2-way finite automaton $M=(Q,\Sigma,\{\cent,\dollar\},\delta,q_0,Q_{acc},Q_{rej})$.
Firstly, we fix a linear order on $Q\times\check{\Sigma}$ as $\{e_1,e_2,\ldots,e_k\}$ with $k=|Q\times\check{\Sigma}|$ and assume an appropriate binary encoding $\pair{p,\sigma}$ for each pair $(p,\sigma)\in Q\times\check{\Sigma}$. The transition function $\delta$ can be viewed as a  $k\times c$ matrix, in which each row is indexed by a pair $(p,\sigma)\in Q\times\check{\Sigma}$ and it contains $c$ entries $(p_1,d_1),(p_2,p_2),\ldots,(p_c,d_c)$ in order, if  $\delta(p,\sigma)=\{(p_1,d_1),(p_2,d_2),\ldots,(p_l,d_l)\}$ for a certain value $l\in[0,c]_{\integer}$ and the others $(p_{l+1},d_{l+1}),\ldots,(p_{c},d_{c})$ must be a designated symbol $\bot$. Given such a pair $(p,\sigma)$, we set $C_{p,\sigma}$ as $\#\pair{p_1,d_1}\# \pair{p_2,d_2}\#\cdots \#\pair{p_c,\sigma_c}\#$.
Finally, we define an encoding of $M$, denoted by $\pair{M}$, to be $\pair{C_{e_1}\#^2 C_{e_2}\#^2 \cdots \#^2C_{e_k}}_2$.
It then follows that there exists a constant $e>0$ satisfying $|\pair{M}|\leq en\log{n}$ for any $c$-branching 2nfa $M$ having at most $n$ states.

\begin{proofof}{Proposition \ref{ptimespace-character}}
Meanwhile, for convenience, we allow 2afa's to make \emph{stationary moves}  (i.e., a tape head stays still at certain steps). Take a parameterized decision problem $(L,m)$ with a log-space size parameter $m$, a log-space time-constructible function $t:\nat\to\nat^{+}$, and an $t(n)$-time space-constructible function $\ell:\nat\to\nat^{+}$. Consider a family $\LL = \{(L_{n,l},\overline{L}_{n,l})\}_{n,l\in\nat}$ of promise decision problems such that $L_{n,l}=\{x\in L\cap \Sigma^l\mid m(x)=n\}$ and $\overline{L}_{n,l}=\{x\in \overline{L}\cap\Sigma^l\mid m(x)=n\}$ for any pair $n,l\in\nat$.

(1) Assume that $(L,m)$ belongs to $\timespace(t(|x|),\ell(m(x)))$. Let us consider a DTM $N = (Q,\Sigma,\{\cent,\dollar\},\Gamma,\delta,q_0,Q_{acc},Q_{rej})$ that solves $(L,m)$ in time at most $c_1t(|x|)$ using space at most $c_2\ell(m(x))$ on all inputs $x\in\Sigma^*$, where $c_1$ and $c_2$ are appropriate positive constants.  In our setting, $N$ is equipped with a read-only input tape and a semi-infinite rewritable work tape. Let $B$ denote a unique blank symbol of $N$.

We first modify $N$ so that it halts after making all work-tape cells blank and that it halts in scanning both $\cent$ on the input tape and $B$ in the \emph{start cell} (i.e., cell $0$) of the work tape.
Since $t$ is log-space time constructible, by modifying $N$ appropriately, we can make it halt in exactly $c'_1t(|x|)$ steps for an appropriate constant $c'_1$. Moreover, we can make $N$ have a unique accepting state. This last modification can be done by replacing all accepting states with a new unique accepting state, say, $q_{acc}$ and by making appropriate changes to all transitions.
For readability, we hereafter denote $c'_1$ by $c_1$.

In what follows, we wish to simulate $N$ by an $\dl$-uniform  family $\{M_{n,l}\}_{n,l\in\nat}$ of appropriate 2afa's specified by the proposition.   As a preparation, let $\tilde{\Gamma}=\Gamma\cup \{\hat{\sigma}\mid \sigma\in\Gamma\}$, where $\hat{\sigma}$ is a distinguished symbol indicating that the work-tape head is scanning symbol $\sigma$. The use of $\tilde{\Gamma}$ helps us simplify the description of the proof.

Let $x$ be any instance to $L$ and set $n=m(x)$. Let us consider \emph{surface configurations} $(q,j,k',w)$ of $N$ on $x$, each of which indicates that $N$ is in state $q$, scanning both the $j$th cell of the input tape and the $k'$th cell of the work tape composed of $w$.
We want to trace down these surface configurations of $N$ using an alternating series of universal states and existential states of $M_{n,|x|}$.
The number of all surface configurations is at most $c_1c_2|Q|\ell(n)  |\tilde{\Gamma}|^{\ell(n)}$ since each surface configuration belongs to $Q\times [0,|x|+1]_{\integer} \times [0,c_2\ell(n)-1]_{\integer} \times \tilde{\Gamma}^{\ell(n)}$.

Since each move of $N$ affects at most $3$ consecutive cells of its work tape, it suffices to focus our attention on these 3 local cells. Our idea is to define $M_{n,|x|}$'s surface configuration $((q,i,k,u),j)$ so as to  represent $N$'s surface configuration $(q,j,k',w)$ at time $i$ in such a way that $u$ indicates either the $k$th cell content or the content of its neighboring $3$ cells. Furthermore, when $k=k'$, $u$ also carries extra information (by changing tape symbol $\sigma$ to $\hat{\sigma}$) that the tape head is located at the $k$th cell.

Let us formally define the desired 2afa family $\{M_{n,l}\}_{n,l\in\nat}$ that computes $L$. We make $\tilde{\Gamma}_3$ composed of all $bcd$ satisfying that only \emph{at most one} of $b,c,d$ is in $\tilde{\Gamma}-\Gamma$ and the others are in $\Gamma$.
Consider any input $x$ and set $n=m(x)$.
An inner state of $M_{n,l}$ is of the form $(q,i,k,u)$ with $q\in Q$, $i\in[1,c_1t(|x|)]_{\integer}$, $k\in[0,c_2\ell(n)-1]_{\integer}$, and $u\in\tilde{\Gamma}\cup\tilde{\Gamma}_3$.
The number of such inner states is thus at most $2c_1c_2|Q|t(|x|)\ell(n)|\tilde{\Gamma}_3|\leq c_4t(|x|)\ell(n)$ for an appropriate constant $c_4>0$.
A \emph{surface configuration} of $M_{n,|x|}$ is a tuple $((q,i,k,u),j)$, where $(q,i,k,u)$ is an inner state of $M_{n,|x|}$ and $j\in[0,|x|+2]_{\integer}$.
This tuple $((q,i,k,u),j)$ expresses the following circumstance: at time $i$, $N$ is in state $q$, and $N$'s input-tape head is scanning cell $j$.
Moreover, if $u=a\in\Gamma$, then $N$'s work tape contains symbol $a$ in cell $k$ and its tape head is not scanning this cell.
In contrast, when $u=\widehat{a}$ with $a\in\Gamma$,
$N$'s tape head is scanning $a$ at cell $k$. Consider the case where $u=bcd\in\tilde{\Gamma}_3$.
If $bcd\in\Gamma^3$, then the cells indexed by $k-1,k,k+1$ respectively contain $b,c,d$ but $N$'s tape head does not stay on these cells. If $b=\widehat{\sigma_b}\in\tilde{\Gamma}-\Gamma$, then $N$ is scanning $\sigma_b$ at cell $k-1$. The other cases of $c=\widehat{\sigma_c}\in\tilde{\Gamma}-\Gamma$ and $d=\widehat{\sigma_d}\in\tilde{\Gamma}-\Gamma$ can be similarly treated.

Hereafter, we will describe how to simulate $N$'s computation on the machine $M_{n,|x|}$ by tracing down surface configurations of $N$ on $x$ using a series of universal and existential states of $M_{n,|x|}$.
Starting with $\gamma_0$, we inductively generate the next surface configuration of $M_{n,|x|}$ roughly in the following way. In an existential state, $M_{n,|x|}$ guesses (i.e., nondeterministically chooses) the content of $3$ consecutive cells in the current configuration of $N$ on $x$. In a universal state, $M_{n,|x|}$ checks whether the guessed content is indeed correct by branching out $3$ computation paths, each of which selects one of the $3$ chosen cells. The $c_2\ell(n)$-narrowness comes
from the space bound of $N$.

Let us return to a formal description of $M_{n,|x|}$. We first introduce a notation $\vdash$, which indicates a single transition of $N$ in terms of $M_{n,|x|}$'s surface configurations.
Let  $\gamma =((q,i,k,a),j)$  and  $\gamma' = ((p,i-1,k,bcd),j')$ be two surface configurations of $M_{n,|x|}$, where $i\geq1$, $p,q\in Q$, $a\in\tilde{\Gamma}$, $bcd\in\tilde{\Gamma}_3$, and $j,j'\in[0,|x|+1]_{\integer}$. We write $\gamma'\vdash \gamma$ if there exist constants $f,h\in\{\pm1\}$ and a symbol $e\in\Gamma$ that satisfy $j=j'+f$ and the following conditions (i)--(ii). (i) In the case of $a=\widehat{\sigma_a}\in\tilde{\Gamma}-\Gamma$, it holds that either both $b=\widehat{\sigma_b}\in\tilde{\Gamma}-\Gamma$ and $\delta(p,x_j,\sigma_b) = (q,\sigma_a,f,+1)$, or  both $d=\widehat{\sigma_d}\in\tilde{\Gamma}-\Gamma$ and $\delta(p,x_j,\sigma_d) = (q,\sigma_a,f,-1)$. (ii) In the case of $a\in\Gamma$, it holds that, if $b=\widehat{\sigma_b}\in\tilde{\Gamma}-\Gamma$, then $a=c$ and $\delta(p,x_j,\sigma_b)=(q,e,f,-1)$; if $d=\widehat{\sigma_d}\in\tilde{\Gamma}-\Gamma$, then $a=c$ and $\delta(p,x_j,\sigma_d)=(q,e,f,+1)$; and  if $c=\widehat{\sigma_c}\in\tilde{\Gamma}-\Gamma$, then $\delta(p,x_j,\sigma_c)=(q,a,f,h)$.
Using $\vdash$, we define $NEXT_{\gamma}^{(x)}$ to be the set of all surface configurations $\gamma'$ of the form
$((p,i-1,k,bcd),j')$ with $p\in Q$ and $bcd\in\tilde{\Gamma}_3$ satisfying  $\gamma'\vdash \gamma$. Note that $|NEXT_{\gamma}^{(x)}|\leq 2|Q||\tilde{\Gamma}|^3$ since only three parameters $(p,bcd,j')$
in $\gamma'$ may vary. In what follows, we set $i$ to be any number in $[0,c_1t(|x|)-1]_{\integer}$.

\renewcommand{\labelitemi}{$\circ$}
\begin{itemize}\vs{-2}
  \setlength{\topsep}{-2mm}%
  \setlength{\itemsep}{1mm}%
  \setlength{\parskip}{0cm}%

\item[(a)] The \emph{initial surface configuration} $\gamma_0$ of $M_{n,|x|}$ on $x$ is $((q_{acc},c_1t(|x|),0,\cent),0)$, which partly corresponds to the final accepting surface configuration $(q_{acc},0,0,\cent B\cdots B)$ of $N$ on $x$. We assign an $\exists$-label to this surface configuration. This is the $0$th step of the computation of $M_{n,|x|}$ on $x$.

\item[(b)] After step $2i$, we assume that the current surface configuration of $M_{n,|x|}$ is $\gamma =((q,i',k,a),j)$ with $i\geq1$, $q\in Q$, and $a\in\tilde{\Gamma}$, where $i'=c_1t(|x|)-i$. Note that the inner state $(q,i',k,a)$ has a $\forall$-label. At step $2i+1$, we nondeterministically choose one element $\gamma'= ((p,i'-1,k,bcd),j')$ from $NEXT_{\gamma}^{(x)}$. The inner state $(p,i'-1,k,bcd)$ has a  $\forall$-label. We assign ACCEPT to $\gamma$ if there is a surface configuration in $NEXT_{\gamma}^{(x)}$ whose label is ACCEPT.

\item[(c)] After step $2i+1$, letting $i'=c_1t(|x|)-i$, we assume that the current surface configuration is $\gamma=((p,i',k,bcd),j)$ with $bcd\in\tilde{\Gamma}_3$. The corresponding inner state $(p,i',k,bcd)$ is assumed to have a $\forall$-label. At step $2i+2$, we universally generate the following three surface configurations: $((p,i',k-1,b),j),((p,i',k,c),j),((p,i',k+1,d),j)$, without moving $M_{n,|x|}$'s tape head. Three new inner states $(p,i',k-1,b)$, $(p,i',k,c)$, and $(p,i',k+1,d)$ all have $\exists$-labels. We assign ACCEPT to $\gamma$ if the above three surface configurations are all labeled with ACCEPT.

\item[(d)] Leaves are of the form $((q,1,k,u),j)$ obtained by (c) after $2(c_1t(|x|)-1)$ steps with $q\in Q$, $0\leq k < c_2\ell(n)$, and $u\in \tilde{\Gamma}$. We assign ACCEPT to $((q,1,k,u),j)$ if $q=q_0 \wedge j=0$ and either $k=j \wedge u=\hat{B}$ or $k\neq j\wedge u=B$. Otherwise, we assign REJECT. This requirement comes from the fact that the $N$'s unique initial surface configuration is of the form $((q_{0},0,0,\cent B\cdots B)$.
\end{itemize}

The $\dl$-uniformity of $\{M_{n,l}\}_{n,l\in\nat}$ follows from its construction. Next, let us claim the following statement (*) for each fixed input $x$ with $m(x)=n$. Let $w_k$ denote the $(k+1)$th symbol of $\cent w$; in particular, $w_0=\cent$.
\begin{quote}
(*) For any tuple $(q,i,j,k',w)$ with $i\in[0,c_1t(|x|)-1]_{\integer}$, $j\in[0,c_2\ell(n)-1]_{\integer}$, and $w\in\cent\Gamma^{c_2\ell(n)}$,  letting $i'=c_1t(|x|)-i$, any surface configuration $(q,j,k',w)$ of $N$ on $x$ is in an accepting computation path after step $i$ iff all surface configurations $((q,i',k,w_{k}),j)$ of $M_{n,|x|}$ on $x$ for any $k\in[0,c_2\ell(m(x))]_{\integer}-\{k'\}$ are labeled by ACCEPT, and $((q,i',k',w_{k'}),j)$ is labeled with ACCEPT.
\end{quote}
This statement implies that there exists an accepting computation subtree of $M_{n,|x|}$ on $x$ iff $N$ has an accepting computation path on $x$. By the simulation of $N$, the height of the shortest accepting computation subtree is bounded from above by $2c_1t(|x|)-1$. We set $c_3=2c_1$.

Let us consider a \emph{computation graph} of $M_{n,|x|}$. The number of distinct surface configurations appearing at time $i$ of the computation graph of $M_{n,|x|}$ on $x$ is at most $c_2|Q|\ell(n)|\tilde{\Gamma}|^3$, which can be succinctly expressed as $c'_2\ell(n)$ using $c'_2=c_2|Q||\tilde{\Gamma}|^3$. Thus, $M_{n,|x|}$ is $c'_2\ell(n)$-narrow.

The remaining task is to verify Statement (*) by
{downward induction} on $i$.
Let us first consider the initial case of $i=c_1t(|x|)-1$. Notice that, when $N$ starts, its work tape is blank. By (d), each leaf-level surface configuration of $M_{n,|x|}$ is labeled by ACCEPT iff its corresponding tape symbol appears in the initial surface configuration of $N$ on $x$. Hence, Statement (*) is true for $i=c_1t(|x|)-1$. Assume by induction hypothesis that Statement (*) is true for an index $i\in[1,c_1t(|x|)-1]_{\integer}$.
For this index $i$, we set $i' =c_1t(|x|)-i$. Let us consider the case of $i-1$.
At a $\forall$-level, by (c), if  $((p,i',k-1,b),j), ((p,i',k,c),j), ((p,i',k+1,d),j)$ are all labeled by ACCEPT, then $((p,i',k,bcd),j)$ is also labeled by ACCEPT.
At an $\exists$-level, by (b), if $((p,i',k,bcd),j)$ is labeled by ACCEPT, then $((q,i'-1,k,a),j)$ is labeled by ACCEPT and if $N$ makes the correct move toward an accepting state with $((p,i',k,bcd),j) \vdash ((q,i'-1,k,a),j)$.
Combining both universal and existential levels, we conclude that $3$ consecutive cells obtained from cells indexed by $k-1$, $k$,and $k+1$ correctly represent a corresponding surface configuration of $N$. By mathematical induction, Statement (*) must be true.


To complete our proof, we still need to transform $M_{n,l}$ into another equivalent 2afa, say, $M'_{n,l}$ \emph{with no stationary move}.

\begin{claim}\label{no-stationay-move}
There is a 2afa $M'_{n,l}$ with no stationary move that simulates $M_{n,l}$ with $|Q||\check{\Sigma}|$ extra inner states.
\end{claim}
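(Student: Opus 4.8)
The plan is to eliminate each stationary move by replacing it with a short \emph{detour}, in which the tape head steps off its current cell and immediately returns, so that the composite effect reproduces the stationary transition while every individual move genuinely shifts the head. To remember where to come back to, I would introduce a fresh family of intermediate states $\langle q',\sigma\rangle$, indexed by a target state $q'$ of $M_{n,l}$ and a tape symbol $\sigma\in\check{\Sigma}$; there are exactly $|Q|\,|\check{\Sigma}|$ of these, matching the claimed bound. Here $\sigma$ records the content of the cell from which the detour departs and will be used solely to select the direction of the return move.

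For the simulation, consider any stationary transition of $M_{n,l}$ that, at a cell carrying symbol $\sigma$, leads from a state to a target state $q'$. I would fix a departure direction by setting $d_\sigma=-1$ when $\sigma=\dollar$ (so that the head never runs off the right end) and $d_\sigma=+1$ otherwise (which also keeps the head on the tape at $\cent$), and then install in $M'_{n,l}$ two moves: the original move is redirected so that it shifts the head in direction $d_\sigma$ and enters $\langle q',\sigma\rangle$; and from $\langle q',\sigma\rangle$, reading any symbol, the head moves in direction $-d_\sigma$ and enters $q'$. The pair returns the head to its original cell in state $q'$, so it faithfully mimics the stationary move, and neither move is stationary. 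The $3$-way universal branching of step (c) is absorbed into the same mechanism: the first step of the detour carries the $\forall$-branch, dispatching the three children to $\langle(p,i',k-1,b),\sigma\rangle$, $\langle(p,i',k,c),\sigma\rangle$, and $\langle(p,i',k+1,d),\sigma\rangle$, each of which then returns deterministically to its intended target configuration.

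I would then verify recognition equivalence. Every newly added transition is deterministic (single successor), so the detours create no new existential choices and widen no $\forall$-level; hence the accepting computation subtrees of $M'_{n,l}$ correspond bijectively to those of $M_{n,l}$, up to the inserted return steps, and the narrowness bound established above is preserved. It remains to check that $M'_{n,l}$ stays $\{\forall,\exists\}$-leveled. In the construction of $M_{n,l}$ the only stationary moves are the universal branchings of step (c), which occur uniformly at one kind of level; replacing each of them by a two-step detour therefore lengthens all branches by the same amount at the same depths, and since a deterministic return state may be given either a $\forall$- or an $\exists$-label without changing its acceptance meaning, the return states can be labeled so as to keep every depth homogeneous.

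The main obstacle I anticipate is the interaction between the boundary cells and the leveling bookkeeping: a detour must never step past $\cent$ or $\dollar$, and inserting an extra deterministic level must not desynchronize the alternation pattern. Both are handled by the two observations that drive the construction---the direction rule $d_\sigma$ uses the departure symbol to keep the detour on the tape at either endmarker, and the stationary moves of $M_{n,l}$ sit at a single, uniform level, so the replacement acts identically on every computation path and the $\langle q',\sigma\rangle$ bookkeeping suffices to restore both the head position and the leveling.
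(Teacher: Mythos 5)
Your proposal is correct and takes essentially the same route as the paper: the paper's own proof also introduces exactly $|Q||\check{\Sigma}|$ intermediate states $\bar{q}_{\sigma}$ (indexed by target state and scanned symbol) and simulates each stationary move by a two-step step-away-and-return detour. The only difference is that the paper always steps leftward and back, whereas you choose the detour direction from the departure symbol so the head never leaves the tape at either endmarker --- a minor (and arguably more careful) refinement of the identical idea.
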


\begin{proof}
Let $\delta'$ be a transition function of $M_{n,l}$. The desired 2afa $M'_{n,l}$ simulates a stationary move of $M_{n,l}$ by stepping leftward and then moving back. More formally, we define a transition function $\hat{\delta}$ of $M'_{n,l}$ in the following way. First, we add new inner states of the form $\bar{q}_{\sigma}$ for each $q\in Q$ and $\sigma\in\check{\Sigma}$. If $(q,+1)\in\delta'(p,\sigma)$ holds, then $\hat{\delta}(p,\sigma)$ contains $(q,+1)$.
If $(q,0)\in\delta'(p,\sigma)$ holds, then we set $(\bar{q}_{\sigma},-1)\in\hat{\delta}(p,\sigma)$, $(q,+1)\in\hat{\delta}(\bar{q}_{\sigma},\tau)$ for any symbol $\tau\in\check{\Sigma}$.

Finally, we note that, from the proof of Claim \ref{no-stationay-move}, since  $\{M_{n,l}\}_{n,l\in\nat}$ is $\dl$-uniform,  $\{M'_{n,l}\}_{n,l\in\nat}$ is also $\dl$-uniform.
\end{proof}


(2) We assume that there are constants $c_1,c_2,c_3>0$ and an $\dl$-uniform family $\MM=\{M_{n,l}\}_{n,l\in\nat}$ of $c_2\ell(n)$-narrow 2afa's of at most $c_1 t(l)\ell(n)$ states such that $M_{m(x),|x|}$ computes $L(x)$ within $c_3t(|x|)$ time on all inputs $x$ given to $L$. Since $\MM$ is $\dl$-uniform, there is a log-space DTM $D$ that produces $\pair{M_{n,l}}$ from $1^n\#1^l$ for any pair $n,l\in\nat$.

We want to simulate $\{M_{m(x),|x|}\}_{n\in\nat}$ on the desired DTM, say, $N$.
This machine $N$ starts with an input $x$, computes $n=m(x)$, and runs $D$ on $1^n\#1^{|x|}$ to reconstruct $\pair{M_{n,|x|}}$ in $|x|^{O(1)}$ time using $O(\log(n+|x|))$ space.
Next, let us describe how to accept $x$ by $M_{n,|x|}$. Assume that $M_{n,|x|}$ has the form $(Q,\Sigma,\{\cent,\dollar\},\delta,q_0,Q_{\forall}, Q_{\exists},Q_{acc},Q_{rej})$. A surface configuration of $M_{n,|x|}$ on an  input $x$ has the form $(q,j)$ for $q\in Q$ and $j\in[0,|x|+1]_{\integer}$. The total number of such surface configurations is $|Q|(|x|+2)$, and thus it requires only $O(\log|x|)$ bits to describe each surface configuration.
For the simulation of $M_{n,|x|}$ on $x$, we use 3 work tapes. Initially, we write the initial surface configuration $(q_0,0)$ on the 1st work tape.  The 3rd work tape is used to remember ``time'' starting with $0$, which is initially written on this tape.

Let us consider a \emph{computation graph} of $M_{n,|x|}$ on $x$. Since $M_{n,|x|}$ runs in $c_3t(|x|)$ time, if $x$ is accepted, then there is an accepting computation subgraph of $M_{n,|x|}$ on $x$ having height at most $c_3t(|x|)$.
To check whether there is such an accepting computation subgraph, we perform a \emph{breadth-first traversal} of the computation graph.
In the first phase, starting with $i=0$, we recursively increment $i$ by one. This number $i$ indicates the current time in $[0,c_3t(|x|)]_{\integer}$. At  time $i\geq1$, we  replace each surface configuration $(q,j)$ by all surface  configurations reachable from it.
We denote by $C_i$ the set of all surface configurations $(q,j)$ that is reached by $M_{n,|x|}$ on $x$. By the $c_2\ell(n)$-narrowness of $M_{n,|x|}$, it follows that $|C_i|\leq c_2\ell(n)$. This replacing process continues until we reach $i=c_3t(|x|)$. To each surface configuration $(q,j)$ in $C_{t_3t(|x|)}$, we assign ACCEPT if $q$ is in $Q_{acc}$, and we assign REJECT otherwise.

In the second phase, we eliminate all surface configurations labeled by REJECT. We write $c_3t(|x|)$ on the 3rd tape. We start with $i=c_3t(|x|)$ and recursively decrease $i$ down to $1$. At stage $i$, we define $LIST_{i}$ to be the list of all surface configurations labeled by ACCEPT written on the 1st work tape. As in the first phase, assume that we have already generated
$C_{i-1}$ on the 2nd work tape. Let $(p,k)$ be any element of $C_{i-1}$.
If $M_{n,|x|}$ makes an existential move from $(p,k)$ to a set, say,  $\{(q_1,j_1),(q_2,j_2),\ldots,(q_c,j_c)\}$, then we assign ACCEPT to $(p,k)$ if at least one element of this set appears on the 1st work tape (thus it is labeled by ACCEPT). When $M_{n,|x|}$ makes a universal move from $(p,k)$ to a set $\{(q_1,j_1),(q_2,j_2),\ldots,(q_c,j_c)\}$,  we assign ACCEPT to $(p,k)$ if all elements of the set appear in $LIST_i$ (thus they are all labeled by ACCEPT). We then remove from $C_{i-1}$ all elements not labeled by ACCEPT. We overwrite all the remaining elements onto the 1st work tape and the obtained  list becomes $LIST_{i-1}$.  Finally, when reaching $i=1$, if $(q_0,0)$ is labeled by ACCEPT, then we accept $x$; otherwise, we reject $x$.

The 1st and the 2nd work tapes both use space at most $c_2\ell(n)\cdot O(\log|Q|(|x|+2))$, which coincides with $O(\ell(m(x))\log{|x|})$ because  $|C_i|\leq c_2\ell(m(x))$ and $O(\log{|x|})$ bits are needed to describe each surface configuration. The 3rd work tape uses $O(\log{t(|x|)})$ space. Thus, we can carry out the above procedure using space $O(\ell(m(x))\log|x|)+O(\log{t(|x|)})$.
The running time is at most $|x|^{O(1)}+ O(t(|x|)^2\ell(m(x))^2|x|^2)$.
Therefore, $(L,m)$ belongs to $\timespace(|x|^{O(1)}+(t(|x|)\ell(m(x))|x|)^2, \ell(m(x))\log{|x|}+\log{t(|x|)})$.
\end{proofof}

Proposition \ref{ptimespace-character} deals only with $M_{m(x),|x|}$ that correctly handles inputs taken from $\Sigma_n=\{x\in\Sigma^*\mid m(x)=n\}$. For a later application in Section \ref{sec:main-proof}, we want to extend the scope of such inputs to $\overline{\Sigma}_n=\{x\in\Sigma^*\mid m(x)\neq n\}$ by incorporating a family of simple 2dfa's that help us determine whether a given input is inside or outside of $\Sigma_n$.

We say that a size function $m$ is \emph{polynomially honest} (or \emph{p-honest}, for short) if there exists a constant $e>0$ satisfying $|x|\leq m(x)^e+e$ for all $x$.

\begin{lemma}\label{size-decide}
Assume that $m:\Sigma^*\to\nat^{+}$ is a p-honest log-space size parameter, where $\Sigma$ is an alphabet. Let $\Sigma_n=\{x\in\Sigma^*\mid m(x)=n\}$ and $\overline{\Sigma}_n=\{x\in\Sigma^*\mid m(x)\neq n\}$ for each index $n\in\nat$. There exist a constant $c>0$ and an $\dl$-uniform family $\{M_n\}_{n\in\nat}$ of simple 2dfa's such that, for every $n\in\nat$,  (1) $M_n$ solves $(\Sigma_n,\overline{\Sigma}_n)$ for all inputs in $\Sigma^*$ and (2) $M_n$ uses $n^{O(1)}$ inner states.
\end{lemma}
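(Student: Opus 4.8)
Lemma A.4 (size-decide) — proof proposal.

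The plan is to build, for each $n\in\nat$, a simple 2dfa $M_n$ that reads a circular input tape in a single left-to-right sweep, recomputes the value $m(x)$ on the fly, compares it against the fixed target $n$, and accepts exactly when $m(x)=n$. The central difficulty is reconciling three constraints that pull in different directions: $M_n$ must be \emph{simple} (circular tape, sweeping, end-branching), it must be a genuinely \emph{deterministic} finite automaton (so it carries no work tape and can store information only in its inner states), yet $m$ is defined only as a \emph{log-space size parameter}, i.e. computed by a DTM using $O(\log|x|)$ work space. Since a simple 2dfa has no work tape, I cannot literally run the log-space DTM for $m$ during the sweep; instead I must encode the entire bounded computation of that DTM into the finite control of $M_n$, which is possible precisely because the p-honesty assumption caps $|x|$ and hence the space of the $m$-computing DTM in terms of $n$.

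First I would invoke p-honesty: there is a constant $e>0$ with $|x|\le m(x)^e+e$ for all $x$. When $M_n$ is asked to decide whether $m(x)=n$, the only inputs that can possibly satisfy this are those with $|x|\le n^e+e$; any longer input automatically has $m(x)\neq n$ and must be rejected. This bound is what makes the state count polynomial in $n$. Because $m$ is log-space computable, on inputs of length at most $\ell:=n^e+e$ the DTM computing $m$ uses only $O(\log \ell)=O(\log n)$ work-tape cells, so its number of distinct work-tape configurations (state, head positions, tape contents) is bounded by $2^{O(\log n)}=n^{O(1)}$. This is the quantity I would fold into the finite control of $M_n$, giving the desired $n^{O(1)}$ state bound in part (2).

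The key steps, in order, are as follows. (i) Fix the log-space DTM $D_m$ computing $m$ and the length bound $\ell=n^e+e$. (ii) Observe that a single left-to-right sweep of the circular tape lets $M_n$ read $x$ once from $\cent$ through $\dollar$; I would design $M_n$ to simulate $D_m$ on $x$ by threading $D_m$'s entire bounded work-tape configuration through $M_n$'s inner state, advancing $D_m$'s simulated input head in lockstep with $M_n$'s sweeping head. Since $D_m$ may move its input head in both directions while $M_n$ sweeps only rightward, I would use the standard ``crossing-sequence'' packaging: for the segment of the tape currently under the head, $M_n$ carries the finite table of $D_m$-behaviours (which local work-configuration leads to which, upon re-entry), which is again bounded by $n^{O(1)}$ entries on inputs of length $\le\ell$. (iii) At the same time, $M_n$ maintains in its state a counter comparing the unary output length of $D_m$, i.e. $m(x)$, against $n$; the counter needs only the values $0,1,\dots,n,{>}n$, contributing $O(n)$ states. (iv) Upon reaching $\dollar$, $M_n$ accepts iff the simulated $D_m$ has halted with output exactly $n$ and iff no ``length overflow'' flag was set, and it rejects otherwise. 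Determinism and end-branching are immediate since $D_m$ is deterministic and all decisions are resolved at $\dollar$. Finally, $\dl$-uniformity of $\{M_n\}_{n\in\nat}$ follows because the transition table of $M_n$ is a fixed finite transformation of the fixed program of $D_m$ together with the numeral $n$, all of which a log-space DTM can print from $1^n$.

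The main obstacle I anticipate is step (ii): faithfully simulating a \emph{two-way} log-space DTM head by a \emph{one-way sweeping} finite-control machine without a work tape. The clean way around this is not to simulate $D_m$ step-by-step along the sweep, but to carry, as part of $M_n$'s inner state, the full $n^{O(1)}$-size crossing-sequence / behaviour-function summary of $D_m$ restricted to inputs of length at most $\ell$, so that one pass over $x$ suffices to determine the halting output of $D_m$; this is exactly the kind of bounded-configuration encoding permitted by the log-space bound on $D_m$ together with the p-honesty cap on $|x|$. I would remark that this tabulation is essentially the classical two-way-to-one-way simulation on bounded-length inputs, and since we only care about correctness on $\Sigma^*$ with the promise split $(\Sigma_n,\overline\Sigma_n)$, there is no loss in rejecting outright all inputs exceeding length $\ell$.
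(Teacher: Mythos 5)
Your overall plan (fold the $O(\log n)$-space configuration of the machine computing $m$ into the finite control, compare the unary output against $n$ with a bounded counter, and use p-honesty to cap the relevant input lengths at $\ell = n^e+e$) matches the paper's, but your commitment to a \emph{single} left-to-right sweep creates a fatal gap at step (ii). To absorb the two-way input head of $D_m$ into a one-pass machine you invoke a crossing-sequence/behaviour table and bound it by ``$n^{O(1)}$ entries.'' The table indeed has $n^{O(1)}$ entries, but what determines the state complexity of $M_n$ is the number of \emph{distinct tables} it may have to carry in its finite control, which for $k=n^{O(1)}$ work configurations is on the order of $(k+1)^{k}$, i.e.\ exponential in $n$; this is exactly the exponential blow-up of the classical two-way-to-one-way (Shepherdson) construction. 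The defect is not repairable within your framework: no single-sweep machine with $n^{O(1)}$ states can solve $(\Sigma_n,\overline{\Sigma}_n)$ for every p-honest log-space size parameter. For instance, let $m(x)=|x|$ if $x=ww$ for some string $w$, and $m(x)=|x|+1$ otherwise; this $m$ is log-space computable and p-honest, yet deciding $m(x)=n$ on inputs of length $n$ in one pass amounts to recognizing $\{ww\}$, which forces $2^{\Omega(n)}$ states on any one-way finite automaton.

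The missing idea is that ``sweeping'' on a \emph{circular} tape does not mean one pass: the head always moves right, but after $\dollar$ it re-enters at $\cent$, so a simple 2dfa may sweep the input polynomially many times. The paper exploits exactly this, via its Lemma \ref{log-space-computable}: simulate the two-way log-space machine one step per sweeping round, where each round locates the simulated input-head position and then applies a single transition, threading only the current work-tape configuration (state, work-tape content, head positions) through the finite control. This costs a factor of $O(|x|)$ in time per simulated step, for $|x|^{O(1)}$ total time, but keeps the state count at $n^{O(1)}$. The paper then applies this to $f(x)=1^{m(x)}$, compares the produced output length against $n$ with a counter that rejects as soon as the count exceeds $n$, and invokes p-honesty exactly as you do to bound the lengths that matter. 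If you replace your single-sweep/crossing-sequence step with this multi-sweep simulation, the remainder of your argument (the counter, the early rejection, the uniformity claim) goes through.
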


To prove the lemma, nonetheless, we require a supporting lemma stated below.

\begin{lemma}\label{log-space-computable}
For any log-space computable function $f:\Sigma^*\to\Gamma^*$ over two alphabets $\Sigma$ and $\Gamma$, there is an $\dl$-uniform family $\{M_n\}_{n\in\nat}$ of simple 2dfa's equipped with output tapes such that each $M_n$ has $n^{O(1)}$ states and $M_{|x|}$ computes $f(x)$ in $|x|^{O(1)}$ time for all inputs  $x\in\Sigma^*$.
\end{lemma}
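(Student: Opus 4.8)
The plan is to simulate a fixed log-space deterministic Turing machine $D$ computing $f$ by the sweeping head of a simple 2dfa $M_n$, pushing the entire ``internal'' configuration of $D$ into the inner states of $M_n$. Since $f$ is log-space computable, fix such a DTM $D$ with a read-only input tape, an $O(\log|x|)$-space work tape, and a write-only output tape; because $D$ has only $n^{O(1)}$ distinct configurations on inputs of length $n$ and halts, it halts within $n^{O(1)}$ steps. First I would let each inner state of $M_n$ encode a tuple $(q,w,k)$ consisting of $D$'s current state $q$, the contents $w$ of $D$'s work tape (occupying at most $c\log n$ cells over a fixed finite work alphabet, hence $2^{O(\log n)}=n^{O(1)}$ possible values), and the position $k$ of $D$'s work-tape head (one of $O(\log n)$ positions). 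Together with a constant-size mode flag and a rotation counter ranging over $[0,n+1]_{\integer}$ (explained below), the total state count is $n^{O(1)}$, as required. The position of $M_n$'s sweeping head on the circular input tape is made to track the position $j$ of $D$'s input-tape head, so that the symbol $M_n$ currently scans is exactly $x_j$, with $x_0=\cent$ and $x_{n+1}=\dollar$.

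Next I would specify one simulated step of $D$. In \emph{normal} mode with head at cell $j$, $M_n$ reads $x_j$, recovers $w_k$ from its stored state, and applies $\delta(q,x_j,w_k)=(q',a,d_1,d_2)$: it updates its stored configuration to $(q',w',k+d_2)$, where $w'$ is $w$ with cell $k$ overwritten by $a$; it copies onto its own output tape whatever $D$ writes during this step; and it then advances the virtual input head by $d_1$. Since $M_n$ is sweeping, a move $d_1=+1$ is realized by a single rightward transition. The hard part is $d_1=-1$: as $M_n$ may never move left, I would realize a leftward step by a near-full rotation of the circular tape, namely the rightward moves that carry the head from cell $j$ through $\dollar$ and $\cent$ back to cell $j-1$ (a total of $n+1$ rightward moves). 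During this rotation $M_n$ enters \emph{rotating} mode, freezes $(q',w',k+d_2)$, ignores the symbols it scans, and uses its counter to detect exactly when the rotation is complete, whereupon it re-enters normal mode reading $x_{j-1}$ and resumes. Because $M_n$ is built for the fixed length $n$, these counter values are legitimately part of its $n^{O(1)}$ states, and the constraint that $M_n$ always moves (never stays still) is respected.

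The correctness follows by induction on the number of simulated steps of $D$: whenever $M_n$ is in normal mode, its stored tuple together with its head position faithfully represents $D$'s configuration, so $M_{|x|}$ reproduces $D$'s whole computation and emits $f(x)$ on its output tape. For the time bound, each of the $n^{O(1)}$ steps of $D$ costs $M_n$ at most $n+1$ moves, giving total running time $n^{O(1)}$. The $\dl$-uniformity is verified by observing that the transition table of $M_n$ is a local function of $D$'s fixed, constant-size transition table together with $n$: given $1^n$, a log-space DTM can enumerate all inner states $(q,w,k,\text{mode},\text{counter})$ and scanned symbols, compute each resulting move by lookup and simple arithmetic, and print $\pair{M_n}$, all within logarithmic space. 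I expect the main obstacle to be exactly the faithful replacement of $D$'s two-way input head by the one-directional sweep on the circular tape: getting the rotation length and counter synchronization right, handling the passage of the head over the endmarkers $\cent$ and $\dollar$ mid-rotation, and confirming that $D$ never attempts an illegal left move off cell $0$ (which is guaranteed by the boundary conventions on $D$'s input tape).
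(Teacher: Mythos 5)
Your proposal is correct and follows essentially the same route as the paper's proof: fold the log-space DTM's state, work-tape contents, and work-head position into the $n^{O(1)}$ inner states, and realize the two-way motion of the input head on the circular sweeping tape by rotating around through $\dollar$ and $\cent$. The only (cosmetic) difference is organizational—the paper first builds an unrestricted 2dfa and then makes it simple by storing a target head position and sweeping to it on every simulated step, whereas you build the sweeping automaton directly and pay a rotation only for left moves—but the state count, time bound, and $\dl$-uniformity argument come out the same.
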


\begin{proof}
Since $f$ is a log-space computable function from $\Sigma^*$ to $\Gamma^*$, we choose an appropriate DTM $N = (Q,\Sigma,\{\cent,\dollar\}, \Gamma,\delta,q_0,Q_{acc},Q_{rej})$ that computes $f$ in $|x|^{O(1)}$ time using $O(\log{|x|})$ space. Let $\Delta$ denote a work alphabet used for $N$'s work tape. Firstly, we translate surface configurations of $N$ to surface configurations of simple 2dfa's $\{M_n\}_{n\in\nat}$.
A surface configuration of $N$ on input $x$ is of the form $(q,j,k,w,a)$, which expresses that $N$ is in state $q$, scanning the $j$th input tape cell, the $k$th work tape cell containing $w$, and having $a$ written in the previous cell of $N$'s output tape.  We then define an inner state of $M_n$ as $(q,k,w,a)$ and denote by $Q_n$ the set of all such inner states. A surface configuration of $M_n$ on $x$ thus has the form $((q,k,w,a),j)$ for an index $j\in[0,|x|+1]_{\integer}$. From the work alphabet $\Delta$, we define $\tilde{\Delta} = \Delta\cup\{\hat{\sigma}\mid \sigma\in\Delta\}$.
We then introduce a binary relation $\vdash$ for two surface configurations $\gamma=((q,k,w,a),j)$ and $\gamma'=((p,k',w',a'),j')$ as follows: $\gamma\vdash \gamma'$ iff there are two directions $d_1,d_2\in\{\pm1\}$ for which
$\delta(q,x_j,w_k) = (p,\tau,a',d_1,d_2)$, $w'_k = \tau$, $rest_k(w)=rest_k(w')$, $j'=j+d_1\;(\mathrm{mod}\;|x|+2)$, and $k'=k+d_2\;(\mathrm{mod}\;|x|+2)$, where $rest_k(w)$ denotes the string obtained from $w$ by deleting the $k$th symbol $w_k$ of $w$.
Finally, we define a transition function $\delta_n$ of $M_n$ by setting  $\delta_{n}((q,k,w,a),x_j) = ((p,k',w',a'),d_1)$ exactly when $((q,k,w,a),j)\vdash ((p,k',w',a'),j+d_1\;(\mathrm{mod}\;|x|+2))$. For any input $x$, since $M_{|x|}$ simulates $N$ on $x$, $M_{|x|}$ computes $f(x)$.
The 2dfa $M_{|x|}$ runs in $|x|^{O(1)}$ time for any input $x$ and the number $|Q_n|$ of $M_{n}$'s inner states is at most $|Q|(c\log{n}+2)| |\Gamma| |\tilde{\Delta}|^{c\log{n}+2}$, which is upper-bounded by $n^{O(1)}$, for a certain constant $c\geq 1$.

To obtain the desired \emph{simple 2dfa's} $M'_n$, we next modify $M_n$ by adding extra $n^{O(1)}$ inner states to remember the tape head location. Here, we describe the behavior of $M'_n$ by sweeping rounds. At each sweeping round, starting at $\cent$, $M'_n$ sweeps the tape to locate the tape head of $M_n$ and, when $M'_n$ reaches the tape head location, it simulates $M_n$'s single move. More formally, $M'_n$'s transition function $\delta'_n$ is defined as follows. For simplicity, let $Q_n=\{q_0,q_1,\ldots,q_{t_n}\}$ with $t_n=|Q_n|$. We make an inner state of $M'_n$ have the form $(q,s,j)$ for $q\in Q_n$ and $s,j\in[0,n+2]_{\integer}$. Initially, we set $(q_0,0,0)$ to be a new initial state, provided that $q_0$ is the initial state of $M_n$. We then define a transition function $\delta'$ of $M'_n$ as $\delta'((q,s,j),\sigma) = (q,s,j+1,+1)$ if $s\neq j$ and $\sigma\neq\dollar$, $\delta((q,s,j),\dollar) = (q,s,0,+1)$ if $s\neq j$, and $\delta'((q,s,s),\tau) = (p,s+d\;(\mathrm{mod}\;n+2),s+1,+1)$ for $\tau\in\check{\Sigma}$. It then follows that $|Q'_n|$ is no more than $|Q_n|\cdot (c\log{n}+2)^2$, which is clearly at most $n^{O(1)}$.
Since $M'_n$ is sweeping, it is therefore simple, as requested.
\end{proof}

Next, we give the proof of Lemma \ref{size-decide}.

\begin{proofof}{Lemma \ref{size-decide}}
Let $m$ be any p-honest log-space size parameter and define $f(x) = 1^{m(x)}$ for all $x\in\{0,1\}^*$. Since this new function $f$ is log-space computable, Lemma \ref{log-space-computable} provides an $\dl$-uniform family $\{D_n\}_{n\in\nat}$ of $n^{O(1)}$-state simple 2dfa's such that, for any input $x$, $D_{|x|}$ computes $f(x)$ in $|x|^{O(1)}$ time. Since $m$ is p-honest, we take a constant $e>0$ satisfying $|x|\leq m(x)^e+e$ for all $x$.

We want to construct a new family $\{M_n\}_{n\in\nat}$ of simple 2dfa's solving $\{(\Sigma_n,\overline{\Sigma}_n)\}_{n\in\nat}$. Fix an index $n\in\nat$ and  define $M_n$ as follows. Letting $x$ be any input string over $\{0,1\}$, we run $D_{|x|}$ and try to produce $1^{m(x)}$ on its output tape. During this process, as soon as we discover that $m(x)>n$, we instantly reject $x$. This procedure takes $(n|x|)^{O(1)}$ steps. Next, assume that $m(x)\leq n$ and that we have already completed the production of $f(x)$ by running $D_{|x|}$ in $|x|^{O(1)}$ time. If $m(x)=n$, then we accept $x$; otherwise, we reject $x$. Since $D_{|x|}$ has $|x|^{O(1)}$ states, $M_n$ has state complexity of $(n|x|)^{O(1)}$, which is bounded by $n^{O(1)}$ because of $|x|\leq m(x)^{e}+e$ and $m(x)\leq n$.
\end{proofof}


We will give a nonuniform version of Proposition \ref{ptimespace-character} by making use of ``advice'' in place of the uniformity condition of machines in the proposition. In the proof of Proposition \ref{ptimespace-character}(1), for example, we use a DTM to produce $\pair{M_{n,l}}$ from $1^n\#1^l$; by contrast, in the nonuniform case, since we do not have such a DTM, we must generate $\pair{M_{n,l}}$ from information provided by a piece of given advice.

Let us recall that $\timespace(t(x),\ell(x,m(x)))/O(s(|x|))$ is an advised version of $\timespace(t(x),\ell(x,m(x)))$ with advice size $O(s(|x|))$.
Note that each underlying Turing machine characterizing a language in    $\timespace(t(x),\ell(x,m(x)))/O(s(|x|))$ is equipped with an additional read-only \emph{advice tape}, to which, for all input instances of each length $n$, we provide the advice tape with exactly one \emph{advice string} of length $O(s(n))$, surrounded by the two endmarkers $\cent$ and $\dollar$.

\begin{proposition}\label{nonuniform-ptimespace}
Let $t:\nat\to\nat^{+}$ be log-space time constructible and let $s,\ell:\nat\to\nat^{+}$ be $t(n)$-time space constructible.
Let $L$ and $m$ be a language and a log-space size parameter, respectively.
Define $h(l) = \max_{x:|x|=l}\{m(x)\}$ for each index $l\in\nat$.
\renewcommand{\labelitemi}{$\circ$}
\begin{enumerate}\vs{-2}
  \setlength{\topsep}{-2mm}%
  \setlength{\itemsep}{1mm}%
  \setlength{\parskip}{0cm}%

\item If $(L,m)\in \timespace(t(|x|),\ell(m(x))/O(s(|x|))$, then there is a nonuniform family $\{M_{n,l}\}_{n,l\in\nat}$ of $O(\ell(n))$-narrow 2afa's having $O(t(l)\ell(n)s(l))$ states such that $M_{m(x),|x|}$ computes $L(x)$ in $O(t(|x|))$ time on all inputs $x$ given to $L$.

\item If  there is a nonuniform family $\{M_{n,l}\}_{n,l\in\nat}$ of $O(\ell(n))$-narrow 2afa's having $O(t(l)\ell(n))$ states such that $M_{m(x),|x|}$ computes $L(x)$ in $O(t(|x|))$ time on all inputs $x$ to $L$, then $(L,m)$ belongs to $\timespace(|x|^{O(1)}+(t(|x|)\ell(m(x))|x|)^2, \ell(m(x))\log{|x|}+\log{t(|x|)}) /O(h(|x|)t(|x|)^2\log{t(|x|)})$.
\end{enumerate}
\end{proposition}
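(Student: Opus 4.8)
The plan is to mirror the two directions of Proposition \ref{ptimespace-character}, replacing each appeal to $\dl$-uniformity by an appeal to a polynomially long advice string that depends only on the input length. The guiding observation is that, once the length $l=|x|$ is fixed, the advice is a fixed constant object: in the forward direction it can therefore be hardwired into the transition table of the simulating 2afa's, and in the backward direction it can carry precisely the machine descriptions that a $\dl$-uniform generator would otherwise produce. Because of this, almost all of the technical content of Proposition \ref{ptimespace-character} can be reused verbatim, and only the handling of uniformity changes.

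For Statement (1), I would start from a DTM $N$ witnessing $(L,m)\in\timespace(t(|x|),\ell(m(x)))/O(s(|x|))$, so that $N$ additionally scans a read-only advice tape holding a fixed string $\alpha_l$ of length $O(s(l))$ on every input of length $l$. I would then repeat the construction of Proposition \ref{ptimespace-character}(1) with one change: each inner state $(q,i,k,u)$ of the simulating 2afa is augmented by one extra component $p$ recording the position (among the $O(s(l))$ advice cells) of $N$'s advice-tape head, and the relation $\vdash$ is rewritten so that, whenever $N$'s move depends on the advice symbol, it consults the fixed symbol of $\alpha_l$ at position $p$. Since $\alpha_l$ is identical for all length-$l$ inputs, that symbol is a compile-time constant, so $\{M_{n,l}\}_{n,l}$ is a legitimate \emph{nonuniform} family needing no generator. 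Two remarks then close this direction. First, the new component multiplies the state count by $O(s(l))$, giving $O(t(l)\ell(n)s(l))$ states. Second, because $N$ is deterministic its advice head sits in one determined cell at each time step, so all surface configurations on a common level share the same value of $p$; hence $p$ never enlarges the width and the narrowness stays $O(\ell(n))$, while the height of the accepting subtree remains $O(t(l))$.

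For Statement (2), I would keep the two-phase breadth-first traversal of Proposition \ref{ptimespace-character}(2) completely unchanged; its only missing ingredient in the nonuniform setting is the log-space generator $D$ that reconstructs $\pair{M_{n,l}}$ from $1^n\#1^l$. The crucial point is that for a fixed length $l$ the size parameter $m(x)$ is \emph{not} constant but ranges over all values $n\le h(l)$, so a single description will not suffice. I would therefore let the advice string for length $l$ be the separator-delimited concatenation of the encodings $\pair{M_{n,l}}$ for every $n\in[0,h(l)]_{\integer}$. On input $x$ the DTM first computes $n=m(x)$ in $O(\log|x|)$ space (using that $m$ is a log-space size parameter), scans the advice to locate the block $\pair{M_{m(x),|x|}}$, and then runs the breadth-first simulation of Proposition \ref{ptimespace-character}(2) against that description; the time and space bounds are inherited directly from that proposition. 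The advice length is bounded by summing the per-block sizes $|\pair{M_{n,l}}|=O(t(l)\ell(n)\log(t(l)\ell(n)))$ over the at most $h(l)+1$ choices of $n$, and a routine estimate brings this within the claimed $O(h(l)t(l)^2\log t(l))$ bits.

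I expect the genuine content to lie in Statement (2), specifically in recognizing that the advice must encode an entire \emph{range} of automata indexed by the possible parameter values $n\le h(|x|)$ rather than one machine per length — this is exactly what forces the factor $h(|x|)$ into the advice bound — together with the bookkeeping confirming that the concatenated descriptions both stay within $O(h(|x|)t(|x|)^2\log t(|x|))$ bits and remain log-space addressable so that the block for $m(x)$ can be found. The forward direction should be routine once one notices that fixing the length freezes the advice into constants, with the only subtlety being to track the (deterministically determined) advice-head position consistently through the backward reconstruction, which is precisely what keeps the narrowness argument intact.
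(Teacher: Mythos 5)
Your proposal follows the paper's proof essentially verbatim: for (1) you absorb the fixed advice string into the simulating 2afa by adding an advice-head-position component to each inner state (exactly the paper's states $(q,i,k,l,u)$, yielding the $O(t(l)\ell(n)s(l))$ state bound while leaving the narrowness and running time untouched), and for (2) you take as advice for length $l$ the concatenation $\pair{\pair{M_{0,l}}\#\pair{M_{1,l}}\#\cdots\#\pair{M_{h(l),l}}}_2$, compute $n=m(x)$ in log space to locate the relevant block, and rerun the breadth-first simulation of Proposition \ref{ptimespace-character}(2), with the same $O(h(l)t(l)^2\log t(l))$ estimate on the advice length. This matches the paper's argument in both directions, so the proposal is correct to the same level of detail as the paper's own proof.
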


\begin{proof}
Given a parameterized decision problem $(L,m)$ over an alphabet $\Sigma$, let $L_n=\{x\in L\mid m(x)=n\}$ and $\overline{L}_n=\{x\in\overline{L}\mid m(x)=n\}$ for each index $n\in\nat$.

(1) Consider a DTM $N$ and advice $\{a_n\}_{n\in\nat}$ of size at most $s(|x|)$ such that $N(x,a_{|x|})$ computes $L(x)$ in $O(t(|x|))$ time using $O(\ell(m(x)))$ space on any input $x\in\Sigma^*$. Remember that an advice string is given to a \emph{read-only} advice tape. In the same way as in the proof of Proposition \ref{ptimespace-character}(1), we can define narrow 2afa's $M_{m(x),|x|}$ that simulates  $N(x,a_{|x|})$, except that, whenever $N$ accesses bits in $a_{|x|}$ given on the advice tape, $M_{m(x),|x|}$ retrieves the same information directly from the 2afa's inner states that encode each bit of $a_{|x|}$. To be more precise, we define inner states of $M_{m(x),|x|}$ to be $(q,i,k,l,u)$, where $(q,i,k,u)$ is defined similarly to the proof of Proposition \ref{ptimespace-character}(1) and $l$ refers to the location of a tape head on the advice tape. Thus, $M_{m(x),|x|}$ uses $c_1t(|x|)\ell(m(x))s(|x|)$ states since $l$ varies over $[0,s(|x|)+1]_{\integer}$.
A transition of $N$ on $(x,s_{|x|})$ is of the form $\delta(p,x_j,\sigma_a,s_l) = (q,c,f,e,g)$, where $g\in\{\pm1\}$ refers to the direction of an advice tape head. Since $\{a_{n}\}_{n\in\nat}$ is fixed, $M_{m(x),|x|}$ can be completely defined from $N$.

(2) Let $w(l) = h(l)t(l)^2\log{t(l)}$ for brevity. Assume that there is a nonuniform family $\{M_{n,l}\}_{n,l\in\nat}$ of $O(t(l)\ell(n))$-state $O(\ell(n))$-narrow 2afa's such that each $M_{m(x),|x|}$ computes $L(x)$ in $O(t(|x|))$ time on all inputs $x$.
Let $x$ be any input given to $L$ and set $n=m(x)$.
We define an advice string $s_{|x|}$ to be $\pair{\pair{M_{0,|x|}}\#\pair{M_{1,|x|}}\#\ldots \#\pair{M_{h(|x|),|x|}}}_2$. Note that there is a constant $c_4\geq1$ for which $|\pair{M_{i,|x|}}|\leq c_4 t(|x|)^2\log{t(|x|)}$ for all indices $i\in[0,h(|x|)]_{\integer}$.  It then follows that $|s_{|x|}|\leq c_4(h(|x|)+1) t(|x|)^2 \log{t(|x|)} \leq 2c_4 w(|x|)$. Thus, each advice string $s_n$ is of size $O(w(n))$.

Next, we intend to solve $(L,m)$ deterministically with the help of the above advice $\{s_n\}_{n\in\nat}$.
Starting with input $x$, compute $n=m(x)$ using log space and discover the encoding $\pair{M_{n,|x|}}$ inside of $s_{|x|}$ since $n\leq h(|x|)$.
Once we retrieve $\pair{M_{n,|x|}}$, we simulate $M_{n,|x|}$ on $x$ deterministically in the same way as in the proof of Proposition \ref{ptimespace-character}(2). This simulation of $M_{n,|x|}(x)$ requires  $O(t(|x|)^2\ell(n)^2|x|^2)$ time using space $O(\ell(n)\log{|x|})+O(\log{t(|x|)})$. Therefore, $(L,m)$ belongs to $\timespace(|x|^{O(1)}+(t(|x|)\ell(m(x))|x|)^2, \ell(m(x))\log|x|+\log{t(|x|)}) /O(w(x))$.
\end{proof}

\subsection{Automata Characterizations of 3DSTCON}\label{sec:automata-3DSTCON}

The proofs of Theorems \ref{3DSTCON-char-uniform}--\ref{3DSTCON-char-nonunif}  require a characterization of $3\dstcon$ with the size parameter $m_{ver}$ in terms of simple 2nfa's. Even though Kapoutsis and Pighizzini \cite{KP15} earlier gave a 2nfa-characterization of $\dstcon$, $3\dstcon$ needs a different characterization. This difference seems to be essential in the characterization of LSH because we still do not know whether $3\dstcon$ in the definition of LSH can be replaced by $\dstcon$; namely, $(3\dstcon,m_{ver})\notin\psublin$ iff $(\dstcon,m_{ver})\notin\psublin$. See \cite{Yam17a,Yam17b} for a discussion.

First, we re-formulate the parameterized decision problem $(3\dstcon,m_{ver})$ as a family $3\mathcal{DSTCON} = \{(3\dstcon_n,\overline{3\dstcon}_n)\}_{n\in\nat}$ of promise decision problems, each $(3\dstcon_n,\overline{3\dstcon}_n)$ of which is limited to directed graphs of vertex size exactly $n$, where $n$ refers to the value of the size parameter $m_{ver}$.
To express instances given to $(3\dstcon_n,\overline{3\dstcon}_n)$, we need to define an appropriate binary encoding of degree-bounded directed graphs.
Formally, let $K_n=(V,E)$ denote a unique \emph{complete directed graph} with $V=\{0,1,\ldots,n-1\}$ and $E=V\times V$ and let $G=(V,E)$ be any degree-$3$ subgraph of $K_n$.
We express this directed graph $G$ as its associated \emph{adjacency list}, which is represented by an $n\times 3$ matrix whose rows are indexed by $i\in[n]$ and columns are indexed by $j\in\{1,2,3\}$. Consider the $i$th row. Assume that there are vertices $j_{i,1},j_{i,2},\cdots j_{i,k}$ for which $G$ has edges $\{(i,j_{i,1}), (i,j_{i,2}), \cdots, (i,j_{i,k})\}$, provided that the vertex $i$ has outdegree $k\in[0,3]_{\integer}$ satisfying that   $j_{i,1}<j_{i,2}<\cdots <j_{i,k}$. To make the $i$th row contain $3$ entries, whenever $k<3$,  we automatically set $j_{i,t}$ to be the designated symbol $\bot$ for each index $t\in[k+1,3]_{\integer}$. The $i$th row of the adjacency list thus has a series $(j_{i,1},j_{i,2},j_{i,3})$ of exactly $3$ entries. For example, if $G$ has only two outgoing edges $(2,3)$ and $(2,5)$ from the vertex $2$, the corresponding list is $(3,5,\bot)$.

We further encode an adjacency list of $G$ into a single binary string, denoted by $\pair{G}$, in the following manner.
Let us recall the notation $binary(i)$ from  Section \ref{sec:numbers}.
Here, we extend this notation to $binary^*(i)$, which expresses $binary(i)$ if $i\in\nat$, and $\lambda$ (empty string) if $i=\bot$.
For each row indexed by $i$, if the row contains three entries $(j_{i,1},j_{i,2},j_{i,3})$ in the adjacency list, then this row is encoded as $C_i = binary(i)\# binary^*(j_{i,1})\# binary^*(j_{i,2})\# binary^*(j_{i,3})$, where $\#$ is a designated symbol not in $\{0,1\}$.
The \emph{binary encoding} $\pair{G}$ of $G$ is of the form $\pair{C_{1}\#^2C_{2}\#^2\cdots \#^2C_{n}}_2$.
Let $\Sigma_n$ be composed of all such encodings $\pair{G}$ of subgraphs $G$ of $K_n$.
Note that the bit size of this encoding $\pair{G}$ is $O(n\log{n})$ since $|binary(i)|,|binary^*(j_{i,l})|\leq c\log{n}+c$ for a certain constant $c>0$.

The next lemma asserts that we can easily check whether a given string is an binary encoding of a directed graph. This lemma helps us eliminate any \emph{invalid} instance easily and becomes a basis to the proof of Proposition \ref{construct-3DSTCON}.

\begin{lemma}\label{check-graph-code}
There exists an $\dl$-uniform family $\{N_n\}_{n\in\nat}$ of $O(n\log{n})$-state simple 2dfa's, each $N_n$ of which checks whether any given input $x$ is an encoding $\pair{G}$ of a certain degree-3 subgraph $G$ of $K_n$ in $O(n|x|)$ time; that is, within $O(n|x|)$ time, $N_n$ accepts $x$ if $x\in\Sigma_n$ and rejects $x$ if $x\notin\Sigma_n$.
\end{lemma}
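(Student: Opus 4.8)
The plan is to build each $N_n$ as a single \emph{simple} 2dfa that performs a constant number of left-to-right sweeps over the circular tape, each sweep dedicated to verifying one syntactic constraint of the canonical encoding $\pair{G}$. The constraint that governs the whole design is that the total state count must stay within $O(n\log n)$; since a valid encoding already has length $\Theta(n\log n)$, we cannot afford to store two independent vertex labels at the same time, which would cost $\Theta(n^2)$ states. The remedy is twofold: (i) split the checks into separate sweeps so that their state costs \emph{add} rather than \emph{multiply}, and (ii) realize every numerical comparison by holding a \emph{single} value in the finite control (costing $O(n)$ states) together with an $O(\log n)$-state comparison sub-automaton, never two full values at once.

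First I would list the defining properties of a string $x\in\Sigma_n$: (a) $x$ decodes, via the $2$-bit blocks $00,01,11$ of $\pair{\cdot}_2$, into a string over $\{0,1,\#\}$, any block $10$ being immediately illegal since the symbol $\bot$ never survives $binary^*$; (b) this decoded string parses as $C_1\#\#C_2\#\#\cdots\#\#C_n$, where each $C_i$ consists of exactly four $\{0,1\}$-fields separated by single $\#$'s; (c) the leading field of the $i$-th row equals $binary(i)$; (d) each nonempty neighbor field encodes a vertex label $<n$; and (e) within every row the nonempty fields are strictly increasing and all precede the empty ($\bot$) fields. Each of (a)--(e) becomes one sweep of $N_n$.

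The sweeps I would use are the following. A \emph{syntax sweep} checking (a)--(b) needs only a field-within-row index in $\{0,1,2,3\}$ together with a row counter up to $n$, hence $O(n)$ states; the one subtlety here is that empty fields make runs of $\#$'s ambiguous, so the parser must advance field-by-field, treating an immediately-seen $\#$ as an empty field, rather than reasoning about $\#$-run lengths. A \emph{row-index sweep} carries the expected index as a counter $i\in\{1,\ldots,n\}$ and compares the leading field against $binary(i)$ using an $O(\log n)$-state comparator, for $O(n\log n)$ states. A \emph{validity sweep} compares each nonempty neighbor field against the \emph{fixed} constant $n$, which needs only $O(\log n)$ states because the threshold is hard-wired into $N_n$. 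An \emph{ordering sweep} stores, within each row, the previous nonempty label ($O(n)$ states), compares the next label against it with an $O(\log n)$-state comparator, and maintains a constant-size flag enforcing the $\bot$-packing. Summing over the $O(1)$ disjoint sweeps gives $O(n\log n)$ states in total. All moves are rightward on the circular tape, so $N_n$ is sweeping and hence simple, and it is deterministic; each sweep runs in $O(|x|)$ steps, so $N_n$ halts well within $O(n|x|)$ time.

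For $\dl$-uniformity I would note that every inner state is a tuple built from a sweep label, a field index, a value in $[0,n]_{\integer}$, and an $O(\log n)$-bounded comparison pointer, and that each transition is computed from $n$ by elementary arithmetic on these components; a DTM can therefore emit $\pair{N_n}$ from $1^n$ in space $O(\log n)$. The main obstacle is item (ii) above: making the MSB-first comparison of a tape-read label against a stored label cost only $O(\log n)$ extra states on top of the stored value, rather than forcing a second $\Theta(n)$-sized register. This rests on the standard fact that, for a fixed threshold $v_1<n$, the set $\{binary(v_2): v_2>v_1\}$ is recognized by an $O(\log n)$-state DFA, which I would invoke once per stored value. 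A secondary nuisance is the correct field-by-field parsing around empty ($\bot$) fields described in the syntax sweep.
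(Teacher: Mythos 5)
Your overall architecture is sound and genuinely different from the paper's: you organize the computation as a constant number of sweeps, one per syntactic constraint, so that state costs \emph{add}, whereas the paper's $N_n$ makes $n$ sweeps, devoting round $i$ (carried as a counter) to checking row $z_i$ with $O(\log{n})$ additional states; both organizations give $O(n\log{n})$ states, and yours is in fact faster ($O(|x|)$ versus $O(n|x|)$ time). Your sweeps for constraints (a)--(d) are correct, including the field-by-field treatment of empty ($\bot$) fields and the hard-wired threshold comparator for checking labels against $n$. Notably, your constraint (e) is not even addressed in the paper's proof, which only checks the block structure, the row indices, and the range of neighbor labels; on this point you are more faithful to the definition of $\Sigma_n$ (which consists of \emph{canonical} encodings, with neighbors listed in increasing order and $\bot$'s packed last) than the paper itself.

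However, the ordering sweep as you specify it would fail. Your state structure there is (stored previous label $v_1$) $\times$ ($O(\log{n})$-state threshold comparator for $v_1$), and this cannot chain across a row with three nonempty fields $j_1<j_2<j_3$, which do occur (e.g., at outdegree-3 vertices of indegree 0). While the comparator for threshold $j_1$ reads $j_2$, the automaton's state is $(j_1,q)$ with $q$ ranging over only $O(\log{n})$ comparator states, so at the end of $j_2$ the machine knows only whether $j_2>j_1$; it has irrevocably forgotten $j_2$ and therefore cannot load it as the new stored threshold for the comparison against $j_3$. Retaining $j_2$ during that comparison with your stated structure would require (value) $\times$ (value) states, i.e., $\Theta(n^2)$, exactly what you set out to avoid. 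The gap is repairable within your own framework in either of two ways: split (e) into two sweeps, one checking $j_1<j_2$ (store $j_1$, compare, discard, reset at the row boundary) and one checking $j_2<j_3$ (skip $j_1$, store $j_2$, compare), each costing $O(n\log{n})$ states; or replace the comparator by a shift-register state which, while reading $j_2$, keeps (bits of $j_2$ read so far, bits of $j_1$ not yet compared, a three-valued prefix-comparison flag), which also totals $O(n\log{n})$ states since the two partial values always comprise $O(\log{n})$ bits together. With either repair your construction is correct and proves the lemma.
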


\begin{proof}
The desired 2dfa $N_n$ works as follows. On input $x$, by sweeping an input tape repeatedly, $N_n$ checks the following four statements. (i) $x$ is of the form $z_1\pair{\#^2}_2z_2\pair{\#^2}_2z_3\cdots \pair{\#^2}_2z_k$ for certain even-length strings $z_1,z_2,\ldots,z_k\in\{0,1\}^*$, where $\pair{\#^2}_{2}=1^4$. (ii) $k=n$. (iii) For each $i\in[k]$, $z_i$ is of the form $\pair{binary(i)\# w_1\# w_2\# w_3}_2$ for even-length strings $w_1,w_2,w_3\in\{0,1\}^{\leq 4\ceilings{\log(n+1)}} \cup\{\bot\}$. (iv) Each $w_k$ is of the form $binary^*(j_{i,k})$ for a certain $j_{i,k}\in[n]$.
By the definition of $N_n$, it is a simple 2dfa.
To check (i)--(ii), we need at most $4\ceilings{\log(n+1)}$ states to count the number up to $n$. To perform the checking of (iii), we need to sweep the tape $n$ times and, at each round $i$, we check whether $z_i$ is of the correct form using at most $O(\log{n})$ states. Therefore, $N_n$ requires $O(n\log{n})$ states. Since $N_n$ sweeps the input tape $n$ times, we need the runtime of  $O(n|x|)$.
\end{proof}

Formally, the family $3\mathcal{DSTCON} = \{(3\dstcon_n,\overline{3\dstcon}_n)\}_{n\in\nat}$ is defined as follows.

\s
Degree-3 Directed $s$-$t$ Connectivity Problem for Size $n$ $(3\dstcon_n,\overline{3\dstcon}_n)$:
\renewcommand{\labelitemi}{$\circ$}
\begin{itemize}\vs{-2}
  \setlength{\topsep}{-2mm}%
  \setlength{\itemsep}{1mm}%
  \setlength{\parskip}{0cm}%

\item Instance: an encoding $\pair{G}$ of a subgraph $G$ of the complete directed graph $K_n$ with vertices of degree (i.e., indegree plus outdegree) at most $3$.

\item Output: YES if there is a path from vertex $0$ to vertex $n-1$;  NO otherwise.
\end{itemize}
Notice that each instance $x$ to $(3\dstcon_n,\overline{3\dstcon}_n)$ must satisfy $m_{ver}(x)=n$. The family $3\mathcal{DSTCON}$ naturally corresponds to $(3\dstcon,m_{ver})$, and thus  we freely identify $(3\dstcon,m_{ver})$ with the family $3\mathcal{DSTCON}$.

\begin{lemma}\label{m-ver-ideal}
There is a constant $c>0$ such that $m_{ver}$ satisfies $m(x)\leq |x|\leq c m(x)\log{m(x)}$ for all inputs $x$ given to $3\dstcon$ with $|x|\geq2$. Thus, $m_{ver}$ is an ideal size parameter.
\end{lemma}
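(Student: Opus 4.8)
The plan is to read off both inequalities directly from the explicit binary encoding $\pair{G}$ introduced immediately before the statement, where $x=\pair{G}$ ranges over instances of $3\dstcon$ of vertex size $n=m_{ver}(x)$. I would establish the lower bound $m_{ver}(x)\le |x|$ and the upper bound $|x|\le c\,m_{ver}(x)\log m_{ver}(x)$ separately, each by an elementary length count, and then obtain ideality by matching these two bounds against the definition of an ideal size parameter.

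For the lower bound, recall that $\pair{G}=\pair{C_1\#^2 C_2\#^2\cdots\#^2 C_n}_2$, that each row code $C_i$ begins with $binary(i)$ and is therefore nonempty, and that the outer map $\pair{\cdot}_2$ replaces every symbol by a two-bit block and so never decreases length. Since the pre-encoding string already contains $n$ nonempty row codes, its length is at least $n$; hence $|x|=|\pair{G}|\ge n=m_{ver}(x)$. The doubling in fact gives a much larger slack, but $n\le|x|$ is all that is required here.

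For the upper bound, I would use that every field appearing in a row code has logarithmic length: by the remark following the definition of $\pair{G}$, there is a fixed constant $a_0>0$ with $|binary(i)|,|binary^*(j_{i,l})|\le a_0\log n+a_0$ for all relevant $i$ and $l$. Each $C_i$ is the concatenation of $binary(i)$, three fields $binary^*(j_{i,l})$, and three interior separators $\#$, so $|C_i|\le a_1\log n$ for a suitable constant $a_1$ uniformly in $i$; summing over the $n$ rows together with the $n-1$ separators $\#^2$ bounds the pre-encoding length by $a_2 n\log n$, and applying the length-doubling map $\pair{\cdot}_2$ yields $|x|\le 2a_2 n\log n=:c\,n\log n$.

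Combining the two estimates gives $m_{ver}(x)\le|x|\le c\,m_{ver}(x)\log m_{ver}(x)$ for every $x$ with $|x|\ge 2$, which is exactly the condition defining an ideal size parameter with the choices $c_1=1$, $c_2=c$, and $k=1$; thus $m_{ver}$ is ideal. Since every step is a routine count against a fully explicit encoding, no genuine obstacle arises; the only points needing care are tracking the constant factor contributed by the outer binary map $\pair{\cdot}_2$ and checking that the per-row bound $|C_i|\le a_1\log n$ holds uniformly in $i$, including the degenerate rows where some $j_{i,l}=\bot$ so that $binary^*(j_{i,l})$ is the empty string.
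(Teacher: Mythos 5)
Your proof is correct, and it is the same kind of elementary length count as the paper's, but the two arguments count different encodings. The paper proves the lemma for the abstract ``reasonable'' encoding $\pair{G,s,t}$ postulated in Section \ref{sec:main-result}: the lower bound $n\leq|x|$ follows from the requirement that the encoding carry information on all $n$ vertices, and the upper bound from the stipulated $O(n\log{n})$-bit budget together with the facts that $s,t$ take $O(\log{n})$ bits and the adjacency list has $O(n)$ entries. You instead read both bounds off the concrete encoding $\pair{G}=\pair{C_1\#^2C_2\#^2\cdots\#^2C_n}_2$ of Section \ref{sec:automata-3DSTCON}, tracking the factor $2$ from $\pair{\cdot}_2$ and the $O(\log{n})$-bit fields explicitly. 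Your version is fully self-contained (every constant is traceable to the stated format, including rows with $\bot$ entries) and is literally about the instances of the family $3\mathcal{DSTCON}$ to which the lemma is later applied, e.g.\ in the proof of Theorem \ref{uniform-3DSTCON-general}; the paper's version buys encoding-independence, i.e.\ the bound holds for any encoding meeting the ``reasonableness'' axioms rather than for one particular format. One shared, harmless caveat: for $m_{ver}(x)=1$ the right-hand side $c\,m_{ver}(x)\log{m_{ver}(x)}$ vanishes while $|x|\geq2$, so both arguments implicitly need $m_{ver}(x)\geq2$ (or $\log$ read as $\max\{1,\log\}$); this is an artifact of the statement, not a defect of either proof.
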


\begin{proof}
We fix any ``reasonable'' encoding $\pair{\cdot}$ discussed in Section \ref{sec:main-result}. Let $x=\pair{G,s,t}$ be such a ``reasonable'' encoding of any instance to $3\dstcon$. Let $G=(V,E)$ and set  $m_{ver}(x)=n$.  Clearly, $n\leq|\pair{G,s,t}|$ holds since the encoding $\pair{G,s,t}$ must contain the whole information on $n$
vertices of $G$.
Since $|V|=n$, both $s$ and $t$ are encoded into strings of length $O(\log{n})$. Note that the adjacency list of $G$ has only $O(n)$ entries. Thus, we can take an absolute constant $c>0$ for which $|\pair{G,s,t}|\leq cn\log{n}$. Therefore, any ``reasonable'' encoding of $G$ requires $O(n\log{n})$ bits.
\end{proof}

Next, we want to build an $\dl$-uniform family $\{N_n\}_{n\in\nat}$ of constant-branching simple 2nfa's that solves  $3\mathcal{DSTCON}$ for all valid instances. Recall that, for each index $n\in\nat$, $\Sigma_n$ denotes the set of all \emph{valid} encodings of input subgraphs of $K_n$ given to $(3\dstcon_n,\overline{3\dstcon}_n)$.

Our definition of the ``running time'' of a 2nfa (as well as a 2afa) in Section \ref{sec:model-two-way} reflects only accepting computation. In what follows,  we are concerned with the running time of rejecting computation as well. Given a set $C$ of inputs and a time bound $t$, we conveniently say that a 2nfa (as well as a 2afa) $M$ \emph{rejects all inputs in $C$ in $t(n,|x|)$ time} if, for any string $x\in C$, when $M$ rejects $x$, all computation paths of $M$ on $x$ terminate within $t(n,|x|)$ steps.

\begin{proposition}\label{construct-3DSTCON}
There exists a log-space computable function $g$ for which $g$ produces in $n^{O(1)}$ time from each $1^n$ an encoding $\pair{M_n}$ of a $3$-branching simple 2nfa $M_n$ of $O(n\log{n})$ states that solves $(3\dstcon_n,\overline{3\dstcon}_n)$
in $O(n|x|)$ time for all inputs $x$ in  $\Sigma_n=3\dstcon_n\cup\overline{3\dstcon}_n$.
Moreover, $M_n$ can reject all inputs $x$ outside of
$\Sigma_n$ in $O(n|x|)$ time.
\end{proposition}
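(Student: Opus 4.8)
The plan is to build $M_n$ in two sequential phases, both of which sweep the circular tape and branch (if at all) only at $\dollar$, so that the composite machine is automatically a \emph{simple} 2nfa. In the first phase $M_n$ runs the deterministic validity checker of Lemma \ref{check-graph-code}, which sweeps the tape $O(n)$ times using $O(n\log{n})$ states and decides whether the input lies in $\Sigma_n$. If the input is rejected there, $M_n$ halts and rejects; since this phase is deterministic, every computation path on an input outside $\Sigma_n$ terminates within $O(n|x|)$ steps, which already secures the ``moreover'' clause. If the input is accepted, $M_n$ resets its current-vertex register to the source $0$ and enters the second phase.

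The second phase guesses a path from $0$ to $n-1$, one edge per sweep. $M_n$ keeps in its finite control a mode flag, a current vertex $u\in[0,n-1]_{\integer}$, a branch index $\ell\in\{1,2,3\}$, and a bit-pointer $p\in[0,\ceilings{\log{n}}]_{\integer}$. At $\dollar$ it first accepts if $u=n-1$; otherwise it makes the only nondeterministic choice of the whole machine, picking $\ell\in\{1,2,3\}$ (this is the sole branching, so $M_n$ is $3$-branching and end-branching). It then sweeps from $\cent$, locating the block encoding the adjacency list of $u$ by comparing the stored value $u$, bit by bit via the pointer $p$, against the successive block labels $binary(i)$; upon reaching that block it skips to the $\ell$-th out-neighbor field, and if the field is $\bot$ (empty) the branch dies, while otherwise it overwrites $u$ with the value $binary^{*}(j_{u,\ell})$ read directly off the tape. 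The crucial point for the state bound is that $M_n$ \emph{never stores two full vertex labels at once}: finding the block costs only the extra factor contributed by $p$, and the new vertex is accumulated in place into the register vacated by the old one. Hence the register ranges over $O(n)$ values, the pointer over $O(\log{n})$ values, and $\ell$ together with the mode over $O(1)$ values, giving $O(n\log{n})$ states in all; reading two tape bits per encoded symbol adds only a constant overhead.

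For correctness, if some path from $0$ to $n-1$ exists then a shortest such path is simple and hence has at most $n-1$ edges, so $M_n$ can follow it by choosing the appropriate $\ell$ at each of at most $n-1$ sweeps and then accept, yielding an accepting computation of length $O(n|x|)$. Conversely, any accepting computation spells out a walk from $0$ to $n-1$, so acceptance implies a path; thus $M_n$ solves $(3\dstcon_n,\overline{3\dstcon}_n)$ within the claimed time on all inputs of $\Sigma_n$. I stress that on \emph{valid} instances one need not make every branch of the second phase terminate (branches tracing a cycle that avoids $n-1$ may loop forever), since ``solving'' demands only the correct accept/reject verdict; termination within $O(n|x|)$ is required only on inputs \emph{outside} $\Sigma_n$, and there the deterministic first phase already provides it.

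Finally, the uniformity function $g$ merely writes down $\pair{M_n}$ under the encoding scheme fixed in Section~\ref{sec:basic-characterize}. Because every transition of $M_n$ is an explicit, locally computable function of the tuple $(\text{mode},u,\ell,p)$ and of $n$, a deterministic machine can enumerate the states and emit their transitions using $O(\log{n})$ work space and $n^{O(1)}$ time, exactly as in the construction underlying Lemma \ref{log-space-computable}. The main obstacle throughout is meeting the $O(n\log{n})$ state budget, and the idea that makes it work is to avoid ever holding a ``from'' vertex and a ``to'' vertex simultaneously: comparing against the tape with a single bit-pointer, and overwriting the vertex register in place.
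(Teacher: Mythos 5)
Your construction coincides with the paper's own proof in all but one respect: the validity check from Lemma \ref{check-graph-code}, the one-edge-per-sweep nondeterministic walk that branches only at $\dollar$, and the in-place vertex register giving the $O(n\log{n})$ state bound are exactly the paper's ingredients. The difference is that you deliberately drop the termination mechanism, and the justification you give rests on a misreading of the statement. The paragraph immediately preceding Proposition \ref{construct-3DSTCON} introduces the notion of a 2nfa rejecting all inputs of a set $C$ ``in $t(n,|x|)$ time,'' meaning that \emph{every} computation path terminates within $t(n,|x|)$ steps, precisely so that the proposition's phrase ``solves $(3\dstcon_n,\overline{3\dstcon}_n)$ in $O(n|x|)$ time for all inputs $x$ in $\Sigma_n$'' bounds rejecting computations on $\overline{3\dstcon}_n$ as well as accepting ones on $3\dstcon_n$; the paper's own proof explicitly establishes that ``for any $x\in\overline{3\dstcon}_n$, all computation paths are rejected within $n$ rounds,'' and this is also how the proposition is re-cited later (Lemma \ref{3DSTCON-by-2linN}: the machine solves the problem in $O(n|x|)$ time for all valid inputs \emph{and} rejects all invalid inputs in $O(n|x|)$ time). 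Your machine fails this part of the claim: on a valid encoding $\pair{G}\in\overline{3\dstcon}_n$ of a graph with a directed cycle reachable from vertex $0$ but no path to vertex $n-1$, your phase 2 has computation paths that circle forever, so the machine does not reject within $O(n|x|)$ steps --- indeed within any time bound --- under the paper's notion.

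The paper closes this hole with a round counter: ``If $M_n$ tries to visit more than $n$ vertices, then we reject $x$ instantly,'' which kills every path after at most $n$ sweeps and yields the $O(n|x|)$ bound on all inputs of $\Sigma_n$. Your instinct about the state budget is exactly why this is the delicate point of the proposition: a counter up to $n$ maintained alongside the current-vertex register is a multiplicative, not an additive, cost, so fitting it (or some alternative termination device) within the advertised $O(n\log{n})$ states is the part that genuinely needs an argument. Your write-up avoids this difficulty rather than resolving it, and as a result it proves a weaker statement than the one asserted: everything except the time bound on rejecting computations over $\overline{3\dstcon}_n$.
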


\begin{proof}
Firstly, we intend to describe how to generate, for each index $n\in\nat^{+}$, an encoding $\pair{M_n}$ of a  3-branching simple 2nfa $M_n = (Q,\Sigma,\{\cent,\dollar\},\delta,q_0,Q_{acc},Q_{rej})$ that solves  $(3\dstcon_n,\overline{3\dstcon}_n)$ for all inputs in $\Sigma_n$.
Notice that our 2nfa has a circular tape and moves its tape head only to the right. Secondly, we will extend $M_n$ to reject all inputs not in $\Sigma_n$.

Let us choose any valid input $x$ to $(3\dstcon_n,\overline{3\dstcon}_n)$, which is an encoding $\pair{G}$ of a certain subgraph $G$ of $K_n$, in which each vertex has degree at most $3$.
We design $M_n$ so that it works \emph{round by round} in the following way.
\begin{quote}
In the first round, we set $v_0$ to be the vertex $0$ and we move the tape head rightward from $\cent$ to $\dollar$. Assume by induction hypothesis that, at round $i$ ($\geq0$), we have already chosen vertex $v_i$ and have moved the tape head to $\dollar$. Nondeterministically, we select an index $j\in\{1,2,3\}$ while scanning $\dollar$ and then deterministically search for a row indexed $i$ in an adjacency list of $G$. By moving the tape head only from the left to the right along the circular tape, we read the content of the $(i,j)$-entry of the list. If it is $\bot$, then reject $x$ immediately. Next, assume otherwise. If $v_{i+1}$ is the $(i,j)$-entry, then we update the current vertex from $v_i$ to $v_{i+1}$. As soon as we reach the vertex $n-1$, we immediately accept $x$ and halt.
If $M_n$ tries to visit more than $n$ vertices, then we reject $x$ instantly.
\end{quote}

At each round, $M_n$ requires $O(\log{n})$ states to find the location of an   $(i,j)$-entry since each vertex is expressed by $O(\log{n})$ bits. To execute the last part of the above procedure, we need to use a ``counter,''  which is implemented using  $O(\log{n})$ space. It is thus clear that the above procedure of generating $\{M_n\}_{n\in\nat}$ requires space $O(\log{n})$.
The obtained $M_n$ solves $(3\dstcon_n,\overline{3\dstcon}_n)$ for all inputs in $\Sigma_n$. This is because, for each $x\in 3\dstcon_n$, $M_n$ enters an accepting state along a certain accepting computation path within $n$ rounds, and, for any $x\in\overline{3\dstcon}_n$, all computation paths are rejected within $n$ rounds. Since $M_n$ sweeps the input tape only once during each round, $M_n$ on $x$ halts in $O(n|x|)$ steps.

The $3$-branching property of $M_n$ comes from the fact that $M_n$ makes only at most $3$ nondeterministic choices while scanning $\dollar$. Since $M_n$ always sweeps the tape from the left to the right, it must be simple.
We further modify $M_n$ so that it rejects all inputs not in $\Sigma_n$. This modification can be done by combining $M_n$ with an $O(n\log{n})$-state simple 2dfa, obtained by Lemma \ref{check-graph-code}, that checks whether or not a given input $x$ belongs to $\Sigma_n$ in $O(n|x|)$ time.
Therefore, the total number of inner states of the final machine is  $O(n\log{n})$ and its runtime is $O(n|x|)$, as requested.
\end{proof}


Let us consider the converse of Proposition \ref{construct-3DSTCON}.

\begin{proposition}\label{2nfa-to-subgraphs}
Let $c\in\nat^{+}$ be any constant and define $e=\ceilings{\log(c+1)}$. Let $d(n) = 2^e(n+2)+2^{e+1}-1$ for any $n\in\nat$. There exists a function $g$ such that, for every $c$-branching simple 2nfa $M$ with $n$ states, $g$ takes an input of the form $\pair{M}\# x$ and outputs an encoding $\pair{G_x}$ of a degree-3 subgraph $G_x$ of $K_{d(n)}$  satisfying that $M$ accepts $x$ if and only if $\pair{G_x}\in 3\dstcon_{d(n)}$. Moreover, $g$ is computed by a certain $n^{O(1)}$-state simple 2dfa with a write-only output tape running in $n^{O(1)}\cdot O(|x|)$ time for all inputs $x$.
\end{proposition}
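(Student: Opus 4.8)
The plan is to run the construction of Proposition \ref{construct-3DSTCON} \emph{in reverse}: rather than turning a graph into a simple 2nfa, I would extract from the computation of a $c$-branching simple 2nfa $M$ on $x$ a degree-$3$ graph whose $0$-to-$(d(n)-1)$ reachability records exactly the acceptance of $M$. The crucial feature I would exploit is that $M$ is \emph{simple}, hence sweeping and end-branching: between two successive visits to the cell holding $\dollar$ its computation is entirely \emph{deterministic} (it reads $\cent x_1\cdots x_{|x|}$ moving only rightward on the circular tape), and every nondeterministic choice is made while scanning $\dollar$. Consequently the only ``meaningful'' surface configurations are those in which the head scans $\dollar$, and these are indexed simply by the inner state. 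First I would define the \emph{sweep function} $\sigma_x\colon Q\to Q\cup\{\mathrm{ACC},\mathrm{REJ}\}$, where $\sigma_x(p)$ is the outcome of starting a sweep at $\cent$ in state $p$: either the state reached upon arrival at $\dollar$, or $\mathrm{ACC}$ (resp.\ $\mathrm{REJ}$) if an accepting (resp.\ rejecting) state is entered mid-sweep. Since the sweep is deterministic, $\sigma_x$ is a genuine function, evaluable by a single left-to-right pass over $x$.

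Using $\sigma_x$ I would build a \emph{macro-graph} $H_x$ on the vertex set $Q\cup\{\text{START},\text{ACCEPT}\}$ of only $n+2$ vertices, independent of $|x|$. The START vertex carries the single edge to $\sigma_x(q_0)$; from a vertex $t$ (read as ``at $\dollar$ in state $t$'') I place, for each $(p_i,+1)\in\delta(t,\dollar)$, an edge to $\sigma_x(p_i)$, redirected to ACCEPT when that sweep accepts and omitted when it rejects, together with an edge $t\to\text{ACCEPT}$ whenever $t\in Q_{acc}$. By the structural reduction above, a downward induction on the number of completed sweeps shows that $M$ accepts $x$ iff ACCEPT is reachable from START in $H_x$. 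In particular every vertex of $H_x$ has out-degree at most $c$, so only the out-branching need be tamed.

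The second step is to realize $H_x$ as a degree-$3$ subgraph $G_x$ of $K_{d(n)}$ preserving $0$-to-$(d(n)-1)$ reachability. I would assign to each of the $n+2$ macro-vertices a block of $2^e$ vertices ($e=\ceilings{\log(c+1)}$, so $2^e\ge c+1$), threading its $\le c$ outgoing branches through a depth-$e$ binary \emph{splitter} tree inside the block so that every copy has out-degree at most $2$, and routing incoming edges into the block's slots so that every copy keeps total degree at most $3$; one extra complete binary tree of $2^{e+1}-1$ vertices supplies the START/ACCEPT routing and spare slots, giving exactly $d(n)=2^e(n+2)+2^{e+1}-1$. I would label the START-block root $0$ and the ACCEPT-block sink $d(n)-1$ and leave unused slots isolated, so that reachability is preserved since each block is internally connected from entry to exits.

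Finally, for computability I would present $g$ directly as a \emph{simple} (sweeping) 2dfa with a write-only output tape on input $\pair{M}\#x$, so that its state count depends only on $n$. Each edge of $G_x$ is fixed by (i) the branch structure read off $\pair{M}$ and (ii) one value of $\sigma_x$, computed by simulating a single deterministic sweep of $M$ over the $x$-part while holding $M$'s current state in the finite control ($O(\log n)$ bits, i.e.\ $n^{O(1)}$ states) and consulting $\pair{M}$ for each transition; emitting the $O(cn)$ edges costs one sweep each, giving running time $n^{O(1)}\cdot O(|x|)$, with the layout arithmetic handled in log space in the spirit of Lemma \ref{log-space-computable}. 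I expect the main obstacle to be precisely the degree-$3$ realization: because $\sigma_x$ is \emph{many-to-one}, a macro-vertex can acquire in-degree as large as $\Theta(n)$, so the delicate point is to show that such incoming edges can be distributed \emph{within} the fixed $2^e$-sized blocks (rather than through a $\Theta(\log n)$-deep merging tree) while keeping every vertex of degree at most $3$ and still fitting inside the budget $d(n)$. Here I would try to exploit the fact that these edges all arise from the deterministic sweep, so their sources can be collected along a shared path inside the target block; nailing this exact vertex count, and checking that the resulting labelling is producible by a single-sweep 2dfa, is where the real work lies.
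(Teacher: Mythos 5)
Your construction is, at its core, the same as the paper's: the paper also exploits sweeping plus end-branching to decompose a run of $M$ into deterministic sweeps (your $\sigma_x$) punctuated by nondeterministic choices at $\dollar$, builds a graph whose vertices are states tagged ``at $\cent$'' and ``at $\dollar$'' (it keeps these as two layers where you compose each choice with the following sweep --- a cosmetic difference), bounds out-degree with exactly the same depth-$e$ binary splitter blocks of $2^e$ vertices, and computes $g$ by a simple sweeping 2dfa that re-simulates one deterministic sweep of $M$ over the $x$-part per emitted edge. Your accounting of the budget $2^e(n+2)+2^{e+1}-1$ is, if anything, cleaner than the paper's.

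The step you leave open --- taming in-degree --- is a genuine gap in your writeup, but two things are worth knowing. First, the paper does not close it either: its proof introduces only the out-splitters $E_j$ and then flatly asserts that every vertex has degree at most $3$, even though the sweep map and the choice map are many-to-one, so both its $\cent$-layer and $\dollar$-layer vertices can receive $\Theta(n)$ parallel in-edges, exactly as you fear. Second, the fix is not to squeeze $\Theta(n)$ edges into a constant-size block (impossible at degree $3$, as you suspect) but to \emph{serialize} them. Observe that every edge entering a macro-vertex $v$ originates at a vertex whose \emph{unique} outgoing edge points to $v$: a $\cent$-vertex in the case of sweep edges, a splitter leaf in the case of choice edges. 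Replace the star of parallel edges into $v$ by a directed path that threads through those source vertices in a fixed order (say, by state index) and ends at $v$. Reachability from vertex $0$ to vertex $d(n)-1$ is unchanged --- each source still reaches $v$, and anything newly reachable along the chain leads only to $v$, which was reachable anyway --- while afterwards every vertex has in-degree at most $2$ and out-degree at most $1$ (in-degree $1$ and out-degree $2$ at splitter-tree nodes), hence total degree at most $3$, using \emph{no} new vertices, so the budget $d(n)$ stands. The chaining is also producible by your sweeping 2dfa: linking consecutive equal-target sources only requires recomputing $\sigma_x$, one extra deterministic sweep per comparison, staying within $n^{O(1)}$ states and $n^{O(1)}\cdot O(|x|)$ time. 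With this one observation added, your proof is complete --- and, unlike the paper's as written, explicit about the degree bound.
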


\begin{proof}
Let $c,n>0$ and let $M$ be any $c$-branching $n$-state simple 2nfa $M$ over  an alphabet $\Sigma$. Note that $|\pair{M}|=O(n\log{n})$. Let $e=\ceilings{\log(c+1)}$ and define  $d(n) = 2^e(n+2)+2^{e+1}-1$ for all indices $n\in\nat$.
We first modify $M$ so that it has a unique accepting state.
This is done as described in the proof of
Proposition \ref{ptimespace-character}.
We further modify $M$ so that it enters such an accepting state only at scanning $\dollar$. This modification can be done by turning the current accepting state into a non-halting state, moving its tape head to $\dollar$, and entering a new accepting state. We denote by $M'$ the obtained 2nfa.
Note that $M'$ is $(c+1)$-branching and has $n+3$ states.
Let $M' =(Q,\Sigma,\{\cent,\dollar\},\delta,q_0,Q_{acc},Q_{rej})$ with $|Q|=n+3$.
For convenience, we assume that $\Sigma=\{\sigma_1,\ldots,\sigma_l\}$  with $l=|\Sigma|$, $Q = \{q_0,q_1,\ldots,q_{n+2}\}$, and $Q_{acc}=\{q_{n+2}\}$.

Let $x\in\Sigma^*$ be any input to $M'$.
Hereafter, we want to describe how to build the desired subgraph $G_x$ of $K_{d(n)}$ by sweeping repeatedly an input tape that contains $\pair{M'}\#x$. We call by a \emph{round} a single traversal of the input tape from $\cent$ to $\dollar$. Clearly, $|\pair{M'}|$ remains $O(n\log{n})$.
A straightforward idea of transforming a computation tree of $M'$ on $x$ into $G_x$ does not work because the number of surface configurations of $M'$ on $x$ is $|Q|(|x|+2)$. Instead, our subgraph $G_x = (V_x,E_x)$ imitates a succinct version of a computation graph of $M'$ on $x$.

We use vertex labels of the form $\pair{k,j,r}$ for $k\in\{0,1\}$, $j\in[0,n+2]_{\integer}$, and $r\in[0,2^{e}-1]_{\integer}$. To each label $\pair{k,j,r}$, we assign the number $2^{e}j+2r+k$.
Since $\pair{k,j,r}\leq d(n)$ holds, it suffices to set $V_x=[0,d(n)-1]_{\integer}$. Intuitively, vertex $\pair{1,j,0}$ represents a situation where $M'$ is in state $q_j$ scanning $\dollar$, except for the case of $j=n+2$, whereas vertex $\pair{0,j,0}$ indicates that $M'$ is scanning $\cent$ in state $q_j$.
For each index $j\in[0,n+1]_{\integer}$, we assign $q_j$ to two vertices $\pair{0,j,0}$ and $\pair{1,j,0}$ as labels and assign $q_{n+2}$ to
vertex $\pair{0,n+2,0}$.

In round $j$ ($\geq 0$), assume that $M'$ is in state $q_j$ scanning $\cent$.
After reading the string $\cent x$, if $M'$ deterministically enters state $q_k$ and also moves its tape head to $\dollar$, then we include to $G_x$ an edge from $\pair{0,j,0}$ to $\pair{1,k,0}$.
To make every vertex of $G_x$ have degree at most $3$, we need to introduce a subgraph, called $E_j$, defined as follows. The subgraph $E_j$ consists of $2^e$ vertices labeled by $\pair{1,j,0},\pair{1,j,2},\ldots,\pair{1,j,2^e-1}$, and it has edges from $\pair{1,j,r}$ to both $\pair{1,j,2r}$ and $\pair{1,j,2r+1}$ for each index $r\in[0,2^e-1]_{\integer}$.
In scanning $\dollar$, if $\delta(q_k,\dollar)$ is a set $\{(q_{i_1},+1),(q_{i_2},+1),\ldots,(q_{i_{c+1}},+1)\}$ for certain $c+1$ indices $i_1,i_2,\ldots,i_{c+1} \in [0,n+1]_{\integer}$, then we include a subgraph $E_k$ to $G_x$ and also add $c+1$ edges $(\pair{1,k,2^{e-1}},\pair{0,i_1,0}), (\pair{1,k,2^{e-1}+1},\pair{0,i_2,0}), \ldots, (\pair{1,k,2^{e-1}+c},\pair{0,i_{c+1},0})$.
The degree of each vertex is at most $3$, and thus $G_x$ is a degree-3 subgraph of $K_{d(n)}$. Although $G_x$ depends on $x$, the vertex size of $G_x$ is independent of $x$ but $|Q|$.
The number of rounds is at most $|Q|$. In each round, we need to read each data $(q_{i_t},+1)$ from the list by sweeping the tape and produces its corresponding edge; thus, we require time $(c+1)|x|\cdot O(\log{nd(n)})$, which equals $O(|x|\log{n})$.
By the definition of $G_x$, it follows that $M'$ accepts $x$ by entering state $q_{n+1}$ if and only if $G_x$ has a path from vertex $\pair{0,0,0}$ to vertex $\pair{0,n+2,0}$.
We thus obtain $|\pair{G_x}|=O(d(n)\log{d(n)})$. Since $d(n)=O(n)$, it also follows that  $|\pair{G_x}|=O(n\log{n})$.

Finally, we define $g$ to be a function that takes an input of the form $\pair{M'}\#x$ and produces $\pair{G_x}$ on an output tape.
Since the above procedure is deterministic and can be done by sweeping the input tape $n$ times as well as generating extra vertices and edges at each round, $g$ must be computed by a certain $n^{O(1)}$-state simple 2dfa running in $n^{O(1)}\cdot O(|x|)$ time for any input $x$.
\end{proof}

\section{Proofs of Theorems \ref{3DSTCON-char-uniform} and \ref{3DSTCON-char-nonunif}}\label{sec:main-proof}

We are ready to give the desired proofs of Theorems \ref{3DSTCON-char-uniform}--\ref{3DSTCON-char-nonunif} in the following  two  subsections.
In Section \ref{sec:generalization}, we will prove Theorems \ref{3DSTCON-char-uniform} and \ref{3DSTCON-char-nonunif}(1)--(2) in a more general setting. In Section \ref{sec:change-state-complexity}, we will show the remaining part (3) of Theorem  \ref{3DSTCON-char-nonunif}.

\subsection{Generalizations to PTIME,SPACE($\cdot$)}\label{sec:generalization}

Let us recall Theorems \ref{3DSTCON-char-uniform} and \ref{3DSTCON-char-nonunif}(1)--(2), which are concerned only with the parameterized complexity class $\psublin$. In fact, it is possible to prove more general theorems. We intend to present such theorems (Theorems  \ref{uniform-3DSTCON-general} and \ref{nonunif-general-3DSTCON}) for a general   parameterized complexity class $\ptimespace(s(x,m(x)))$ defined in Section \ref{sec:sub-linear-space}, which is the union of all classes  $\timespace(p(|x|),s(x,m(x)))$ for any positive polynomial $p$.

A set $\FF$ of functions $\ell:\nat\to\nat^{+}$ is said to be \emph{logarithmically saturated} if, for every function $\ell\in\FF$ and every constant $c>0$, there are two functions $\ell',\ell''\in\FF$ such that $\ell(\ceilings{cn\log{n}})\leq \ell'(n)$ and $c\ell(n)\log{n} \leq \ell''(n)$ for all numbers $n\in\nat$.

\begin{theorem}\label{uniform-3DSTCON-general}
Let $\FF$ denote an arbitrary nonempty logarithmically-saturated set and assume that every function in $\FF$ is polynomially bounded.
The following three statements are logically equivalent.
\renewcommand{\labelitemi}{$\circ$}
\begin{enumerate}\vs{-2}
  \setlength{\topsep}{-2mm}%
  \setlength{\itemsep}{1mm}%
  \setlength{\parskip}{0cm}%

\item There exists a function $\ell\in\FF$ such that $(3\dstcon,m_{ver})$ is in $\bigcup_{m}\ptimespace(\ell(m(x)))$, where $m$ ranges over all log-space size parameters.

\item For any constant $c>0$, there is a function $\ell\in\FF$ such that, for any constant $e>0$, every $\dl$-uniform family of $c$-branching simple 2nfa's with at most  $en\log{n}+e$ states is converted into another $\dl$-uniform family of  $n^{O(1)}$-state $O(\ell(n))$-narrow 2afa's that agree with them on all inputs.

\item For each constant $c\in\nat^{+}$, there are a function $\ell\in\FF$ and a log-space computable function $f$ such that
    $f$ takes an input of the form $\pair{M}$ for a $c$-branching $n$-state simple 2nfa $M$ and $f$ produces an encoding $\pair{N}$ of another $n^{O(1)}$-state $O(\ell(n))$-narrow 2afa $N$ that agrees with $M$ on all inputs.
\end{enumerate}\vs{-2}
Furthermore, fixing $c$ to $3$ in the above statements does not change their logical equivalence.
\end{theorem}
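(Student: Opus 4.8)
The plan is to establish the cycle of implications $(2)\THEN(1)$, $(1)\THEN(3)$, and $(3)\THEN(2)$, using the two automata characterizations of Section \ref{sec:basic-characterize} as black boxes and reading off the needed growth-rate conversions from the two logarithmic-saturation conditions (condition (I), $\ell(\lceil cn\log n\rceil)\le\ell'(n)$, to pass between the input-length scale and the parameter scale, and condition (II), $c\,\ell(n)\log n\le\ell''(n)$, to absorb the logarithmic overhead of the breadth-first simulation).

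First I would prove $(2)\THEN(1)$. Apply Proposition \ref{construct-3DSTCON} to obtain a $\dl$-uniform family of $3$-branching simple 2nfa's with $O(n\log n)$ states that solves $(3\dstcon_n,\overline{3\dstcon}_n)$; since this family has at most $e_0n\log n+e_0$ states for a fixed $e_0$, Statement (2) with $c=3$ and $e=e_0$ converts it into a $\dl$-uniform family $\{N_n\}$ of $n^{O(1)}$-state $O(\ell(n))$-narrow 2afa's solving the same promise problems, for some $\ell\in\FF$. Feeding $\{N_n\}$ (with a dummy index $l$) into Proposition \ref{ptimespace-character}(2), where I take $t$ to be a suitable polynomial so that the $n^{O(1)}$ state bound is dominated by $c_1t(l)\ell(n)$ (using $n=m_{ver}(x)\le|x|=l$ from Lemma \ref{m-ver-ideal}), places $(3\dstcon,m_{ver})$ in $\timespace(|x|^{O(1)},\,\ell(m_{ver})\log|x|+O(\log|x|))$. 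Finally Lemma \ref{m-ver-ideal} gives $\log|x|=O(\log m_{ver})$, and condition (II) yields $\ell''\in\FF$ with $c\,\ell(n)\log n\le\ell''(n)$; hence the space is $O(\ell''(m_{ver}))$ and $(3\dstcon,m_{ver})\in\ptimespace(\ell''(m_{ver}))$, which is Statement (1).

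Next comes $(1)\THEN(3)$, the technical heart. From Statement (1), Proposition \ref{ptimespace-character}(1) furnishes a $\dl$-uniform family $\{A_k\}$ of $k^{O(1)}$-state $O(\ell(k))$-narrow 2afa's solving $(3\dstcon_k,\overline{3\dstcon}_k)$. Given a $c$-branching $n$-state simple 2nfa $M$, Proposition \ref{2nfa-to-subgraphs} supplies a reduction $g$, computed by an $n^{O(1)}$-state simple 2dfa $R$, producing $\pair{G_x}$ with $d(n)=O(n)$ vertices and with $M$ accepting $x$ iff $\pair{G_x}\in 3\dstcon_{d(n)}$. I would then build the target 2afa $N$ as the \emph{composition} of $A_{d(n)}$ with $R$: $N$ simulates $A_{d(n)}$ on the virtual tape $\pair{G_x}$, and whenever $A_{d(n)}$ inspects the symbol at a virtual position $p$, $N$ runs $R$ on $\pair{M}\#x$ (with $\pair{M}$ hard-wired into the finite control and $x$ on the real tape) to completion, recording the $p$-th output symbol. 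The two design choices that make this work are: (i) since $|\pair{G_x}|=O(n\log n)$, the virtual head position $p$ is stored in $N$'s state while $N$'s real head is parked at $\cent$ between symbol requests; and (ii) running $R$ to completion on every request makes each retrieval subroutine take the same deterministic number of steps across all branches, so the computation graph stays $\{\forall,\exists\}$-leveled. Labelling the retrieval steps existential, the only $\forall$-levels of $N$ are inherited from $A_{d(n)}$, so $N$ is $O(\ell(d(n)))$-narrow with $n^{O(1)}$ states; condition (I) (with monotonicity and $d(n)=O(n)$) then replaces $\ell(d(n))$ by some $\ell'(n)$ with $\ell'\in\FF$. Since $\{A_k\}$ is $\dl$-uniform, $R$ is log-space computable, and the product construction is a log-space operation on encodings, the map $\pair{M}\mapsto\pair{N}$ is the desired log-space computable $f$.

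Finally $(3)\THEN(2)$ is routine: given a $\dl$-uniform family $\{M_n\}$ of $c$-branching simple 2nfa's with at most $en\log n+e$ states, I compose the log-space DTM witnessing $\dl$-uniformity with the log-space function $f$ of Statement (3) to obtain a $\dl$-uniform family $\{f(\pair{M_n})\}$ of $n^{O(1)}$-state 2afa's agreeing with $\{M_n\}$. Each member is $O(\ell_0(en\log n+e))$-narrow for the $\ell_0\in\FF$ of Statement (3); the saturation conditions let me dominate this, uniformly in $e$, by a single $\ell\in\FF$ (exactly as one exponent $\varepsilon'<1$ absorbs $(en\log n)^{\varepsilon}$ for every $e$ in the concrete family $\{n^{\varepsilon}\}$), giving Statement (2). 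Fixing $c=3$ changes nothing, since $(2)\THEN(1)$ already invokes Statement (2) only at $c=3$, while $(1)\THEN(3)\THEN(2)$ proves Statement (2) for every $c$. The main obstacle throughout is the composition in $(1)\THEN(3)$: arranging the symbol-retrieval subroutine so that narrowness is governed \emph{solely} by $A_{d(n)}$'s $\forall$-levels, while simultaneously keeping the state count polynomial and the computation graph properly leveled.
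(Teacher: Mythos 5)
Your proposal is correct and follows essentially the same route as the paper's own proof: the identical cycle $(1)\Rightarrow(3)\Rightarrow(2)\Rightarrow(1)$, with Proposition \ref{construct-3DSTCON} plus Proposition \ref{ptimespace-character}(2) and Lemma \ref{m-ver-ideal} handling $(2)\Rightarrow(1)$, the composition of the 2afa family from Proposition \ref{ptimespace-character}(1) with the reduction of Proposition \ref{2nfa-to-subgraphs} (bits of $\pair{G_x}$ produced on demand by the simple 2dfa computing $g$) handling $(1)\Rightarrow(3)$, and the saturation conditions absorbing the $O(n\log n)$ rescaling exactly where the paper invokes them. Your explicit treatment of the leveling and narrowness of the composed 2afa in $(1)\Rightarrow(3)$ is in fact more detailed than the paper's sketch of the same construction.
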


An advised version of Theorem \ref{uniform-3DSTCON-general} is given below.

\begin{theorem}\label{nonunif-general-3DSTCON}
Let $\FF$ be any nonempty polynomially-bounded logarithmically-saturated set of functions $\ell:\nat\to\nat^{+}$.
The statements 1 and 2 below are logically equivalent.
\renewcommand{\labelitemi}{$\circ$}
\begin{enumerate}\vs{-2}
  \setlength{\topsep}{-2mm}%
  \setlength{\itemsep}{1mm}%
  \setlength{\parskip}{0cm}%

\item  There is an $\ell\in\FF$ such that $(3\dstcon,m_{ver})$ is in  $\bigcup_{m}\ptimespace(\ell(m(x)))/\poly$, where $m$ ranges over all log-space size parameters.

\item For each constant $c\in\nat^{+}$,  there is a function $\ell\in\FF$ such that any $n$-state $c$-branching simple 2nfa can be converted into another $n^{O(1)}$-state $O(\ell(n))$-narrow 2afa that agrees with it on all inputs.
\end{enumerate}\vs{-2}
Furthermore, it is possible to fix $c$ to $3$ in the above statements.
\end{theorem}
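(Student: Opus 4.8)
The plan is to prove $(1)\Leftrightarrow(2)$ by routing both directions through the two automata characterizations already in hand: Propositions \ref{construct-3DSTCON} and \ref{2nfa-to-subgraphs}, which connect $c$-branching simple 2nfa's with the family $3\mathcal{DSTCON}$, and Proposition \ref{nonuniform-ptimespace}, the advised simulation connecting $O(\ell(\cdot))$-narrow 2afa's with $\timespace(\cdot)/\poly$. The argument parallels the uniform Theorem \ref{uniform-3DSTCON-general}, but every $\dl$-uniform construction is replaced by a nonuniform one and Proposition \ref{ptimespace-character} by its advised version Proposition \ref{nonuniform-ptimespace}. Two facts do all the quantitative work: the logarithmic saturation of $\FF$ (to absorb polynomial and quasi-linear blow-ups) and the idealness of $m_{ver}$ from Lemma \ref{m-ver-ideal} (to equate $\log|x|$ with $O(\log m_{ver}(x))$).

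For $(2)\Rightarrow(1)$ I would begin with Proposition \ref{construct-3DSTCON}, which for each $N$ yields a $3$-branching simple 2nfa $M_N$ of $O(N\log N)$ states that solves $(3\dstcon_N,\overline{3\dstcon}_N)$ on all inputs. Regarding $M_N$ as an $n'$-state machine with $n'=O(N\log N)$ and applying Statement (2) with $c=3$ produces an $O(\ell(n'))$-narrow 2afa $A_N$ of $(n')^{O(1)}=N^{O(1)}$ states that agrees with $M_N$ on every input; the first saturation clause $\ell(\lceil cN\log N\rceil)\le\ell'(N)$ rewrites the narrowness as $O(\ell'(N))$ for some $\ell'\in\FF$. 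Feeding the nonuniform family $\{A_N\}$ into Proposition \ref{nonuniform-ptimespace}(2) (with a polynomial time bound) places $(3\dstcon,m_{ver})$ in $\timespace(|x|^{O(1)},\,\ell'(m_{ver}(x))\log|x|+O(\log|x|))/\poly$, the advice remaining polynomial because $h(l)\le l$ for $m_{ver}$. Idealness of $m_{ver}$ turns $\log|x|$ into $O(\log m_{ver}(x))$, and the second saturation clause $c\ell'(n)\log n\le\ell''(n)$ collapses the space bound to $O(\ell''(m_{ver}(x)))$, which is Statement (1).

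For $(1)\Rightarrow(2)$, fix a constant $c$ and an $n$-state $c$-branching simple 2nfa $M$. Applying Proposition \ref{nonuniform-ptimespace}(1) to $(3\dstcon,m_{ver})\in\bigcup_m\ptimespace(\ell(m))/\poly$ and using idealness (valid instances of parameter $N$ have length $O(N\log N)$, so the length-indexed machines coalesce over a bounded range) yields a single $N^{O(1)}$-state, $O(\ell(N))$-narrow 2afa $A_N$ solving $3\mathcal{DSTCON}_N$. Proposition \ref{2nfa-to-subgraphs} supplies an $n^{O(1)}$-state simple 2dfa transducer $T_M$ sending each $x$ to an encoding $\langle G_x\rangle$ of a degree-$3$ subgraph of $K_{d(n)}$, $d(n)=O(n)$, with $M$ accepting $x$ iff $\langle G_x\rangle\in 3\dstcon_{d(n)}$ and, crucially, $|\langle G_x\rangle|=O(n\log n)$ independent of $|x|$. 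The target 2afa $N_M$ then runs $A_{d(n)}$ on the virtual input $\langle G_x\rangle$, regenerating each bit it reads on demand through the deterministic $T_M$; a deterministic front-end leaves the $\forall$-width unchanged and multiplies the state count only polynomially, so $N_M$ has $n^{O(1)}$ states, is $O(\ell(d(n)))$-narrow, and agrees with $M$ on all inputs. The first saturation clause rewrites $O(\ell(d(n)))=O(\ell(O(n)))$ as $O(\ell'(n))$, giving Statement (2) for this $c$. Finally, (2) for all $c$ trivially implies (2) for $c=3$, while the $(2)\Rightarrow(1)$ argument used only $c=3$; hence the cycle closes and the $c=3$ versions are equivalent to the full statements.

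The step I expect to be the main obstacle is the quantitative bookkeeping: showing that neither reduction nor the advised simulation drives the narrowness or the space bound out of $\FF$. Three inflations must be tamed---the quasi-linear state count $O(N\log N)$ and the vertex count $d(n)=O(n)$ from Propositions \ref{construct-3DSTCON}/\ref{2nfa-to-subgraphs}, and the extra $\log|x|$ factor that Proposition \ref{nonuniform-ptimespace}(2) attaches to the space---and each survives only through a specific saturation clause together with the idealness of $m_{ver}$, so I would track precisely which clause absorbs which blow-up. A closely related subtlety is arguing that, in $(1)\Rightarrow(2)$, the state count and $\forall$-width of $N_M$ depend on $n$ alone and not on $|x|$; this rests entirely on the simple (sweeping, end-branching) structure of $M$, which forces $G_x$ to have size $O(n\log n)$ regardless of $|x|$, so that an unbounded input length is consumed by head sweeps rather than by states.
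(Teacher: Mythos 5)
Your proposal is correct and follows essentially the same route as the paper's own proof: both directions run through Proposition \ref{nonuniform-ptimespace}, with Proposition \ref{construct-3DSTCON} plus padding to exactly $\ceilings{en\log n}$ states and logarithmic saturation for $(2)\Rightarrow(1)$, and Proposition \ref{2nfa-to-subgraphs} used as a deterministic front-end transducer feeding a ``virtual input'' $\pair{G_x}$ of length $O(n\log n)$ into the advised narrow 2afa family for $(1)\Rightarrow(2)$, closing the $c=3$ cycle in the same way. The quantitative absorptions you single out (the quasi-linear state blow-up, $d(n)=O(n)$, and the extra $\log|x|$ factor from the simulation) are exactly the ones the paper disposes of via the two saturation clauses together with Lemma \ref{m-ver-ideal}.
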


From Theorems  \ref{uniform-3DSTCON-general} and \ref{nonunif-general-3DSTCON},  we can derive Theorems \ref{3DSTCON-char-uniform} and \ref{3DSTCON-char-nonunif}(1)--(2).

\vs{-3}
\begin{proofof}{Theorems \ref{3DSTCON-char-uniform} and \ref{3DSTCON-char-nonunif}(1)--(2)}
Consider all functions $\ell(n)$ of the form $\ell(n)=n^{\varepsilon}$ for  certain constants $\varepsilon\in[0,1)$. We define $\FF$ to be the collection of all such functions. It is not difficult to show that  $\FF$ is logarithmically saturated. By the definition of $\psublin$, it follows that   $(3\dstcon,m_{ver})$ is in $\bigcup_{m}\ptimespace(\ell(m(x)))$ for a certain $\ell\in\FF$ iff $(3\dstcon,m_{ver})$ is in $\psublin$. Theorem  \ref{uniform-3DSTCON-general} then leads to Theorem \ref{3DSTCON-char-uniform}. In a similar way, we can prove Theorem \ref{3DSTCON-char-nonunif}(1)--(2) using Theorem \ref{nonunif-general-3DSTCON}.
\end{proofof}

Now, we return to Theorem \ref{uniform-3DSTCON-general} and describe its proof, in which we partly utilize propositions and lemmas given in Section \ref{sec:basic-characterize}.

\vs{-2}
\begin{proofof}{Theorem \ref{uniform-3DSTCON-general}}
For convenience, given a function $\ell$, we write $\CC_{\ell}$ for the union $\bigcup_{m}\ptimespace(\ell(m(x)))$ taken over all log-space
size parameters $m$. For each index $n\in\nat$, let $\Sigma_n=\{x\mid m_{ver}(x)=n\}$ and $\overline{\Sigma}_n=\{x\mid m_{ver}(x)\neq n\}$.
Take any nonempty set $\FF$ that satisfies the premise of the theorem. Note that the last line of the theorem (namely, fixing $c$ to be $3$) can be derived by examining the following three parts of our proof.

[1 $\Rightarrow$ 3]
Assume that $(3\dstcon,m_{ver})\in\CC_{\ell}$ for a certain
function $\ell\in\FF$.
Since $(3\dstcon,m_{ver})$ is in $\timespace(|x|^s,\ell(m_{ver}(x)))$ for a certain constant $s\geq1$, Proposition \ref{ptimespace-character}(1) yields an $\dl$-uniform family $\{D_{n,l}\}_{n,l\in\nat}$ of $O(\ell(n))$-narrow 2afa's having $O(l^s\ell(n))$ states such that $D_{m_{ver}(x),|x|}$ computes  $3\dstcon(x)$ in $O(|x|^s)$ time on all inputs  $x$ given to $3\dstcon$.
We further extend $D_{m_{ver}(x),|x|}$ to work in $O(|z|^s)$ time for all inputs $z$ (even if $z$ is not a valid encoding of a graph) by combining an $\dl$-uniform family of $O(n\log{n})$-state $O(n|x|)$-time simple 2dfa's, given by Lemma \ref{check-graph-code}, which solves $\{(\Sigma_n,\overline{\Sigma}_n)\}_{n\in\nat}$ for all inputs.

By the $\dl$-uniformity, there is a log-space computable function $h$ that produces $\pair{D_{n,l}}$ from input $1^n\#1^l$.
We then convert this function $h$ into an $\dl$-uniform family of simple 2dfa's with at most $n^{b}$ states running in $|x|^b$ time for a certain constant  $b\in\nat^{+}$ by Lemma \ref{log-space-computable}.  Let $d(n)= 2^a(n+2) + 2^{a+1} - 1$, where $a=\ceilings{\log(c+1)}$.
Let $c$ be any constant in $\nat^{+}$. Proposition \ref{2nfa-to-subgraphs} provides a function $g$ such that, for any $c$-branching simple $n$-state 2nfa $M$, $g$ transforms $\pair{M}\# x$ to an encoding $\pair{G_x}$ of a degree-3 subgraph $G_x$ of $K_{d(n)}$ satisfying  that $M$ accepts $x$ exactly when $\pair{G_x}\in 3\dstcon_{d(n)}$. Note that $|\pair{G_x}| = O(d(n)\log{d(n)}) = O(n\log{n})$.
Moreover, $g$ is computed by a certain $L$-uniform family $\{E_n\}_{n\in\nat}$ of simple 2dfa's running in $n^{O(1)}\cdot O(|x|)$ time using at most $n^{t}$ states for a certain fixed constant $t\in\nat^{+}$.

To show Statement (3), it suffices to design a log-space computable function $f$  that transforms every  $c$-branching $n$-state simple 2nfa $M$ to another equivalent 2afa $N$ of the desired type, because the log-space computability of $f$ guarantees the $\dl$-uniformity of the obtained family of 2afa's. Here, we define $f$ so that, given an encoding $\pair{M}$ of each $c$-branching simple 2nfa $M$ with $n$ states, it produces an encoding $\pair{N}$ of an appropriate 2afa $N$ that works as follows.
\begin{quote}
On input $x$, prepare $\pair{M}$, generate $\pair{G_x}$ from $\pair{M}\# x$ of length $O(|\pair{M}|+|x|)$ by applying $g$, and compute  $l=|\pair{G_x}|$, which is $O(n\log{n})$. From $1^{d(n)}\#1^l$, produce $\pair{D_{d(n),l}}$ by applying $h$, and run $D_{d(n),l}$ on the input $\pair{G_x}$. Note that we cannot actually write down $\pair{G_x}$ onto $N$'s work tape; however, since $g$ is computed by the family $\{E_n\}_{n\in\nat}$, we can produce any desired bit of $\pair{G_x}$ by running such 2dfa's.
\end{quote}
It then follows by the definition of $N$ that $\pair{G_x}\in 3\dstcon_{d(n)}$ iff $N$ accepts $x$. From this equivalence, we conclude that $M$ accepts $x$ iff $N$ accepts $x$. Let us choose a function  $\ell'\in\FF$ satisfying $\ell(d(n))\leq \ell'(n)$ for all $n\in\nat$. Note that the total number of $N$'s inner states is at most $O(n^bl^s\ell(d(n)))$,
which equals $O(n^{s+b+1}\ell'(n))$ because $l=O(n\log{n})$. Since $\ell'$ is polynomially bounded by our assumption, $N$ have $n^{O(1)}$ states.

[3 $\Rightarrow$ 2]
Assuming Statement (3), for each constant $c\geq1$, we obtain a function $\ell\in\FF$ and a log-space computable function $f$ that, from any $c$-branching $n$-state simple 2nfa, produces an $O(\ell(n))$-narrow 2afa with $O(n^s)$ states that agrees with it on all inputs, where $s\geq1$ is a constant. To show Statement (2), let us take any $\dl$-uniform family $\{M_n\}_{n\in\nat}$ of $c$-branching simple 2nfa's, each $M_n$ of which has at most $en\log{n}$ states for an arbitrarily fixed constant $e\geq 0$. By the $\dl$-uniformity of $\{M_{n}\}_{n\in\nat}$, we choose a log-space DTM $D$ that produces $\pair{M_n}$ from $1^n$ for every index  $n\in\nat$.

Let us consider the following deterministic procedure, in which we generate a new 2afa, called $N_n$, from $1^n$. This procedure naturally introduces its corresponding function, which we call $g$.
\begin{quote}
On input $1^n$, we apply $D$ and produce $\pair{M_n}$. We then apply $f$ to $\pair{M_n}$ and obtain its equivalent 2afa $\pair{N_n}$.
\end{quote}
Since $\{M_n\}_{n\in\nat}$ is $\dl$-uniform, so is $\{N_n\}_{n\in\nat}$.
Moreover, since $M_n$ has at most $en\log{n}$ states, $N_n$  is $O(\ell(en\log{n}))$-narrow and has $O((en\log{n})^s)$ states. The state complexity of $N_n$ is therefore $O(n^{s+1})$.
Note that, by our assumption, we can choose a function $\ell'\in\FF$ satisfying  $\ell(en\log{n})\leq \ell'(n)$ for all $n\in\nat$. It then follows that $N_n$ is $\ell'(n)$-narrow. Finally, since $f$ and $D$ are log-space computable, so is $g$.

[2 $\Rightarrow$ 1] We start with Statement (2). We fix $c=3$. For a certain function $\ell\in\FF$, any $\dl$-uniform family of $3$-branching simple 2nfa's with at most $e'n\log{n}+e'$ states for a constant $e'>0$ can be converted into another $\dl$-uniform family of   $n^{O(1)}$-state $O(\ell(n))$-narrow 2afa that agrees with it on all inputs. Let us consider $3\mathcal{DSTCON}$. By Proposition \ref{construct-3DSTCON}, we obtain an  $\dl$-uniform family $\MM$ of $3$-branching simple 2nfa's with at most $en\log{n}$ states that solve $(3\dstcon_n,\overline{3\dstcon}_n)$ for all inputs, where $e>0$ is an appropriate constant.
By applying our assumption to $\MM$, we obtain an $\dl$-uniform family $\{N_n\}_{n\in\nat}$ of 2afa's, each $N_n$ of which has $n^{O(1)}$ states, is $O(\ell(n))$-narrow, and solves $3\mathcal{DSTCON}$ for all inputs.

Let us set $N_{n,l}$ to be $N_n$ for each pair $n,l\in\nat$. Note that, for any given input $x$, $N_{m_{ver}(x),|x|}$ computes $3\dstcon_{m_{ver}(x)}(x)$ in time $(m_{ver}(x))^{O(1)}\cdot O(|x|) \subseteq |x|^{O(1)}$, as noted in Section \ref{sec:model-two-way}, because $N_{n,l}$ has $n^{O(1)}$ states and $m_{ver}(x)\leq|x|$.
Proposition \ref{ptimespace-character}(2) then guarantees that $(3\dstcon,m_{ver})$ belongs to $\timespace(|x|^{O(1)},\ell(m_{ver}(x))\log{|x|})$. By Lemma \ref{m-ver-ideal},   for a constant $a>0$, we obtain  $|x|\leq am_{ver}(x)\log{m_{ver}(x)}$ for all inputs $x$ given to $3\dstcon$ with $|x|\geq2$. It thus follows that $\ell(n)\log{|x|}\leq \ell(n)\log(an\log{n})\leq 2\ell(n)\log{n}$.
Pick a function $\ell'\in\FF$ satisfying $2\ell(n)\log{n}\leq \ell'(n)$ for all numbers $n\in\nat$. Therefore, $(3\dstcon,m_{ver})$ is included in $\CC_{\ell'}$.
\end{proofof}

Theorem \ref{uniform-3DSTCON-general} also leads to Proposition \ref{Barnes-translate} on top of the result of Barnes \etalc~\cite{BBRS98} on $(\dstcon,m_{ver})$.

\begin{proofof}{Proposition \ref{Barnes-translate}}
Let us define $\FF$ to be the set composed of all functions of the form $\ell(n) = n^{1-c/\sqrt{\log{n}}}$ for each constant $c>0$. Note that $\FF$ is logarithmically saturated. The main result of Barnes \etalc~\cite{BBRS98} states that $(\dstcon,m_{ver})$ belongs to $\ptimespace(\ell(m_{ver}(x)))$ for a certain function $\ell\in\FF$.
This implies that
$(3\dstcon,m_{ver})$ is also in $\ptimespace(\ell(m_{ver}(x)))$ since $3\dstcon$ is a restricted variant of $\dstcon$. By Theorem \ref{uniform-3DSTCON-general}(2), there is a function $\ell\in\FF$ such that every $\dl$-uniform family of $c$-branching simple 2nfa's having $O(n\log{n})$ states can be converted into another $\dl$-uniform family of equivalent 2afa's, which are $O(\ell(n))$-narrow and have $n^{O(1)}$ states.
\end{proofof}

The proof of Theorem \ref{nonunif-general-3DSTCON} is in essence similar to that of Theorem \ref{uniform-3DSTCON-general} except for the treatment of advice strings.

\begin{proofof}{Theorem \ref{nonunif-general-3DSTCON}}
In what follows, for any given function $\ell\in\FF$, we succinctly write $\CC_{\ell}$ for $\ptimespace(\ell(m_{ver}(x)))/\poly$. By examining the following proof, we can show the last line of the theorem.

[1 $\Rightarrow$ 2] The argument below is similar to the one for [1 $\Rightarrow$ 3] in the proof of Theorem \ref{uniform-3DSTCON-general}. Assume that $(3\dstcon,m_{ver})$ is in $C_{\ell}$ for a certain function $\ell\in\FF$.
Take two polynomials $s$ and $t$ for which $\timespace(t(|x|),\ell(m_{ver}(x)))/O(s(|x|))$ contains $(3\dstcon,m_{ver})$.
By setting $L=3\dstcon$ and $m=m_{ver}$, Proposition \ref{nonuniform-ptimespace}(1) yields a family $\{D_{n,l}\}_{n,l\in\nat}$ of $O(t(l)\ell(n)s(l))$-state $O(\ell(n))$-narrow 2afa's such that $D_{m_{ver}(x),|x|}$ computes $3\dstcon(x)$ in $O(t(|x|))$ time on all inputs $x$ given to $3\dstcon$.
Combining this with Lemma \ref{check-graph-code}, we can assume without loss of generality that $D_{m_{ver}(x),|x|}$ rejects all inputs $x$ with $m_{ver}(x)\neq n$ in $O(m_{ver}(x)|x|)$ time.
Let $d(n) = 4(n+2)+7$ and take a function $\ell'\in\FF$ satisfying $\ell(d(n))\leq \ell'(n)$ for all $n\in\nat$.

Toward Statement (2), take any $n$-state $c$-branching simple 2nfa $M$. Similarly to the proof of Theorem \ref{uniform-3DSTCON-general}, let $g$ transform $\pair{M}\#x$ to an encoding $\pair{G_x}$ of an appropriate degree-3 subgraph $G_x$ of $K_{d(n)}$.
Now, we define $N$ that works as follows.
\begin{quote}
On input $x$, produce $\pair{M}\# x$, generate $\pair{G_x}$ from $\pair{M}\#x$ by applying $g$, and simulate $D_{d(n),|\pair{G_x}|}$ on $\pair{G_x}$.
\end{quote}
By the above definition, $N$ agrees with $M$ on all inputs $x$.  Note that $N$ has $O(t(|\pair{G_x}|)\ell'(n)s(|\pair{G_x}|))$ states and is $O(\ell'(n))$-narrow. Since $|\pair{G_x}|=O(n\log{n})$, we conclude that $N$ has state complexity of $n^{O(1)}$.

[2 $\Rightarrow$ 1] Assume Statement (2). This implies that (*) every $3$-branching $n$-state simple 2nfa can be converted into another $n^{O(1)}$-state $O(\ell(n))$-narrow 2afa that agrees with it on all inputs. Hereafter, we  describe how to solve  $(3\dstcon,m_{ver})$ with help of advice in polynomial time using $O(\ell(m_{ver}(x)))$ space.
As an immediate consequence of Proposition \ref{construct-3DSTCON}, we can take a constant $e>0$ and an $\dl$-uniform family $\{M_n\}_{n\in\nat}$ of $3$-branching simple 2nfa's with at most $en\log{n}$ states recognizing $3\mathcal{DSTCON}$ in $O(n|x|)$ time on all inputs $x$.

In order to use Statement (*), we need to modify $M_n$ into a new 2nfa $M'_{k(n)}$, where $k(n)=\ceilings{en\log{n}}$, by appending  $k(n)-p(n)$ extra dummy states, which neither contribute to the essential behavior of $M_n$ nor lead to any halting state. As a result, $M'_{k(n)}$ has exactly $k(n)$ states and it computes $3\dstcon_n(x)$ in $O(n|x|)$ time on all inputs $x$. By our assumption, there are constants $s,t\in\nat^{+}$ and an  $O(\ell(k(n)))$-narrow 2afa $N_{k(n)}$ with at most $k(n)^{s}+s$ states that agrees with $M'_{k(n)}$ in $O(k(n)|x|)$ time on all inputs $x$.
Take another function $\ell'\in\FF$ satisfying $\ell(k(n)) \leq \ell'(n)$ for all $n\in\nat$.

The desired 2afa $N'_n$ is defined to be $N_{k(n)}$. Note that $N'_n$ has $O(n\log{n})$ states, is $O(\ell'(n))$-narrow, and runs in $(n|x|)^{O(1)}$ time. By setting $N'_{n,l}$ as $N'_{n}$ for any $l\in\nat$, we can apply Proposition \ref{nonuniform-ptimespace}(2) to $\{N'_{n,l}\}_{n,l\in\nat}$, and we then conclude that $(3\dstcon,m_{ver})$ belongs to $\ptimespace(\ell'(m_{ver}(x))\log{|x|})/|x|^{O(1)}$, where $x$ indicates a symbolic input.  Note that $\ell'(n)\log|x|\leq 2\ell'(n)\log{n}$ because, for a certain absolute constant $e>0$, $|x|\leq em_{ver}(x)\log{m_{ver}(x)}$ holds for all valid inputs $x$. Choose another function $\ell''\in\FF$ such that $\ell'(n)\log{n}\leq \ell''(n)$ for all $n\in\nat$. With this $\ell''$, $(3\dstcon,m_{ver})$ is included in $\ptimespace(\ell''(m_{ver}(x)))/\poly$, which equals $\CC_{\ell''}$.
\end{proofof}

\subsection{Relationships among State Complexity Classes}\label{sec:change-state-complexity}

To complete the proof of Theorem \ref{3DSTCON-char-nonunif}, we  still need the verification of the logical equivalence between Statements (1) and (3) of the theorem. To achieve this goal, we first show Proposition \ref{PsubLIN-to-2A},  which asserts a close relationship between $\psublin$ and $\bigcup_{\varepsilon\in[0,1)} 2\mathrm{A}_{narrow(n^{\varepsilon})}$.

\begin{proposition}\label{PsubLIN-to-2A}
Given a parameterized decision problem $(L,m)$, let  $\LL=\{(L_{n},\overline{L}_{n})\}_{n\in\nat}$ be the family induced from $(L,m)$. Assume that $m$ is an ideal log-space size parameter.
It then follows that $(L,m)\in\psublin/\poly$ iff $\LL\in \bigcup_{\varepsilon\in[0,1)} 2\mathrm{A}_{narrow(n^{\varepsilon})}$.
\end{proposition}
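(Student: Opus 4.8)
The plan is to obtain both directions from the advice-based translation in Proposition \ref{nonuniform-ptimespace}, using the ideal-ness of $m$ as the bridge between the two size measures $|x|$ and $n=m(x)$. The only work beyond Proposition \ref{nonuniform-ptimespace} is (i) to rewrite the space bound $m(x)^{\varepsilon}\ell(|x|)$ appearing in the definition of $\psublin$ as a clean function of the parameter $n$ alone, and (ii) to pass between the two-index family $\{M_{n,l}\}_{n,l\in\nat}$ produced by Proposition \ref{nonuniform-ptimespace} and the single-index family $\{N_n\}_{n\in\nat}$ demanded by the definition of $2\mathrm{A}_{narrow(\cdot)}$. I fix the constants $c_1,c_2>0$ and $k\geq1$ witnessing the ideal-ness of $m$, so that $c_1 n\leq|x|\leq c_2 n\log^k n$ whenever $m(x)=n$ and $|x|\geq2$ (the finitely many shorter inputs being absorbed into a handful of extra states).

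For the forward direction, suppose $(L,m)\in\timespace(p(|x|),m(x)^{\varepsilon}\ell(|x|))/O(q(|x|))$ with polynomials $p,q$, a constant $\varepsilon\in[0,1)$, and a polylog $\ell$. Using the upper ideal bound, I would set $\ell_0(n)=n^{\varepsilon}\ell(\ceilings{c_2 n\log^k n})$, so that $m(x)^{\varepsilon}\ell(|x|)\leq\ell_0(m(x))$ for every valid $x$ and $\ell_0(n)=n^{\varepsilon}\cdot\mathrm{polylog}(n)$. Proposition \ref{nonuniform-ptimespace}(1), applied with $t=p$, $s=q$ and space bound $\ell_0$, then yields a nonuniform family $\{M_{n,l}\}_{n,l\in\nat}$ of $O(\ell_0(n))$-narrow 2afa's with $O(p(l)\ell_0(n)q(l))$ states computing $L$. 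Because $m$ is ideal, for each fixed $n$ only the lengths $l\in[\ceilings{c_1 n},\floors{c_2 n\log^k n}]$ matter, so I would collapse the length index by building a single 2afa $N_n$ that first sweeps its tape deterministically to record its own length $l$ and then runs $M_{n,l}$. Over this $O(n\log^k n)$-sized range the summed state count stays $n^{O(1)}$ (here $p(l),q(l),\ell_0(n)$ are all $n^{O(1)}$), and the $\forall$-width remains $O(\ell_0(n))$ since the measuring phase is deterministic. Finally, fixing any $\varepsilon'$ with $\varepsilon<\varepsilon'<1$ gives $\ell_0(n)=O(n^{\varepsilon'})$, as the polylog factor is dominated by $n^{\varepsilon'-\varepsilon}$; hence $\LL\in 2\mathrm{A}_{narrow(n^{\varepsilon'})}\subseteq\bigcup_{\varepsilon\in[0,1)}2\mathrm{A}_{narrow(n^{\varepsilon})}$.

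For the reverse direction, I would start from $n^{O(1)}$-state, $O(n^{\varepsilon})$-narrow 2afa's $N_n$ solving $(L_n,\overline{L}_n)$, put $M_{n,l}=N_n$ for all $l$, and set $\ell(n)=\ceilings{n^{\varepsilon}}$. Using the lower ideal bound $n\leq|x|/c_1$, I would choose a polynomial $t$ with $t(l)$ dominating both the state count and the running time $n^{O(1)}\cdot O(l)$ of $N_n$, so that $\{M_{n,l}\}_{n,l}$ satisfies the hypothesis of Proposition \ref{nonuniform-ptimespace}(2). That proposition places $(L,m)$ in $\timespace(|x|^{O(1)}+(t(|x|)\ell(m(x))|x|)^2,\ell(m(x))\log{|x|}+\log{t(|x|)})/O(h(|x|)t(|x|)^2\log{t(|x|)})$, which I would simplify: the time bound collapses to $|x|^{O(1)}$; the space bound is $O(m(x)^{\varepsilon}\log{|x|})$, already of the form $m(x)^{\varepsilon}\ell'(|x|)$ for the polylog $\ell'(|x|)=\log|x|$; and the advice length is polynomial, since $h(l)\leq l^{O(1)}$ (as $m$ is a log-space size parameter) and $t$ is a polynomial. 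Thus $(L,m)\in\ptimespace(m(x)^{\varepsilon}\ell'(|x|))/\poly\subseteq\psublin/\poly$.

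The main obstacle is the collapsing of the length index in the forward direction: $2\mathrm{A}_{narrow(\cdot)}$ allows only one 2afa per parameter value, whereas Proposition \ref{nonuniform-ptimespace} intrinsically delivers one machine per pair $(n,l)$. It is precisely the ideal-ness of $m$ that bounds the span of relevant input lengths by a quasi-linear function of $n$, keeping the merged state complexity at $n^{O(1)}$ and the $\forall$-width at $O(\ell_0(n))$; the same hypothesis (its lower bound) is what lets the reverse direction bound $n$ by a polynomial in $l$. A secondary subtlety is absorbing the polylog factor of $\ell_0(n)$ into $n^{\varepsilon'}$, which is exactly where the strict inequality $\varepsilon<1$ is genuinely used.
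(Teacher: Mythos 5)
Your proposal is correct and takes essentially the same route as the paper's proof: both directions are derived from Proposition \ref{nonuniform-ptimespace}, with the ideality bounds $c_1 m(x)\leq |x|\leq c_2 m(x)\log^k{m(x)}$ bridging $|x|$ and $n=m(x)$, and the two-index family $\{M_{n,l}\}_{n,l\in\nat}$ is collapsed into a single machine per $n$ just as in the paper (the paper dispatches by computing $1^{m(x)}$ via the 2dfa's of Lemma \ref{log-space-computable} and checking $m(x)=n$, whereas you dispatch on the measured input length over the quasi-linear range---an equivalent device, since the promise guarantees all relevant lengths lie in that range). If anything, your explicit handling of the polylog factor $\ell(|x|)$ through $\ell_0$ and the slack $\varepsilon<\varepsilon'<1$ is more careful than the paper's own write-up.
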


\begin{proof}
For notational simplicity, we write $2\tilde{\mathrm{A}}$ to denote the union $\bigcup_{\varepsilon\in[0,1)} 2\mathrm{A}_{narrow(n^{\varepsilon})}$. Let us assume the premise of the proposition. Since $m$ is ideal, there are constants $c_1,c_2>0$ and $k\geq1$ for which $c_1 m(x)\leq |x|\leq c_2m(x)\log^k{m(x)}$ for all $x$ with $|x|\geq2$. Since $m$ is log-space computable, Lemma \ref{log-space-computable} provides a family $\{D_n\}_{n\in\nat}$ of 2dfa's of $n^{O(1)}$ states such that, for each input $x$, $D_{|x|}$ computes $1^{m(x)}$ in $|x|^{O(1)}$ time.

[If -- part] Assume that $\LL\in2\tilde{\mathrm{A}}$. Take a constant $\varepsilon\in[0,1)$ and a nonuniform family $\{M_{n}\}_{n\in\nat}$ of $O(n^{\varepsilon})$-narrow 2afa's with $n^{O(1)}$ states that solves $\LL$ for all inputs. For any pair $n,l\in\nat$, we define a machine $M_{n,l}$ to be $M_n$.
As noted in Section \ref{sec:model-two-way}, the height of any accepting computation tree of $M_{n,|x|}$ on input $x$ is upper-bounded by $n^{O(1)}\cdot (|x|+2)$. Therefore, $M_{m(x),|x|}$ computes $L(x)$ in time $m(x)^{O(1)}\cdot O(|x|)$, which is at most $|x|^{O(1)}$ since $c_1m(x)\leq |x|$.
By setting $t(n)=n^{O(1)}$ and $\ell(n)=n^{\varepsilon}$ in Proposition \ref{nonuniform-ptimespace}(2),  we conclude that $(L,m)$ belongs to $\timespace(|x|^{O(1)}m(x)^{\varepsilon}, m(x)^{\varepsilon}+O(\log{m(x)}))/O(h(m(x))|x|^{O(1)})$, which equals $\timespace(|x|^{O(1)},m(x)^{\varepsilon})/O(|x|^{O(1)})$, where $h(l)=\max_{x:|x|=l}\{m(x)\}$. The last complexity class is clearly included in $\psublin/\poly$.

[Only if -- part] Assume that $(L,m)\in\psublin/\poly$. Setting $t(n)=n^{O(1)}$ and $\ell(n)=n^{\varepsilon}$, Proposition \ref{nonuniform-ptimespace}(1) yields a nonuniform family $\{M_{n,l}\}_{n,l\in\nat}$ of $n^{O(1)}$-state $O(n^{\varepsilon})$-narrow 2afa's such that $M_{m(x),|x|}$ computes $L(x)$ in $|x|^{O(1)}$ time from each input $x$. Let $\ell(n)=c_2n\log^k{n}$. Since $|x|\leq \ell(m(x))$, we define a new 2afa $N_n$ as follows.
\begin{quote}
On input $x$, run $D_{|x|}$ on $x$ and obtain $1^{m(x)}$. If $m(x)=n$, then run $M_{n,|x|}$ on $x$; otherwise, reject $x$.
\end{quote}
This 2afa $N_{n}$ clearly solves the promise decision problem $(L_{n},\overline{L}_{n})$. This fact implies that $\LL\in 2\mathrm{A}_{narrow(n^{\varepsilon})}$, and therefore $\LL$ is in $2\tilde{\mathrm{A}}$.
\end{proof}

To the pair $(L,m)$ with $L=3\dstcon$ and $m=m_{ver}$, we can apply Proposition \ref{PsubLIN-to-2A}, because $3\mathcal{DSTCON}$ is induced from $(3\dstcon,m_{ver})$. As an immediate consequence of the proposition, we instantly obtain the following corollary.

\begin{corollary}\label{relation-3DSTCON}
$(3\dstcon,m_{ver})\in\psublin/\poly$ if and only if $3\mathcal{DSTCON}  \in  \bigcup_{\varepsilon\in[0,1)} 2\mathrm{A}_{narrow(n^{\varepsilon})}$.
\end{corollary}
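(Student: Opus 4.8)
The plan is to derive the corollary as a direct instantiation of Proposition \ref{PsubLIN-to-2A} with the particular choice $(L,m)=(3\dstcon,m_{ver})$. With this substitution the biconditional asserted by the proposition reads ``$(3\dstcon,m_{ver})\in\psublin/\poly$ iff the family induced from $(3\dstcon,m_{ver})$ lies in $\bigcup_{\varepsilon\in[0,1)}2\mathrm{A}_{narrow(n^{\varepsilon})}$,'' so the entire task reduces to checking that the two hypotheses of the proposition hold and that the induced family is (for the present purpose) $3\mathcal{DSTCON}$. Everything else is already packaged inside Proposition \ref{PsubLIN-to-2A} (which itself rests on Proposition \ref{nonuniform-ptimespace}), so no new machine construction is needed.

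First I would verify the two hypotheses on $m_{ver}$. Idealness is exactly Lemma \ref{m-ver-ideal}, which supplies a constant $c>0$ with $m_{ver}(x)\le|x|\le c\,m_{ver}(x)\log m_{ver}(x)$ for all $x$ with $|x|\ge2$; this is the defining inequality of an ideal size parameter (taking $c_1=1$, $c_2=c$, $k=1$). That $m_{ver}$ is a log-space size parameter follows because the number of vertices can be extracted from a reasonable encoding $\pair{G}$ by a deterministic log-space transducer that outputs $1^{m_{ver}(x)}$, e.g.\ by counting the row blocks $C_i$ in $\pair{G}$ using an $O(\log n)$-bit counter. Hence $m_{ver}$ is a total, log-space, ideal size parameter, as required.

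Next I would address the identification of $3\mathcal{DSTCON}$ with the family induced from $(3\dstcon,m_{ver})$, which is the one point deserving a moment's care. By definition the induced family is $\{(L_n,\overline{L}_n)\}_{n\in\nat}$ with $L_n=3\dstcon\cap\Sigma_n$ and $\overline{L}_n=\overline{3\dstcon}\cap\Sigma_n$, where $\Sigma_n=\{x\mid m_{ver}(x)=n\}$. The positive set $L_n$ is precisely $3\dstcon_n$ (valid $n$-vertex encodings possessing a path from $0$ to $n-1$), whereas the induced negative set $\overline{L}_n$ formally contains, in addition to the NO-instances $\overline{3\dstcon}_n$, every \emph{invalid} string of $m_{ver}$-value $n$. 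I would reconcile this with the promise formulation of $3\mathcal{DSTCON}$ by observing that membership in $\psublin/\poly$ and in $\bigcup_{\varepsilon\in[0,1)}2\mathrm{A}_{narrow(n^{\varepsilon})}$ are both insensitive to whether rejection is additionally demanded on invalid inputs: by Lemma \ref{check-graph-code} there is an $\dl$-uniform family of $O(n\log n)$-state simple 2dfa's (equivalently, a log-space validity check) that decides $\Sigma_n$ from $\overline{\Sigma}_n$ in $O(n|x|)$ time, and prepending or composing with this checker lets any solver for the promise version also reject invalid strings, and vice versa, without changing the narrowness or the sub-linear space/advice bounds. Thus the two formulations are interchangeable at the level needed, which is exactly the identification recorded in the text immediately before the corollary.

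Having checked the hypotheses and the identification, I would simply invoke Proposition \ref{PsubLIN-to-2A} to conclude $(3\dstcon,m_{ver})\in\psublin/\poly$ iff $3\mathcal{DSTCON}\in\bigcup_{\varepsilon\in[0,1)}2\mathrm{A}_{narrow(n^{\varepsilon})}$, which is the assertion of the corollary. I do not anticipate any genuine obstacle: all the technical content, namely the two-way translation between $O(n^{\varepsilon})$-narrow 2afa's and advised polynomial-time sub-linear-space computation, is subsumed by Proposition \ref{PsubLIN-to-2A}. The only step that is not entirely automatic is the reconciliation of the induced family with $3\mathcal{DSTCON}$ described above, and even that is routine once the validity check of Lemma \ref{check-graph-code} is brought in; the remaining verifications (idealness and log-space computability of $m_{ver}$) are immediate from Lemma \ref{m-ver-ideal} and the encoding.
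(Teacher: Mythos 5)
Your proposal is correct and takes essentially the same route as the paper: the paper obtains the corollary as an immediate instantiation of Proposition \ref{PsubLIN-to-2A} with $(L,m)=(3\dstcon,m_{ver})$, relying on Lemma \ref{m-ver-ideal} for idealness and on the declared identification of $3\mathcal{DSTCON}$ with the family induced from $(3\dstcon,m_{ver})$. Your additional reconciliation of the promise formulation with the induced family (handling invalid encodings via Lemma \ref{check-graph-code}) makes explicit a point the paper passes over silently, but it does not alter the argument.
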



Proposition \ref{construct-3DSTCON} yields a log-space computable function $g$ producing $\{M_n\}_{n\in\nat}$ for which each $M_n$ has $O(n\log{n})$ states, is both $3$-branching and simple, and solves $(3\dstcon_n,\overline{3\dstcon}_n)$ in $O(n|x|)$ time  for all ``valid'' inputs $x$ and rejects all ``invalid'' inputs in $O(n|x|)$ time. From this fact, we can conclude that $3\mathcal{DSTCON}$ belongs to $2\mathrm{qlinN}$.

\begin{lemma}\label{3DSTCON-by-2linN}
$3\mathcal{DSTCON}$ is in $2\mathrm{qlinN}$.
\end{lemma}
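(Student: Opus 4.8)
The plan is to read the required family of automata directly off Proposition \ref{construct-3DSTCON}, since that proposition already manufactures automata of exactly the type demanded by the definition of $2\mathrm{qlinN}$. In other words, I expect this lemma to be essentially an immediate corollary, so the proof should amount to checking that the parameters supplied by Proposition \ref{construct-3DSTCON} fit the defining clauses of $2\mathrm{qlinN}$.

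First I would recall what membership in $2\mathrm{qlinN}$ requires: constants $c,k\in\nat^{+}$ and $d,e>0$, together with a \emph{nonuniform} family $\{M_n\}_{n\in\nat}$ of $c$-branching 2nfa's, such that each $M_n$ has at most $dn\log^{k}{n}+e$ states and $M_n$ solves the promise decision problem $(3\dstcon_n,\overline{3\dstcon}_n)$. Next I would invoke Proposition \ref{construct-3DSTCON}, which (via the log-space computable function $g$) produces a family $\{M_n\}_{n\in\nat}$ of $3$-branching simple 2nfa's with $O(n\log{n})$ states, each $M_n$ solving $(3\dstcon_n,\overline{3\dstcon}_n)$ on every input of $\Sigma_n=3\dstcon_n\cup\overline{3\dstcon}_n$. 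I would then match parameters: take $c=3$ and $k=1$, and pick constants $d,e>0$ witnessing the $O(n\log{n})$ state bound, so that $M_n$ has at most $dn\log{n}+e$ states. Since every simple 2nfa is in particular a 2nfa, and since solving $(3\dstcon_n,\overline{3\dstcon}_n)$ in the sense of Proposition \ref{construct-3DSTCON} is exactly the solving condition in the definition of $2\mathrm{qlinN}$ (the definition imposes no constraint on strings outside $L_n\cup\overline{L}_n$), all defining clauses are satisfied.

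I do not expect a genuine obstacle here. The only two points worth a sentence of care are: (i) the class $2\mathrm{qlinN}$ asks merely for a \emph{nonuniform} witnessing family, so the stronger $\dl$-uniform (indeed log-space computable) construction afforded by $g$ is more than sufficient; and (ii) the additional guarantee in Proposition \ref{construct-3DSTCON} that $M_n$ also rejects inputs \emph{outside} $\Sigma_n$ is not even needed, because the promise problem $(3\dstcon_n,\overline{3\dstcon}_n)$ constrains behaviour only on $L_n\cup\overline{L}_n=\Sigma_n$. Assembling these observations yields $3\mathcal{DSTCON}\in 2\mathrm{qlinN}$ directly.
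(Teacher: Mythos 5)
Your proposal is correct and follows exactly the paper's route: the paper likewise derives Lemma \ref{3DSTCON-by-2linN} immediately from Proposition \ref{construct-3DSTCON}, noting that the $3$-branching simple 2nfa's with $O(n\log n)$ states solving each $(3\dstcon_n,\overline{3\dstcon}_n)$ witness membership in $2\mathrm{qlinN}$ with $c=3$ and $k=1$. Your two side remarks (that $\dl$-uniformity is stronger than the required nonuniformity, and that behaviour outside $\Sigma_n$ is irrelevant under the formal definition of ``solves'') are both accurate.
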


Combining Corollary \ref{relation-3DSTCON} and Lemma \ref{3DSTCON-by-2linN} with Theorem \ref{nonunif-general-3DSTCON}(2), we establish the following logical  equivalence.

\begin{proposition}\label{switch-2linN-3DSTCON}
$2\mathrm{qlinN} \subseteq \bigcup_{\varepsilon\in[0,1)} 2\mathrm{A}_{narrow(n^\varepsilon)}$ if and only if $3\mathcal{DSTCON} \in \bigcup_{\varepsilon\in[0,1)}2\mathrm{A}_{narrow(n^\varepsilon)}$.
\end{proposition}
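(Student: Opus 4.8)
The plan is to prove the two implications of Proposition \ref{switch-2linN-3DSTCON} separately, with the forward implication being essentially immediate and the backward implication carrying all the substance. Throughout, I write $2\tilde{\mathrm{A}}$ for $\bigcup_{\varepsilon\in[0,1)} 2\mathrm{A}_{narrow(n^\varepsilon)}$ and I fix $\FF=\{n^\varepsilon \mid \varepsilon\in[0,1)\}$, which, as observed in the derivation of Theorem \ref{3DSTCON-char-nonunif}, is logarithmically saturated. For the forward direction, assume $2\mathrm{qlinN}\subseteq 2\tilde{\mathrm{A}}$. Lemma \ref{3DSTCON-by-2linN} places $3\mathcal{DSTCON}$ inside $2\mathrm{qlinN}$, so the assumed inclusion at once gives $3\mathcal{DSTCON}\in 2\tilde{\mathrm{A}}$; nothing further is needed here, and neither Corollary \ref{relation-3DSTCON} nor Theorem \ref{nonunif-general-3DSTCON} is invoked.

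For the backward direction, assume $3\mathcal{DSTCON}\in 2\tilde{\mathrm{A}}$. First I would pass to the parameterized-complexity side: Corollary \ref{relation-3DSTCON} yields $(3\dstcon,m_{ver})\in\psublin/\poly$, which is exactly Statement (1) of Theorem \ref{nonunif-general-3DSTCON} for the set $\FF$ (using the identification of $\bigcup_{m}\ptimespace(\ell(m(x)))/\poly$ over $\ell\in\FF$ with $\psublin/\poly$, where the polylog factors are absorbed into the exponent by the saturation of $\FF$, just as in the derivation of Theorem \ref{3DSTCON-char-nonunif}(1)--(2)). Invoking the equivalence of Statements (1) and (2) in Theorem \ref{nonunif-general-3DSTCON} then hands me Statement (2): for every constant $c\in\nat^{+}$ there is an $\varepsilon\in[0,1)$ such that each $c$-branching simple $s$-state 2nfa can be converted into an equivalent $s^{O(1)}$-state $O(s^\varepsilon)$-narrow 2afa.

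Next I would take an arbitrary family $\LL\in 2\mathrm{qlinN}$ and let $\{M_n\}_{n\in\nat}$ be a witnessing family of $c$-branching \emph{simple} 2nfa's (as in the description of $2\mathrm{qlinN}$ in Section \ref{sec:main-result}) with $s_n\le dn\log^k n+e$ states solving $(L_n,\overline{L}_n)$. Applying the conversion from Statement (2) to each $M_n$, read as an $s_n$-state simple 2nfa, produces an equivalent 2afa $N_n$ with $s_n^{O(1)}$ states that is $O(s_n^\varepsilon)$-narrow. Since $s_n^{O(1)}=(dn\log^k n+e)^{O(1)}=n^{O(1)}$ (because $\log^k n=n^{o(1)}$), the state bound demanded by $2\mathrm{A}_{narrow(\cdot)}$ is met. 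For the narrowness I would pick any $\varepsilon'\in(\varepsilon,1)$ and use $s_n^\varepsilon=O\!\left(n^\varepsilon\log^{k\varepsilon}n\right)=O(n^{\varepsilon'})$, since $\log^{k\varepsilon}n=o(n^{\varepsilon'-\varepsilon})$; this is precisely the polylog-into-exponent absorption guaranteed by the logarithmic saturation of $\FF$. Hence each $N_n$ is an $n^{O(1)}$-state $O(n^{\varepsilon'})$-narrow 2afa solving $(L_n,\overline{L}_n)$, so $\LL\in 2\mathrm{A}_{narrow(n^{\varepsilon'})}\subseteq 2\tilde{\mathrm{A}}$. As $\LL\in 2\mathrm{qlinN}$ was arbitrary, this gives $2\mathrm{qlinN}\subseteq 2\tilde{\mathrm{A}}$, completing the equivalence.

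I expect the main point requiring care (rather than a deep obstacle) to be the parameter bookkeeping in the last step together with the reliance on simplicity. Statement (2) of Theorem \ref{nonunif-general-3DSTCON} is phrased for $s$-state inputs and returns an $O(s^\varepsilon)$ width, whereas the witnesses of a $2\mathrm{qlinN}$ family have quasi-linear size $s_n=\Theta(n\log^k n)$; one must verify that feeding $s_n$ into that statement and then reconverting the resulting $O(s_n^\varepsilon)$ width back into the $O(n^{\varepsilon'})$ form required by $2\mathrm{A}_{narrow(n^{\varepsilon'})}$ keeps the exponent strictly below $1$, and likewise that the polynomial state blow-up stays $n^{O(1)}$ — both consequences of $\varepsilon<1$ and the saturation of $\FF$. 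The conceptual linchpin is that the witnesses must be taken to be simple: Statement (2), and the underlying reduction of Proposition \ref{2nfa-to-subgraphs}, which produces a graph of size $O(n)$ independent of $|x|$, applies only to simple 2nfa's, whereas a general sweeping-free 2nfa would yield a configuration graph whose size also grows with $|x|$, destroying the $n$-only narrowness bound on which membership in $2\tilde{\mathrm{A}}$ depends.
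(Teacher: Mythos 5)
Your proposal is correct and follows essentially the same route as the paper's own proof: the forward direction via Lemma \ref{3DSTCON-by-2linN}, and the backward direction by passing through Corollary \ref{relation-3DSTCON} to $(3\dstcon,m_{ver})\in\psublin/\poly$, invoking the equivalence (1)$\Leftrightarrow$(2) of Theorem \ref{nonunif-general-3DSTCON}, and then applying the resulting conversion to the quasi-linear-size simple 2nfa witnesses of an arbitrary $2\mathrm{qlinN}$ family. Your parameter bookkeeping (choosing $\varepsilon'\in(\varepsilon,1)$ to absorb the $\log^{k}n$ factors in both the state and narrowness bounds) matches the paper's choice of $\varepsilon'$ with $(cn\log^k n)^{\varepsilon}\leq n^{\varepsilon'}$ and $(cn\log^k n)^{s}\leq n^{2s}$, and your reliance on the simplicity of the $2\mathrm{qlinN}$ witnesses is exactly what the paper's proof also assumes.
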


\begin{proof}
As before, we abbreviate $\bigcup_{\varepsilon\in[0,1)} 2\mathrm{A}_{narrow(n^\varepsilon)}$ as $2\tilde{\mathrm{A}}$. Since  $3\mathcal{DSTCON}$ is induced from $(3\dstcon,m_{ver})$, we can identify $3\dstcon$ with $3\mathcal{DSTCON}$ in a natural way.

[Only If -- part] Assume that $2\mathrm{qlinN} \subseteq 2\tilde{\mathrm{A}}$. Since $3\mathcal{DSTCON} \in 2\mathrm{qlinN}$ by Lemma \ref{3DSTCON-by-2linN}, it immediately follows that   $3\mathcal{DSTCON} \in 2\tilde{\mathrm{A}}$.

[If -- part] Assume that $3\mathcal{DSTCON} \in 2\tilde{\mathrm{A}}$.
Corollary \ref{relation-3DSTCON} implies that $(3\dstcon,m_{ver})\in \psublin/\poly$; more specifically, $(3\dstcon,m_{ver})\in \ptimespace(m_{ver}(x)^{\varepsilon})/\poly$ for a certain constant $\varepsilon\in[0,1)$. By setting $\ell(n)=n^{\varepsilon}$, Theorem \ref{nonunif-general-3DSTCON}(2) provides a constant $s>0$ such that (*) for any $n$-state $c$-branching simple 2nfa, we can find an equivalent
$O(n^{\varepsilon})$-narrow 2afa of at most $O(n^{s})$ states.

Let $\LL=\{(L_n,\overline{L}_n)\}_{n\in\nat}$ be any family of promise decision problems in $2\mathrm{qlinN}$. Hereafter, we intend to prove that $\LL\in2\tilde{\mathrm{A}}$. Since $\LL\in2\mathrm{qlinN}$, we take two constants $c,k>0$ and a family $\{M_n\}_{n\in\nat}$ of $c$-branching simple 2nfa's $M_n$ with at most $cn\log^k{n}$ states solving $(L_n,\overline{L}_n)$ for all inputs.
By the above statement (*), for each $M_n$, there exists an equivalent $O((cn\log^k{n})^{\varepsilon})$-narrow 2afa $N_n$ of at most $(cn\log^k{n})^s+s$  states.  Note that there is a constant $\varepsilon'\in[0,1)$ satisfying both
$(cn\log^k{n})^s\leq n^{2s}$ and $(cn\log^k{n})^{\varepsilon}\leq n^{\varepsilon'}$ for all but finitely many numbers $n\in\nat$. This implies that $N_n$ is $O(n^{\varepsilon'})$-narrow and has $O(n^{2s})$ states.
Since $\{M_n\}_{n\in\nat}$ computes $\LL$, $\{N_{n}\}_{n\in\nat}$ computes $\LL$ as well. Therefore, we conclude that $\LL\in 2\tilde{\mathrm{A}}$, as requested.
\end{proof}

In the end, we will demonstrate that Statements (1) and (3) of Theorem \ref{3DSTCON-char-nonunif} are logically equivalent by combining Corollary \ref{relation-3DSTCON} and Proposition \ref{switch-2linN-3DSTCON}.

\begin{proofof}{Theorem \ref{3DSTCON-char-nonunif}(3)}
We write $2\tilde{\mathrm{A}}$ for $\bigcup_{\varepsilon\in[0,1)} 2\mathrm{A}_{narrow(n^\varepsilon)}$ as before. Corollary \ref{relation-3DSTCON} implies that nonuniform LSH fails iff $3\mathcal{DSTCON}\in 2\tilde{\mathrm{A}}$.
Proposition \ref{switch-2linN-3DSTCON} shows that
$2\mathrm{qlinN} \subseteq 2\tilde{\mathrm{A}}$ iff $3\mathcal{DSTCON} \in 2\tilde{\mathrm{A}}$. Therefore, the logical equivalence between Statements (1) and (3) of Theorem \ref{3DSTCON-char-nonunif} follows instantly.
\end{proofof}

\section{Case of Unary Finite Automata}\label{sec:unary-case}

We intend to shift our attention to families of unary promise decision problems and unary finite automata. Our goal is to prove Theorem \ref{uniform-unary-char} by exploring the expressive power of \emph{unarity} for certain promise decision problems.

Since the proof of Theorem \ref{uniform-unary-char} needs a unary version of $3\mathcal{DSTCON}$, we need to seek an appropriate \emph{unary encoding} of a degree-bounded subgraph of each complete directed graph $K_n$. For this purpose,  we use the following unary encoding scheme.
Fix $n\in\nat^{+}$. For any pair  $i,j\in[0,n-1]_{\integer}$ with $(i,j)\neq(0,0)$, let $p_{(i,j)}$ denote the $(i\cdot n+ j)$th prime number. Note that it is rather easy to decode $p_{(i,j)}$ to obtain a unique pair $(i,j)$. Given a degree-$3$ subgraph $G=(V,E)$ of $K_n$ with $V=\{0,1,2,\ldots,n-1\}$, the \emph{unary encoding} $\pair{G}_{unary}$ of $G$ is a string of the form $1^{e}$ with $e = \prod_{l=1}^{k}p_{(i_l,j_l)}$, where $E=\{(i_1,j_1),(i_2,j_2),\ldots,(i_k,j_k)\}\subseteq V^2$ with $k=|E|$.  Since $G$'s vertex has degree at most $3$, it follows that $k\leq 3n$.  It is also known that the $r$th prime number is at most $cr\log{r}$ for a certain constant $c>0$.  Since $i\cdot n+j\leq n^2$ for all pairs $i,j\in V$, it then follows that $p_{(i,j)}\leq cn^2\log{n^2} = 2cn^2\log{n}$. We thus conclude that $|\pair{G}_{unary}| = e \leq (c'n^2\log{n})^{3n}$, where $c'=2c$. Let $u3\mathcal{DSTCON} = \{(u3\dstcon_n,\overline{u3\dstcon}_n)\}_{n\in\nat}$ be defined as follows.

\s
Unary 3DSTCON of Size $n$ $(u3\dstcon_n,\overline{u3\dstcon}_n)$:
\renewcommand{\labelitemi}{$\circ$}
\begin{itemize}\vs{-2}
  \setlength{\topsep}{-2mm}%
  \setlength{\itemsep}{1mm}%
  \setlength{\parskip}{0cm}%

\item Instance: an encoding $\pair{G}_{unary}$ of a subgraph $G$ of $K_n$ with vertices of degree at most $3$.

\item Output: YES if there is a path from vertex $0$ to vertex $n-1$;  NO otherwise.
\end{itemize}

For the proof of Theorem \ref{uniform-unary-char}, we first need a unary version of Proposition \ref{construct-3DSTCON}. Recall that any output tape of finite automata is a semi-infinite write-only tape.

\begin{proposition}\label{unary-2nfa-DSTCON}
There exists a log-space computable function $g$ that, on each input $1^n$, produces an encoding $\pair{N_n}$ of $3$-branching simple unary 2nfa $N_n$ with $O(n^3\log{n})$ states that solves $(u3\dstcon_n,\overline{u3\dstcon}_n)$ in $O(n^2|x|)$ time and rejects all inputs outside of $\Sigma_n = u3\dstcon_n \cup \overline{u3\dstcon}_n$ in $O(n^2|x|)$ time, where $x$ indicates a symbolic input to $N_n$.
\end{proposition}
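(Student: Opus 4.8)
The plan is to adapt the round-by-round construction of Proposition~\ref{construct-3DSTCON} to the unary setting, replacing the step ``read the $(i,j)$-entry of the adjacency list'' by an arithmetic test on the length $e=|x|$ of the input $1^e$. The central primitive is a divisibility test: the edge $(i,j)$ is present in the graph $G$ encoded by $1^e$ exactly when $p_{(i,j)}$ divides $e$. Since $N_n$ has a circular tape and only sweeps it, I would implement this test by maintaining, during a single sweep from $\cent$ to $\dollar$, a counter that records the number of scanned $1$'s modulo $p_{(i,j)}$; the residue held when the head reaches $\dollar$ equals $e\bmod p_{(i,j)}$, so the edge is present iff this residue is $0$. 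Because $p_{(i,j)}=O(n^2\log n)$, one such counter costs $O(n^2\log n)$ states, and all nondeterministic branching can be confined to the cell holding $\dollar$, so the resulting automaton is a $3$-branching simple unary 2nfa.

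First I would build the path-tracing layer. The machine keeps the current vertex $v$ (initially $0$) across rounds, and at $\dollar$ makes a $3$-way nondeterministic choice selecting which of the at most three outgoing edges of $v$ to follow; the chosen target is then located by sweeping and applying the divisibility primitive to the candidate primes $p_{(v,0)},p_{(v,1)},\dots$ in turn, counting detected edges until the selected one is reached. Upon success the current vertex is updated; the machine accepts as soon as it reaches vertex $n-1$ and rejects a branch once more than $n$ vertices have been visited. Exactly as in Proposition~\ref{construct-3DSTCON}, correctness follows because accepting paths of $N_n$ correspond to $0$-to-$(n-1)$ paths in $G$. Each round is a single sweep costing $O(|x|)$ time; there are at most $n$ path steps, each requiring at most $n$ candidate tests, so the total running time is $O(n^2|x|)$, as required.

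Next I would supply the unary analogue of Lemma~\ref{check-graph-code} to reject inputs outside $\Sigma_n=u3\dstcon_n\cup\overline{u3\dstcon}_n$. Here the machine uses the same divisibility primitive to determine the edge set $\{(i,j):p_{(i,j)}\mid e\}$, verifies that every vertex has degree at most $3$, and screens out lengths $e$ whose factorization does not correspond to a genuine encoding $\pair{G}_{unary}$ (e.g.\ those carrying a spurious prime factor or a repeated one). This checker is run as a preprocessing phase before the path tracer, within the same $O(n^2|x|)$ time budget. Finally, $g$ is log-space computable: each prime $p_{(i,j)}=O(n^2\log n)$ fits in $O(\log n)$ bits, so the primes and the corresponding mod-$p_{(i,j)}$ transition tables can be generated by log-space trial-division primality testing, and assembling $\pair{N_n}$ from these pieces uses only $O(\log n)$ work space.

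The hard part will be the state count. Storing a separate modular counter for each of the $\Theta(n^2)$ primes would cost $\sum_{r<n^2}p_r=\Theta(n^4\log n)$ states, which overshoots the target. To reach $O(n^3\log n)$ I would arrange that the machine never holds two independent $\Theta(n)$-sized quantities together with a counter: at any instant it stores only the current vertex $v$ (an $O(n)$-valued register that persists through the computation) and a single active counter modulo the prime currently under test (an $O(n^2\log n)$-valued register), so that the dominant contribution is the product $O(n)\cdot O(n^2\log n)=O(n^3\log n)$. Making this bookkeeping precise---in particular, advancing through the candidate primes $p_{(v,k)}$ and running the degree check without ever multiplying in a second vertex index---is the main technical obstacle, together with showing that the invalid-input screening can be carried out within the same state and time bounds.
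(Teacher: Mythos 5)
Your construction follows the paper's proof exactly in its outer structure: a round-by-round nondeterministic path tracer on a sweeping circular tape, with all branching confined to $\dollar$, edges detected by sectioning $1^{|x|}$ into blocks of $p_{(i,j)}$ ones, acceptance at vertex $n-1$, rejection after $n$ rounds, and the same $O(n^2|x|)$ time analysis. The genuine gap is exactly where you place it, and your sketched remedy does not close it. A finite automaton's transition function is fixed, so a modular counter must ``know'' in its state at which value it wraps around: to count modulo $p_{(v,j)}$, the candidate index $j$ (equivalently, the prime under test) must be encoded jointly with the counter value, since the same counter value needs different successors for different moduli; you also need $j$ anyway, both to advance to the next candidate and to know which vertex becomes current upon success. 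Once $j$ is folded into the state, you are back to states of the form $(v,j,k,l)$ with $l<p_{(v,j)}$, whose total number is $3\sum_{i,j}p_{(i,j)}=\Theta(n^4\log n)$ --- precisely the figure you computed and hoped to avoid. So your proposal, as written, yields an $O(n^4\log n)$-state machine, not the claimed $O(n^3\log n)$-state one. For comparison, the paper's own proof commits to exactly this joint state set $\{(i,j,k,l)\mid 0\le j<n,\ 1\le l\le p_{(i,j)},\ k\in[3]\}$ and asserts a per-row cost of $O(n+n^2\log n)$; that assertion is valid only for rows with $i=O(1)$, since for $i=\Theta(n)$ the row sum $\sum_{j}p_{(i,j)}$ is itself $\Theta(n^3\log n)$. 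In other words, the obstacle you isolated is real, and discharging the proposition's stated bound requires a new idea, not a rearrangement of the bookkeeping.

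The second gap is your screening phase for inputs outside $\Sigma_n$ (a clause on which the paper's proof is silent). Detecting a repeated factor $p_{(i,j)}^2\mid e$ needs a counter modulo $p_{(i,j)}^2=\Theta(n^4\log^2 n)$, already over budget, and detecting a \emph{spurious} prime factor is worse: if $Q$ is a prime outside the allowed set, then $1^{e}$ and $1^{eQ}$ are divisible by exactly the same allowed primes, so no amount of divisibility testing against the $p_{(i,j)}$ separates a valid encoding from that invalid multiple. More structurally, a simple unary 2nfa reads deterministically within each sweep, so on inputs longer than its state count its acceptance depends only on the input length modulo the least common multiple of its sweep-cycle lengths; since valid positive encodings of length $2^{\Theta(n\log n)}$ exist and must be accepted, infinitely many invalid lengths in the same residue class are accepted as well. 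So the rejection clause cannot be realized by factorization screening inside this machine model, and any proof of it would have to proceed quite differently.
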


\begin{proof}
Given an index $n\in\nat^{+}$, we plan to construct an $O(n^3\log{n})$-state  3-branching simple unary 2nfa $M_n$ that solves $(u3\dstcon_n,\overline{u3\dstcon}_n)$ in $O(n^2|x|)$ time on all inputs $x$.
Let $x = \pair{G}_{unary}$ be any unary input given to $(u3\dstcon_n,\overline{u3\dstcon}_n)$, where $G=(V,E)$ with $V=[0,n-1]_{\integer}$ for $n=m_{ver}(x)$. On the input $x$, the desired $M_n$ works as follows.
\begin{quote}
Starting with the initial state $q_0$ scanning $\cent$, we always move a tape head rightward along a circular input tape. Initially, we choose the vertex $0$. Inductively, we follow an edge to one of the adjacent vertices as follows. Assume that we are currently at vertex $i$ at scanning $\dollar$. Choose a number $k\in\{1,2,3\}$ nondeterministically and step forward to $\cent$ since the tape is circular. By sweeping the tape from $\cent$ to $\dollar$, we search for the $k$th smallest index $j\in[0,n-1]_{\integer}$ for which $|x|$ is divisible by $p_{(i,j)}$, by incrementing $j$ from $0$ to $n-1$ and also checking whether or not $|x|$ is divisible by $p_{(i,j)}$. The divisibility can be checked by repeatedly sectioning $1^{|x|}$ into a block consisting of $p_{(i,j)}$ 1s. Continue searching more such $j$'s until we locate the $k$th such $j$. This process requires at most $n$ sweeping movements along the tape. If the vertex $n-1$ is reached, then we accept the input. Otherwise, when $n$ vertices are chosen without reaching the vertex $n-1$, we reject the input.
\end{quote}
For each fixed index $i$, we use elements in $\{i\}\cup \{(i,j,k,l)\mid 0\leq j< n,1\leq l\leq p_{(i,j)}, k\in[3]\}$ as inner states to implement the above procedure. Since $i$ ranges from $0$ to $n-1$, overall, $M_n$ uses at most $n\cdot O(n+ n^2\log{n}) = O(n^3\log{n})$ states. The running time of $M_n$ is $n\cdot O(n|x|) = O(n^2|x|)$.
\end{proof}

Since the length of a unary string $\pair{G_x}_{unary}$ in general is too large to handle within polynomially many steps in $m_{ver}(x)$, we need to consider a scaled-down version of $\pair{G_x}_{unary}$. Assuming that $G_x=(V,E)$ with $V=[0,m-1]_{\integer}$, and $E=\{(i_1,j_1),(i_2,j_2),\ldots,(i_k,j_k)\}$, we define a \emph{prime encoding} of $G_x$ as $bin_{s}(p_{(i_1,j_1)})\# bin_{s}(p_{(i_2,j_2)})\#\cdots \# bin_{s}(p_{(i_k,j_k)})$, where $s=\ceilings{\log{p_{(n-1,n-1)}}}$ (with $s\leq 4\log{n}$ for any sufficiently large $n$) and $k\leq 3n$.
We do not dare to fix the order of those prime numbers. Notationally, we write $\pair{G_x}_{prime}$ for such a prime encoding of $G_x$.
Note that $\sum_{t=1}^{k}p_{(i_t,j_t)}\leq (2cn^2\log{n})\cdot k \leq 6cn^3\log{n}$ for a certain constant $c>0$. It thus follows that $|\pair{G_x}_{prime}|\leq s\cdot k \leq 3n\cdot 4\log{n} = O(n\log{n})$.

Let $3\mathcal{DSTCON}_{prime}$ denote the promise problem obtained from $u3\mathcal{DSTCON}$ by replacing $\pair{G}_{unary}$ with $\pair{G}_{prime}$. The following lemma discusses how to transform $\pair{G}$ to $\pair{G}_{prime}$ for any degree-$3$ graph $G$.

\begin{lemma}\label{graph-binary-to-unary}
There exists a function $h$ that takes any input of the form $\pair{G}$ for a degree-$3$ subgraph $G$ of $K_n$ for a certain index $n\in\nat$ and outputs $\pair{G}_{prime}$. This function $h$ is computed by a certain $\dl$-uniform family of $n^{O(1)}$-state simple 2dfa's running in $n^{O(1)}\cdot O(|x|)$ time on all inputs $x$.
\end{lemma}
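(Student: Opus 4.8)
The plan is to exhibit a log-space algorithm computing $h$ and then invoke Lemma~\ref{log-space-computable} to realize it by a $\dl$-uniform family of simple 2dfa's. Reading the input $\pair{G}$, I would first parse the adjacency list: by scanning the $\#^2$-separated blocks $C_1,\ldots,C_n$ I recover $n$ (the number of blocks) and, within each block $C_i$, the at most three out-neighbors $j_{i,1},j_{i,2},j_{i,3}$ of vertex $i$, discarding every $\bot$-entry. Since each index is written with $\order{\log n}$ bits, this parsing uses $\order{\log n}$ work space. For each surviving edge $(i,j)$ I would compute its rank $r=i\cdot n+j$ by one log-space multiplication and addition; as $i,j\in[0,n-1]_{\integer}$ we have $r\le n^2$, so $r$ occupies $\order{\log n}$ bits.

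The heart of the argument, and the step I expect to be the main obstacle, is producing the $r$th prime $p_{(i,j)}$ within logarithmic space. Here I would exploit the stated bound $p_{(i,j)}\le 2cn^2\log n$, so that every prime ever considered is polynomially bounded in $n$ and hence has an $\order{\log n}$-bit representation. To locate the $r$th prime I would enumerate candidate integers $q=2,3,4,\ldots$, maintaining a prime-counting counter (which never exceeds $r\le n^2$, thus $\order{\log n}$ bits); each $q$ is tested for primality by trial division, trying divisors up to $q$ and checking divisibility by repeated subtraction, all within $\order{\log n}$ space because $q=\order{n^2\log n}$. When the counter reaches $r$, the current $q$ equals $p_{(i,j)}$, and I would write $bin_{s}(p_{(i,j)})$ onto the output tape, where $s=\ceilings{\log p_{(n-1,n-1)}}\le 4\log n$ is computed once at the outset as the bit-length of the $(n^2-1)$th prime. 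Consecutive primes are separated by $\#$; since the order is unconstrained, emitting them in the natural order of iteration over edges suffices, and the resulting $\pair{G}_{prime}$ has length $\order{n\log n}$.

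Having shown $h$ to be log-space computable, I would apply Lemma~\ref{log-space-computable} to obtain an $\dl$-uniform family $\{M_n\}_{n\in\nat}$ of simple 2dfa's, each with $n^{O(1)}$ states, computing $h$ on all inputs; re-indexing by the vertex size $n$ rather than by input length is harmless because $|\pair{G}|=\Theta(n\log n)$ makes the two polynomially related, so the state count $|x|^{O(1)}$ is itself $n^{O(1)}$. For the running time, the underlying log-space procedure halts in $n^{O(1)}$ steps (at most $\order{n}$ edges, each requiring a $\mathrm{poly}(n)$-time prime search), and the sweeping simulation of Lemma~\ref{log-space-computable} turns each such step into a constant number of traversals of the circular tape of cost $\order{|x|}$; hence $M_n$ halts within $n^{O(1)}\cdot\order{|x|}$ steps, as required. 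The only place where care is genuinely needed is the prime computation of the second paragraph: everything there reduces to the single observation that the relevant primes, the prime-counting counter, and the trial-division divisor all fit in $\order{\log n}$ bits, so that divisibility testing by repeated subtraction stays within the logarithmic space budget.
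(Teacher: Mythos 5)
Your proof is correct, but it takes a different route from the paper's. The paper constructs the simple 2dfa directly: for each fixed $n$ it builds a machine that sweeps the tape, determines the set $S$ of primes $p_{(i,j)}$ corresponding to the edges it reads, and writes $bin_s(p_{(i_1,j_1)})\#\cdots\#bin_s(p_{(i_k,j_k)})$ on the output tape, with the primes effectively hard-wired into the transition structure; the state bound is argued from $|\pair{G}_{prime}|=O(n\log n)$, and both the prime computation and the $\dl$-uniformity are left implicit. You instead prove that $h$ is a log-space computable \emph{function} (parsing, rank computation $r=i\cdot n+j$, and locating the $r$th prime by trial division, all within $O(\log n)$ space because every quantity involved is polynomially bounded in $n$) and then invoke Lemma \ref{log-space-computable} to compile this into a $\dl$-uniform family of simple 2dfa's --- exactly the strategy the paper itself uses for Lemma \ref{size-decide}. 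Your route is more modular and actually supplies a detail the paper glosses over, namely that the primes can be found within the logarithmic space budget (which is also what the paper's generator DTM would need for $\dl$-uniformity). The one soft spot is your re-indexing remark: Lemma \ref{log-space-computable} produces machines indexed by input length, with $M_{|x|}$ only guaranteed to be correct on inputs of that exact length, whereas the target lemma wants one machine per vertex count $n$ that handles all encodings $\pair{G}$ of subgraphs of $K_n$, whose lengths vary. "Polynomially related" alone does not dispose of this; one needs either to observe that the construction behind Lemma \ref{log-space-computable} for length $\Theta(n\log n)$ in fact works on all shorter inputs, or to build a dispatcher that first counts the input length (an $O(n\log n)$-state counter) and then simulates the appropriate $M_l$, which keeps the state count at $n^{O(1)}$. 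This is a minor gap, comparable to the informality of the paper's own proof, and is easily repaired.
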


\begin{proof}
Fix $n\in\nat$ and let $G=(V,E)$ be any degree-$3$ subgraph of $K_n$ with $V=[0,n-1]_{\integer}$ and $E\subseteq V^2$. Let us recall from Section \ref{sec:automata-3DSTCON} that $\pair{G}$ is of the form $\pair{C_1\#^2 C_2\#^2 \cdots \#^2 C_n}$ with $C_i= binary(i)\# binary^*(j_{i,1}) \# binary^*(j_{i,2})\# binary^*(j_{i,3})$ for certain indices $j_{i,1},j_{i,2},j_{i,3} \in[0,n-1]_{\integer}\cup\{\bot\}$. The desired simple 2dfa that produces $\pair{G}_{prime}$ from $\pair{G}$ performs in the following way.
\begin{quote}
By sweeping an input tape repeatedly from the left to the right, we do the following. Assume that $E=\{(i_1,j_1),(i_2,j_2),\ldots,(i_k,j_k)\}$ with $k=|E|\leq 3n$. We then determine a set $S = \{p_{(i_1,j_1)},p_{(i_2,j_2)},\ldots,p_{(i_k,j_k)}\}$ of prime numbers. Let  $s=\ceilings{\log{p_{(n-1,n-1)}}}$.
Once we obtain the set $S$, we write $bin_{s}(p_{(i_1,j_1)})\# bin_{s}(p_{(i_2,j_2)})\#\cdots \# bin_{s}(p_{(i_k,j_k)})$
onto a write-only output tape to produce $\pair{G}_{prime}$.
\end{quote}
Since $|\pair{G}_{prime}|=O(n\log{n})$, we need $n^{O(1)}$ states to implement the above procedure. Moreover, the procedure clearly takes $n^{O(1)}\cdot O(|x|)$ steps.
\end{proof}

A key ingredient of the proof of Theorem \ref{uniform-unary-char} is the following lemma, which is inspired by the proof of \cite[Lemma 6]{KP15}. The lemma provides an effective method of constructing 2afa for $3\mathcal{DSTCON}_{prime}$ from simple 2afa's for $u3\mathcal{DSTCON}$.

\begin{lemma}\label{2afa-translate}
Let $s,t$ be polynomials and let $f:\nat\to\nat$ be any strictly increasing function. If $\{M_n\}_{n\in\nat}$ is an $\dl$-uniform family of $s(n)$-state  $f(n)$-narrow simple 2afa's solving $u3\mathcal{DSTCON}$ in $t(n)\cdot O(|x|)$ time, then there is another $\dl$-uniform family $\{P_n\}_{n\in\nat}$ of $O(s(n))$-state $O(f(n))$-narrow 2afa's that solves $3\mathcal{DSTCON}_{prime}$ in $O(t(n))\cdot O(|x|)$ time, where $x$ is a symbolic input.
\end{lemma}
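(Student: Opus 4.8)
The plan is to exploit the \emph{simple} (sweeping, end-branching) structure of each $M_n$ together with the \emph{unarity} of its input $\pair{G}_{unary}=1^{e}$, so that the only information the tape carries is the single number $e=\prod_{t}p_{(i_t,j_t)}$. Since $M_n$ is end-branching, during one left-to-right sweep over the $e$ copies of $1$ its control evolves \emph{deterministically} according to a fixed one-step map $g\colon Q\to Q$ (the transition on reading $1$), and all universal/existential branching occurs only while scanning $\dollar$. Hence the net effect of a full sweep is the \emph{sweep function} $S=g^{e}$, pre- and post-composed with the $\cent$- and $\dollar$-transitions. Because $M_n$ runs in $t(n)\cdot O(|x|)=t(n)\cdot O(e)$ time while each sweep already costs $\Theta(e)$ steps, every computation of $M_n$ on $1^{e}$ comprises only $O(t(n))$ sweeps; equivalently, its computation graph has $O(t(n))$ branching levels, each $\forall$-level of width at most $f(n)$.

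First I would build $P_n$ to simulate $M_n$ \emph{sweep by sweep} while reading only the short string $\pair{G}_{prime}$ of length $O(n\log n)$. At the start of a sweep $P_n$ holds the entry state $q\in Q$ of $M_n$ and must output $S(q)=g^{e}(q)$. The orbit $q,g(q),g^{2}(q),\dots$ reaches a cycle after a tail of length $\mu_q\le s(n)$ and then cycles with period $c_q\le s(n)$; since $e$ is astronomically larger than $s(n)$, we have $g^{e}(q)=g^{\,\mu_q+((e-\mu_q)\bmod c_q)}(q)$, which is determined once $e\bmod c_q$ is known. Thus the entire simulation reduces to the arithmetic subroutine ``given a modulus $c\le s(n)$, compute $e\bmod c$ from $\pair{G}_{prime}$,'' performed by scanning the list of binary primes and accumulating $e\bmod c=\prod_t\bigl(p_{(i_t,j_t)}\bmod c\bigr)\bmod c$, each prime being reduced by Horner's rule on its binary digits. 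The branching at each $\dollar$ is reproduced exactly as in $M_n$, so the computation graph of $P_n$ inherits the $O(f(n))$-narrowness, with the deterministic arithmetic between branchings contributing no width at $\forall$-levels; and since the sweeps number $O(t(n))$ and each is simulated by a bounded number of passes over $\pair{G}_{prime}$, the running time stays $O(t(n))\cdot O(|x|)$.

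The $\dl$-uniformity of $\{P_n\}_{n\in\nat}$ will follow from that of $\{M_n\}_{n\in\nat}$: from $\pair{M_n}$ a log-space transducer reads off $g$, computes the tails $\mu_q$ and cycle lengths $c_q$ by tracing the (length $\le s(n)$) orbits, and hard-wires the resulting orbit data together with the modular-arithmetic transitions into $\pair{P_n}$. The compatibility of the two graph encodings needed to feed $\pair{G}_{prime}$ to $P_n$ is already furnished by Lemma~\ref{graph-binary-to-unary}, which converts $\pair{G}$ into $\pair{G}_{prime}$ with an $n^{O(1)}$-state simple 2dfa.

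The main obstacle I anticipate is \emph{holding the state count to $O(s(n))$} rather than to a larger polynomial. The difficulty is intrinsic: to evaluate one sweep, $P_n$ must simultaneously remember the entry state $q$ (to know \emph{which} modulus $c_q$ applies and where to resume) and carry a residue modulo $c_q$, and a schoolbook modular multiplication needs in addition the running multiplicand; tracking these together naively forces $\Theta(s(n)^{2})$ or more states. There is moreover a genuine tension with the time bound, since the obvious state-frugal remedy---recomputing partial products by re-reading primes with the two-way head (recall $P_n$ need not be simple)---spends extra passes and threatens the $O(t(n))\cdot O(|x|)$ budget. Reconciling the two is the delicate core of the argument, and is exactly where I would follow the device of \cite[Lemma~6]{KP15}: decoupling the ``which sweep'' information from the residue so that the relevant state sets \emph{add} rather than \emph{multiply}, realizing part of the modular counter through head position while keeping only an $O(s(n))$-sized residue in the finite control, and amortizing the re-scans across the $O(n\log n)$-length tape so that both the $O(s(n))$-state bound and the narrowness survive. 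Making this bookkeeping fit within a single $O(s(n))$-state control is the step I expect to require the most care.
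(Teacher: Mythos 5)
Your proposal follows essentially the same route as the paper's proof: exploit the end-branching and unarity so that each sweep of $M_n$ is a deterministic orbit with tail-and-cycle structure, determine the exit state of the sweep from $e \bmod c$ (the cycle length) computed out of the prime encoding $\pair{G}_{prime}$, reproduce the $\forall$/$\exists$ branching only at $\dollar$, and argue that narrowness, running time, and $\dl$-uniformity are inherited. The single step you flag as delicate --- fitting the modular-product bookkeeping into an $O(s(n))$-state control --- is precisely the point the paper's own proof passes over without detail (it merely writes ``once $k_3$ is found'' and asserts the $O(s(n))$ bound), so your attempt is, if anything, more explicit than the published argument about where the real work lies.
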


\begin{proof}
An important observation is that, since $M_n$ is sweeping and end-branching, it behaves like a 1dfa while reading $\cent x$ for any input $x$. For ease of description, we partition a computation of $M_n$ into a series of \emph{sweeping rounds} so that, at each sweeping round, $M_n$ sweeps the input tape once from $\cent$ to $\dollar$.
Thus, we can take a number $d\in\nat^{+}$ and a series of $d$ sets $S_1,S_2,\ldots,S_d$ of inner states of $M_n$ such that, for each index $i\in[d]$, if  an input is sufficiently long and $M_n$ starts in a certain inner state, say, $q_1$ in $S_i$ at scanning $\cent$, then $M_n$ takes all inner states of $S_i$ sequentially (possibly with repetitions), and ends in a certain inner state, say, $q_{k_3}$ of $S_i$.
Since inputs are unary strings, $S_i$ is uniquely determined by $M_n$ and is of the form $\{q_1,q_2,\ldots,q_{k_2}\}$ for a certain index $k_2\geq1$.
We can list all inner states of $S_i$ used by $M_n$ as (*) a series  $(q_1,q_2,\ldots,q_{k_1-1}, (q_{k_1},q_{k_1+1},\cdots,q_{k_2})^r, q_{k_2+1},q_{k_2+2},\cdots,q_{k_3})$, where $k_1\geq1$, $k_2\geq k_1-1$, $k_3\geq k_2$, $(q_{k_1},\cdots q_{k_2})^r$ means the $r$ repetitions of the series $(q_{k_1},\cdots q_{k_2})$ and $(q_{k_2+1},\ldots,q_{k_3})$ is an initial segment\footnote{An initial segment of a series $(a_1,a_2,\ldots,a_n)$ is of the form $(a_1,a_2,\ldots,a_k)$ for a certain index $k\in[n]$.} of $(q_{k_1},\ldots,q_{k_2})$.
Note that all elements in $\{q_1,q_2, \cdots,q_{k_1}, \cdots,q_{k_2}\}$ must be distinct.

Next, let us define the desired 2afa $P_n$. Let $G$ be any subgraph of $K_n$ of degree at most $3$ and let $x$ denote $\pair{G}_{prime}$. Consider its associated unary encoding $\pair{G}_{unary}$ of the form $1^e$ for a certain number $e\in\nat^{+}$. By the definitions of $\pair{G}_{unary}$ and $\pair{G}_{prime}$, it follows that $e=\prod_{l=1}^{k}p_{(i_l,j_l)}$, provided that $\pair{G}_{prime} = bin_s(p_{(i_1,j_1)})\# bin_s(p_{(i_2,j_2)}) \# \cdots \# bin_s(p_{(i_k,j_k)})$ with $k\in\nat^{+}$ and $s=\ceilings{\log{p_{(n-1,n-1)}}}$. The 2afa $P_n$ works as follows.
\begin{quote}
We start with the same initial state of $M_n$ at scanning $\cent$. On the input $\pair{G}_{unary}$, we simulate all sweeping rounds of $M_n$ one by one in the following manner. At each sweeping round, let $q_1$ be an inner state of $M_n$ at scanning $\cent$. Assume that $M_n$ takes a series $(q_1,q_2,\ldots,q_{k_1-1}, (q_{k_1},q_{k_1+1},\cdots,q_{k_2})^r, q_{k_2+1},q_{k_2+2},\cdots,q_{k_3})$ of inner states, as explained as above.
Instead of moving the tape head along the input tape, we try to determine the value of $q_{k_3}$ as follows. Since $|\cent x\dollar| = (k_1-1) + r(k_2-k_1+1) + (k_3-k_2)$, we obtain $e+2=r(k_2-k_1+1)+(k_1-k_2-1)+k_3$; thus, $k_3= e+2\;(\mathrm{mod}\; k_2-k_1+1)$. Once $k_3$ is found, we can determine the inner state $q_{k_3}$ and then change the current inner state $q_1$ to $q_{k_3}$ by stepping to the right.
\end{quote}
Since $M_n$ is generated from $1^n$ in polynomial time using log space, we can determine $S_1,S_2,\ldots,S_d$ in log space by running $M_n$ on $1$s because we do not need to remember all inner states in each $S_i$. For each index $i\in[d]$, we can determine the aforementioned series (*) for $S_i$.
Thus, we can construct $P_n$ using log space in $n$. This concludes that $\{P_n\}_{n\in\nat}$ is $\dl$-uniform.

Overall, we need only $O(s(n))$ states to carry out the above procedure of $P_n$ in $O(t(n))\cdot O(|x|)$ time since $M_n$ runs in $t(n)\cdot O(|x|)$ time. Note that, when $M_n$ reaches $\dollar$, $P_n$ makes the same $\forall$- or $\exists$-move as $M_n$ does. Therefore, $P_n$ is of $O(f(n))$-narrow because so is $M_n$.
\end{proof}

Finally, we are ready to describe the proof of Theorem \ref{uniform-unary-char}.

\begin{proofof}{Theorem \ref{uniform-unary-char}}
Proposition \ref{2nfa-to-subgraphs} guarantees the existence of a function $g$ that changes $\pair{M}\# x$ for a  $c$-branching simple 2nfa $M$
into $\pair{G_x}$ for an appropriate degree-3 subgraph $G_x$ of $K_{d(n)}$ and an appropriate function $d(n)=O(n)$. Lemma \ref{graph-binary-to-unary} provides us with a function $h$ that transforms $\pair{G}$ to $\pair{G}_{prime}$ for any degree-3 subgraph $G$ of $K_n$.
It is important to note that $h$ and $g$ can be implemented by appropriate $\dl$-uniform families of $n^{O(1)}$-state simple 2dfa's running in $n^{O(1)}\cdot O(|x|)$ time.
By Proposition \ref{unary-2nfa-DSTCON}, we obtain a constant $e>0$ and an $\dl$-uniform family $\{D_n\}_{n\in\nat}$ of $3$-branching simple 2nfa's of at most $en^3\log{n}+e$ states, each $D_n$ of which solves $(u3\dstcon_n,\overline{u3\dstcon}_n)$ in $n^{O(1)}\cdot O(|x|)$ time and rejects  all inputs outside of $\Sigma_n = u3\dstcon_n \cup \overline{u3\dstcon}_n$, where $x$ is a symbolic input.

(1) Assume in particular that, for a certain fixed constant $\varepsilon\in[0,1)$, every $\dl$-uniform family of $3$-branching simple unary 2nfa's with at most $en^3\log{n}+e$ states can be converted into another $\dl$-uniform family of equivalent $n^{O(1)}$-state $O(n^{\varepsilon})$-narrow simple unary 2afa's.
By this assumption and Lemma \ref{2afa-translate}, from $\{D_n\}_{n\in\nat}$, we obtain an $\dl$-uniform family $\{P_n\}_{n\in\nat}$ of $O(n^{\varepsilon})$-narrow simple 2afa's with $n^{O(1)}$ states that solves $3\mathcal{DSTCON}_{prime}$  in $n^{O(1)}\cdot O(|x|)$ time for all inputs $x$.

To lead to the failure of LSH, it suffices to show that Statement (2) of Theorem \ref{3DSTCON-char-uniform} is satisfied for the case of $c=3$. Let $a>0$ be a constant and let $\{M_n\}_{n\in\nat}$ be an arbitrary $\dl$-uniform family of $3$-branching simple 2nfa's with at most $an\log{n}$ states.
Let us consider a 2afa $N_n$ that works as follows.
\begin{quote}
On input $x$, construct $\pair{M_n}\# x$, and apply $g$ to obtain $\pair{G_x}$. Note that $M_n$ accepts $x$ iff $\pair{G_x}\in3\dstcon_n$.
Next, apply $h$ to $\pair{G_x}$ and obtain $\pair{G_x}_{prime}$. Compute $d(n)$ ($=O(n)$). Run $P_{d(n)}$ on $\pair{G_x}_{prime}$.
\end{quote}
Since $P_{d(n)}$ is a 2afa, $N_n$ is also a 2afa.
This new 2afa $N_n$ has $n^{O(1)}$ states and it is also $O(n^{\varepsilon})$-narrow. Since $\{M_n\}_{n\in\nat}$ is $\dl$-uniform, by the definition, the family $\{N_n\}_{n\in\nat}$ is also $\dl$-uniform.
Theorem \ref{3DSTCON-char-uniform} then yields the desired consequence.

(2) Since Statement (1) implies the failure of LSH, it suffices to show that Statement (2) leads to Statement (1). For this purpose, we assume Statement (2) and follow an argument used in proving [3 $\Rightarrow$ 2] of the proof of Theorem \ref{uniform-3DSTCON-general}.

Let $c>0$ be any constant and take a constant $\varepsilon\in[0,1)$ ensured by Statement (2). Let $e>0$ be any constant and let $\{M_n\}_{n\in\nat}$ be any $\dl$-uniform family of $c$-branching simple 2nfa's with at most $en^3\log{n}+e$ states.
By the $\dl$-uniformity, we take a log-space DTM $D$ that produces $\pair{M_n}$ from $1^n$ for every index $n\in\nat$. By Statement (2), there is a log-space computable function $f$ for which, on each encoding of a $c$-branching simple unary 2nfa with at most $en^3\log{n}+e$ states, $f$ outputs an encoding of its equivalent $n^{O(1)}$-state $O(n^{\varepsilon})$-narrow simple unary 2afa. Consider the following procedure: we first run $D$ on input  $1^n$ to generate $\pair{M_n}$ and then apply $f$ to $\pair{M_n}$. We write $N_n$ for the resulted 2afa. It is not difficult to show that $\{N_n\}_{n\in\nat}$ is the desired family of 2afa's.
\end{proofof}

\section{Discussion and Open Problems}

The \emph{linear space hypothesis} (LSH) was initially proven to be a useful working hypothesis in the fields of $\nl$-search and $\nl$-optimization problems \cite{Yam17a,Yam17b}. A further study has been expected to seek more practical applications in other fields. In this work, we have looked for an exact characterization of LSH in automata theory and this result has shed clear light on the essential meaning of the hypothesis from an automata-theoretic  viewpoint. A key to our work is,  as noted in Section \ref{sec:families-languages},  the discovery of a close connection between a parameterized decision problem and a family of promise decision problems. This discovery leads us to Theorems \ref{3DSTCON-char-uniform}--\ref{3DSTCON-char-nonunif}, which has established a close connection between LSH and state complexity of transforming restricted 2nfa's into restricted
2afa's. In the past literature, the state complexity was shown to be useful to characterize a few complexity-theoretical issues; for example, the $\dl=\nl$ problem \cite{BL77,SS78} and the $\nl\subseteq\dl/\poly$ problem  \cite{Kap14,KP15}. Our result has given an additional evidence to support the usefulness of the state complexity of automata. Another important contribution of this work is to have introduced a nonuniform variant of LSH and have demonstrated a nonuniform variant of the aforementioned characterization of LSH in terms of nonuniform state complexity.

There are a number of interesting questions left unsolved in this work. We wish  to list some of these unsolved questions for a future study along the line of LSH and state complexity of finite automata.

\renewcommand{\labelitemi}{$\circ$}
\begin{enumerate}\vs{-2}
  \setlength{\topsep}{-2mm}%
  \setlength{\itemsep}{1mm}%
  \setlength{\parskip}{0cm}%

\item Our ultimate goal is to prove or disprove LSH and its nonuniform variant. It is not immediately clear, nonetheless, that this goal is easier or more difficult to achieve than solving the $\dl=\nl$ problem.

\item Improve Proposition \ref{Barnes-translate} by determining the exact state complexity of transforming an $n$-state simple 2nfa to an equivalent narrow 2afa with help of neither Theorem \ref{uniform-3DSTCON-general} nor the result of Barnes et al.'s \cite{BBRS98}.

\item The statements of Theorems \ref{3DSTCON-char-uniform}--\ref{3DSTCON-char-nonunif} associated with the conversions of two types of finite automata are quite complicated. Provide much simpler characterizations.

\item It is still open whether $2\mathrm{qlinN}$ in Theorem \ref{3DSTCON-char-nonunif}(3)  can be replaced by $2\mathrm{N}$ or even $2\mathrm{N}/\poly$ (see \cite{Kap12} for their definitions). This is somewhat related to the question of whether we can replace $2\mathrm{SAT}_3$ in the definition of LSH by $2\mathrm{SAT}$ \cite{Yam17a}; if we can answer this question positively, then LSH is simply rephrased as $\nl\nsubseteq \psublin$. Determine whether or not the replacement of $\mathrm{2qlinN}$ by $\mathrm{2N}$ or even $\mathrm{2N}/\poly$ is possible.

\item At this moment, we cannot assert that the failure of LSH derives Statements (1)--(2) of Theorem \ref{uniform-unary-char}. We also do not know whether  the simplicity requirement of ``simple unary 2afa'' in the theorem can be replaced by ``unary 2afa''. Settle down these points and establish an exact characterization for unary automata.

\item It is known in \cite{Yam17a,Yam17b} that $\psublin$ is closed under  \emph{sub-linear-space reduction family Turing reductions} (or SLRF-T-reductions). It is rather easy to define a nonuniform version of SLRF-T-reductions. Find a natural nonuniform state complexity class that is closed under SLRF-reductions. For example, is $\mathrm{2N}$ or $\mathrm{2N}/\poly$ closed under such reductions?
\end{enumerate}


\let\oldbibliography\thebibliography
\renewcommand{\thebibliography}[1]{%
  \oldbibliography{#1}%
  \setlength{\itemsep}{0pt}%
}
\bibliographystyle{plain}

\end{document}